\documentclass[11pt]{article}
\usepackage[margin=1.1in]{geometry}
\usepackage{amsmath,amssymb,amsthm}
\usepackage{tikz}
\usepackage{graphicx}
\usepackage{url}
\usepackage[hidelinks]{hyperref}
\hypersetup{
  pdftitle={A minimum witness for the 3/2 configuration-linear-program gap in two-weight graph balancing, unique at its size},
  pdfauthor={Adam Y. Shavit},
  pdfsubject={Configuration-linear-program integrality gaps; graph balancing; restricted assignment},
  pdfkeywords={configuration linear program, integrality gap, graph balancing,
               restricted assignment, minimum witness, exhaustive enumeration}
}

\theoremstyle{plain}
\newtheorem{theorem}{Theorem}
\newtheorem{lemma}[theorem]{Lemma}
\newtheorem{definition}[theorem]{Definition}
\newtheorem{proposition}[theorem]{Proposition}
\newtheorem{corollary}[theorem]{Corollary}
\newtheorem{measurement}{Measurement}

\newcommand{\OPT}{\mathrm{OPT}}
\newcommand{\OPTLP}{\mathrm{OPT}_{\mathrm{LP}}}
\newcommand{\Istar}{I^{*}}
\newcommand{\gap}{\mathrm{gap}}

\title{A minimum witness for the $3/2$ configuration-linear-program\\
gap in two-weight graph balancing, unique at its size}

\author{Adam Y. Shavit\\
\small Hunter College and the Graduate Center, CUNY\\
\small \texttt{as1127@hunter.cuny.edu} \quad ORCID 0009-0008-1235-0995}

\date{\today}

\begin{document}
\maketitle

\begin{abstract}
In restricted assignment (makespan minimization where each job has one
size and a set of allowed machines), the configuration linear program is
the tightest studied relaxation. Its integrality gap is open in general.
On two-weight graph balancing, where jobs are allowed on at most two
machines and sizes come from two values, the value is known. Both bounds
are due to Jansen, Land, and Maack (2016). Their Table~1 instance
attains $3/2$, and their Corollary~11 bound of $2 - s/b$ for sizes
$s < b$ matches it at $\{1,2\}$. We ask how small such an instance, a
\emph{witness}, can be.

We give a six-job witness $\Istar$: the complete graph on four machines,
unit jobs on a Hamiltonian cycle, weight-2 jobs on the complementary
perfect matching, integral optimum 3 against relaxation value 2. That is
one job fewer than the smallest previously in print for this class, and
we prove it minimum and unique at its size. No instance of the class
with at most five jobs reaches gap $3/2$, on any number of machines. At
six jobs, again on any number of machines, $\Istar$ is the only witness,
up to relabeling machines and adding machines no job can use.

At seven jobs uniqueness fails: thirteen witnesses, classified. One is
the Jansen--Land--Maack instance. At eight jobs there are 154; at nine,
1,662. Three machines never suffice: four are necessary for the gap.
Comparable results are in print for the same relaxation in
one-dimensional cutting stock, where the extremal non-round-up instances
have been enumerated and classified for small demand; the Discussion
sets out the relation.

Our witnesses share one relaxation value: 2. Recognizing witnesses at
that value is coNP-complete, so no min--max characterization exists
unless $\mathrm{NP} = \mathrm{coNP}$. Every feasibility decision behind
the exhaustive claims was made twice, in floating point and in exact
rational arithmetic, with full agreement. The pipeline must rediscover
$\Istar$ before we accept its negatives.
\end{abstract}

\section{Introduction}

Lenstra, Shmoys, and Tardos \cite{LST90} gave the factor-2 algorithm for
makespan minimization on unrelated machines, together with the matching
$3/2$ NP-hardness. Neither endpoint of $[3/2, 2]$ has essentially moved in thirty-six
years. The sharpest relaxation available is the \emph{configuration linear
program}, which places fractional weight on per-machine
\emph{configurations}---sets of jobs fitting within a target makespan---
rather than on individual job--machine pairs. Its integrality gap, the
worst-case ratio of the true optimum to the relaxation's bound, limits every
algorithm that rounds it: no such algorithm can promise more than the gap
allows.

For restricted assignment the gap is known only to lie in $[3/2, 11/6]$
---the lower bound realized by \cite[Table~1]{JLM16}, the upper bound
\cite[Theorem~1.1]{JR17}, sharpening \cite{Svensson11}'s $33/17$---and
closing it is a named open problem. On some
subclasses the value is already settled, and there a different question
becomes available: not what the gap is, but how small an instance must be to
realize it. We ask it for two-weight graph balancing, where the value has
been $3/2$ since 2016 (Section~\ref{sec:known}).

\paragraph{Contributions.}
\begin{enumerate}
\item \textbf{A smaller witness.} For two-weight graph balancing---every job
allowed on at most two machines, sizes in $\{1,2\}$---we exhibit $\Istar$,
an explicit six-job instance with gap exactly $3/2$
(Theorem~\ref{thm:istar}). The smallest witness \emph{of this class}
previously in print, Table~1 of \cite{JLM16}, has seven jobs; the
Discussion names a five-job near miss that lies outside the class, in
unrelated graph balancing.
\item \textbf{Minimality and uniqueness.} No instance of the class with at
most five jobs attains gap $3/2$, on any number of machines
(Theorem~\ref{thm:min}); at six jobs, again on any number of machines,
$\Istar$ is the unique witness up to machine relabeling and machines no
job can use (Theorem~\ref{thm:unique}). At seven jobs we classify all
witnesses: exactly thirteen (Theorem~\ref{thm:seven}), among them the
instance of \cite{JLM16} and eight connected witnesses not found in the
literature survey described in Section~\ref{sec:disc}; at eight jobs we
count them: exactly 154
(Theorem~\ref{thm:eight}), and at nine exactly 1{,}662
(Theorem~\ref{thm:nine}), of which 978 are connected. Three machines never suffice, at any size
(Theorem~\ref{prop:mach}), so $\Istar$ is minimum in machines as well as in
jobs. Our literature survey found no earlier minimality, uniqueness, or
classification result for gap instances of the configuration linear program
of a scheduling problem. For the same
relaxation in one-dimensional cutting stock, \cite{KRSK15} prove by
exhaustive enumeration over equivalence classes that every instance of
demand at most nine has the integer round-up property, exhibit classes
failing it at ten, and classify the extremal instances up to demand
eleven. Section~\ref{sec:disc} sets out what separates the two questions.
This, not the gap value, is
what the note contributes: the value for this class is already determined
by \cite{JLM16} (Section~\ref{sec:known}).
\item \textbf{A verification standard.} Every feasibility decision behind
the exhaustive claims was made twice, once in floating point and once in
exact rational arithmetic, and the enumeration pipeline was validated end
to end by requiring it to rediscover $\Istar$.
\end{enumerate}

\section{Preliminaries}

An instance is a job set $J$ with sizes $p_j$ and allowed-machine sets
$M(j)$. \textbf{All sizes are positive integers throughout}; consequently
every configuration load is an integer, and the least feasible threshold is
attained at an integer. $\OPT(I)$ is the minimum over schedules respecting
$M$ of the maximum machine load. For a threshold $T$, a configuration for
machine $i$ is a set $C$ of jobs each allowing $i$ with total size at most
$T$; the configuration linear program asks for weights $y_{i,C} \ge 0$ with
\[
\sum_{C} y_{i,C} = 1 \ \ \text{(every machine)}, \qquad
\sum_{(i,C)\,:\,j \in C} y_{i,C} = 1 \ \ \text{(every job)},
\]
and $y_{i,C} = 0$ unless $\mathrm{size}(C) \le T$ on $i$. Write $\OPTLP(I)$
for the least feasible $T$, and $\mathrm{gap}(I) = \OPT(I)/\OPTLP(I)$. An
instance with $\mathrm{gap}(I) \ge 3/2$ we call a \emph{witness}: it
certifies that the integrality gap of the class containing it is at least
$3/2$.

When $|M(j)| \le 2$ for every job the instance is a \emph{graph balancing}
instance \cite{EKS14}: machines are vertices, jobs are edges, and a schedule
orients each edge toward the machine that serves it. Both endpoints of an
edge carry the same size. We call the class with $|M(j)| \le 2$ and sizes
in $\{1,2\}$ \emph{two-weight graph balancing}.

The graph is a multigraph, and the multiplicity matters. An instance is a
job set, not a simple graph: several distinct jobs may share the same pair
of allowed machines, and $|M(j)| = 1$ is permitted, giving a job pinned to
one machine. Both features occur among the witnesses classified below---two
of the three seven-job witnesses on four machines carry a pair of machines
three times over---so a reader who pictures a simple graph will misread the
exponents in Table~\ref{tab:seven}.

This is a strictly smaller model than the \emph{unrelated} graph balancing
of Verschae and Wiese \cite{VW14}, where an edge may carry two different
sizes, one per orientation. Their explicit construction for the
configuration linear program has gap $2 - 1/k$ with supremum 2 and lies in that larger
model (Figure~\ref{fig:landscape}). It therefore does not bear on the
class studied here.

\section{A six-job witness}

\begin{figure}[h]
\centering
\begin{tikzpicture}[scale=1.5]
  \node[circle,draw,inner sep=1.5pt] (v1) at (0,1) {$v_1$};
  \node[circle,draw,inner sep=1.5pt] (v2) at (1,1) {$v_2$};
  \node[circle,draw,inner sep=1.5pt] (v3) at (1,0) {$v_3$};
  \node[circle,draw,inner sep=1.5pt] (v4) at (0,0) {$v_4$};
  \draw[thick] (v1) -- node[above,font=\small] {1} (v2);
  \draw[thick] (v2) -- node[right,font=\small] {1} (v3);
  \draw[thick] (v3) -- node[below,font=\small] {1} (v4);
  \draw[thick] (v4) -- node[left,font=\small]  {1} (v1);
  \draw[very thick,dashed] (v1) -- node[pos=0.30,left,font=\small]  {2} (v3);
  \draw[very thick,dashed] (v2) -- node[pos=0.30,right,font=\small] {2} (v4);
\end{tikzpicture}
\caption{The witness $\Istar$: six jobs on four machines, integral
optimum 3 against relaxation value 2, so gap $3/2$. Machines are vertices;
jobs are edges labelled by size. Solid edges are the unit Hamiltonian
cycle; dashed edges are the weight-2 perfect matching.}
\label{fig:istar}
\end{figure}

\begin{theorem}\label{thm:istar}
Let $\Istar$ be the graph balancing instance on machines $v_1,\dots,v_4$
with unit jobs on the Hamiltonian cycle $v_1v_2v_3v_4v_1$ and weight-2 jobs
on the matching $\{v_1v_3, v_2v_4\}$ (Figure~\ref{fig:istar}). Then
$\OPT(\Istar) = 3$ and $\OPTLP(\Istar) = 2$, so $\mathrm{gap}(\Istar) = 3/2$.
\end{theorem}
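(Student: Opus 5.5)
The plan is to prove the two equalities separately. For each we give a matching upper and lower bound: an explicit schedule or fractional solution for the upper bound, and a counting or structural argument for the lower bound. Since all sizes are integers, $\OPT$ and $\OPTLP$ are integers, so it suffices to show $\OPT \le 3$, $\OPT > 2$, $\OPTLP \le 2$ and $\OPTLP > 1$.

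\emph{The relaxation.} Any feasible $T$ must satisfy $T \ge 2$, because a weight-2 job must lie in some configuration of size at most $T$; hence $\OPTLP(\Istar) \ge 2$. For the upper bound, we exhibit a feasible solution at $T=2$. For each machine $v_i$, put weight $1/2$ on the configuration consisting of its incident weight-2 job. Put weight $1/2$ on the configuration consisting of its two incident unit cycle jobs, which has size exactly $2$. Each machine's weights sum to $1$. Each weight-2 job $v_iv_{i+2}$ is covered $1/2+1/2=1$, once from each endpoint. Each unit job $v_iv_{i+1}$ is covered $1/2$ from $v_i$'s pair configuration and $1/2$ from $v_{i+1}$'s, again totalling $1$. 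So $\OPTLP(\Istar)=2$.

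\emph{The integral optimum.} For the upper bound, orient the cycle $v_1\to v_2\to v_3\to v_4\to v_1$, so that each machine receives exactly one unit job. Send $v_1v_3$ to $v_1$ and $v_2v_4$ to $v_2$; the loads are $3,3,1,1$. The main step is $\OPT > 2$. The total size is $4\cdot 1 + 2\cdot 2 = 8$ on four machines, so a makespan-2 schedule must give every machine load exactly $2$. A machine receiving a weight-2 job can then receive nothing else. Let $a\in\{v_1,v_3\}$ and $b\in\{v_2,v_4\}$ be the machines receiving the two weight-2 jobs. Every pair with one vertex in $\{v_1,v_3\}$ and one in $\{v_2,v_4\}$ is an edge of the Hamiltonian cycle $v_1v_2v_3v_4v_1$. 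Hence the unit job on $ab$ must be placed on $a$ or on $b$, raising that machine's load to $3$, a contradiction. Therefore $\OPT(\Istar)=3$, and $\mathrm{gap}(\Istar)=3/2$.

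The only step requiring an idea is this last one: the tight-load counting forces the weight-2 machines to be exclusive, and the bipartite structure of the cycle relative to the matching makes them adjacent. Everything else is direct verification.
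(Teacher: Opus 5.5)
Your proof is correct and follows the paper's argument: the same observation that a size-2 job rules out $T=1$, the same half-and-half fractional solution at $T=2$, and the same trap in which the perfect matching forces one machine in each of $\{v_1,v_3\}$ and $\{v_2,v_4\}$ to hold a weight-2 job while every cross pair is a cycle edge. The differences are cosmetic: your makespan-3 schedule has loads $3,3,1,1$ rather than the paper's $3,2,2,1$, and your total-weight count forcing every load to equal 2 is unnecessary, since in any schedule a machine holding a weight-2 job is already at load 2 before the cycle job on $ab$ is placed.
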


\begin{proof}
(a) $T = 1$ is infeasible: a size-2 job exceeds the threshold, so no
configuration at any machine contains it.

(b) $T = 2$ is feasible (Figure~\ref{fig:frac}). Give every machine weight $1/2$ on the
configuration consisting of its matching job alone (load 2) and weight
$1/2$ on the configuration of its two cycle jobs (load $1 + 1 = 2$). Weights
sum to 1 per machine, and each job is covered $1/2 + 1/2 = 1$ by its two
endpoints.

(c) $\OPT(\Istar) \ge 3$ (Figure~\ref{fig:trap}). The matching is perfect, so under any orientation
some $x \in \{v_1,v_3\}$ and some $y \in \{v_2,v_4\}$ each receive a
weight-2 job and so carry load 2. Every pair in
$\{v_1,v_3\} \times \{v_2,v_4\}$ is a cycle edge, because the 4-cycle
alternates between the two matching classes. Hence the cycle job joining
$x$ and $y$ has both endpoints already at load 2, and whichever it is
assigned to reaches load 3. For the upper bound, a schedule of makespan 3
exists: send $v_1v_3$
and $v_1v_2$ to $v_1$ (load 3), $v_2v_4$ to $v_2$ (load 2), $v_2v_3$ and
$v_3v_4$ to $v_3$ (load 2), and $v_4v_1$ to $v_4$ (load 1).
\end{proof}

\subsection{What was already known, and what is not ours}\label{sec:known}

$\Istar$ is not the first unconditional $3/2$ lower bound for this class,
and we want to be exact about that. Jansen, Land, and Maack
\cite[Table~1]{JLM16} give a seven-job, four-machine instance and write:
``Even in this case the integrality gap is at least $3/2$, as the instance
given in Table 1 and its solutions depicted in Figure 1 show.'' That
instance has sizes in $\{1,2\}$ and every job on at most two machines, so it
lies inside two-weight graph balancing. The unconditional lower bound for
the class has therefore been in print since 2016. The matching upper bound
is in the same paper: \cite[Corollary~11]{JLM16} states that for two-valued
restricted assignment with sizes $s < b$ the configuration linear program's
integrality gap is at most $2 - s/b$, which is $3/2$ at $s = 1$, $b = 2$;
restricted assignment contains graph balancing, so the bound holds a
fortiori on our class. One paper, refereed, determines the value: the gap
for two-weight graph balancing is exactly $3/2$.

Two caveats belong with that statement. \cite{JLM16} state the algorithm
behind Corollary~11 but omit its analysis for space---``we omit the
analysis of the algorithm due to space restrictions''---and the corollary
itself, that the algorithm can be modified to work with the configuration
linear program, is asserted without proof. And an independent second route
to the same upper bound exists: Chakrabarty and Shiragur
\cite[remark following Theorem 1]{CS16} note that the integral optimum enters their
$3/2$-approximation analysis only through the feasibility of a flow, so a
relaxation feasible at threshold $T$ yields the same flow and a schedule of
makespan at most $3T/2$, bounding the relaxation's gap as well. It is a
remark in outline, in an unrefereed note, but its test (ratio against the
relaxation, not the integral optimum) is the right one. The same
approximation ratio was obtained independently and earlier by
Huang and Ott \cite{HO16} and by Page and Solis-Oba \cite{PSO16}; a remark
added to \cite{CS16} says so itself.

One more published seven-job witness belongs in this picture, in the
\emph{broader} class. Wang and Sitters \cite[Section~4.1]{WS16} exhibit,
as an integrality-gap example for two-valued restricted assignment on
intervals: four machines, two tripled unit pairs, and one weight-2 job
allowed on \emph{all four} machines; gap $3/2$. The degree-4 job puts it
outside two-weight graph balancing, so it does not affect the class-scoped
claims here; but restricting that job's machine set to two machines, one
from each pair, yields exactly the third four-machine witness of
Table~\ref{tab:seven}. The primary-source review reached \cite{WS16} only
after our sweeps had produced that witness independently; the kinship is
noted where the classification is stated.

What follows is therefore about the \emph{size} of a witness, not the value
of the gap.

\paragraph{Relation to the published witness.}
Table~1 of \cite{JLM16} gives a seven-job, four-machine instance with gap
$3/2$: one size-2 job on a path anchored by two degree-one unit jobs.
$\Istar$ is not that instance with a job deleted. The two have the same
degree sequence---both are 3-regular multigraphs---so no argument from
degrees separates them; the separation is that $\Istar$ carries two size-2
jobs to their one, so it cannot embed in theirs under any relabeling, and
that $\Istar$ is the simple graph $K_4$, using all six machine pairs, while
theirs uses three distinct pairs with two parallel classes. An exhaustive
search over all $4!$ machine relabelings and all injective size-preserving
job maps confirms no embedding exists.

\subsection{Exact certification}

Both $\Istar$ and the instance of \cite{JLM16} are certified independently
of any solver. Infeasibility at $T = 1$ follows structurally. Feasibility at
$T = 2$ is checked in exact rational arithmetic: weights summing to 1 per
machine, every job covered exactly 1, every used configuration's load at
most $T$ by integer comparison, and---the constraint most easily
omitted---every job in a configuration admissible on that machine. Script:
\texttt{p29\_exact\_rational\_verify.py}. For \cite{JLM16} we supply our own
symmetric witness rather than transcribing theirs; any valid witness proves
feasibility.

\section{Minimality and uniqueness}

\begin{lemma}[connectivity reduction]\label{lem:conn}
Model an instance as a multigraph on machines, jobs as edges, a degree-1 job
as a pendant edge to a dedicated vertex. Configuration-linear-program
feasibility and $\OPT$ both decompose over connected components---this is
Lemma~\ref{lem:decomp}, proved below and independently of this one---so
$\OPT(I)$ and $\OPTLP(I)$ are the maxima over components. Consequently some
component $c$ has $\mathrm{gap}(c) \ge \mathrm{gap}(I)$: taking $c$ to
attain $\OPT(I)$ gives
$\mathrm{gap}(c) = \OPT(I)/\OPTLP(c) \ge \OPT(I)/\OPTLP(I)$. A connected
multigraph with $k$ edges has at most $k+1$ vertices, so any witness with at
most $k$ jobs contains a connected witness on at most $k+1$ machines: at
most six machines for the five-job minimality claim, at most seven for the
six-job uniqueness claim.
\end{lemma}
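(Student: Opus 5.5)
We take the decomposition statement, Lemma~\ref{lem:decomp}, as given, but we still say why it is plausible, since the rest of the argument rests on it. Every job's allowed machines lie inside a single connected component. Its endpoints are adjacent vertices, and a degree-1 job is an edge to a dedicated vertex, which we identify with its one machine. Configurations on a machine of component $c$ therefore contain only jobs of $c$. A feasible solution of the configuration linear program at threshold $T$ restricts to a feasible solution on each component. Conversely, feasible solutions on the components at any common $T$ combine into one, because the components share no machines and no jobs. The least feasible threshold is monotone in $T$: a solution at $T$ is still a solution at $T' \ge T$. Hence $\OPTLP(I) = \max_c \OPTLP(c)$. The same reasoning on schedules gives $\OPT(I) = \max_c \OPT(c)$.

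Now choose a component $c$ with $\OPT(c) = \OPT(I)$. By the first step, $\OPTLP(c) \le \OPTLP(I)$, and this value is positive whenever $c$ has a job, since all sizes are positive integers. We may assume $c$ has a job: if $I$ has no jobs, then $\OPT(I) = 0$ and $I$ is not a witness. Then
\[
\mathrm{gap}(c) = \frac{\OPT(I)}{\OPTLP(c)} \ge \frac{\OPT(I)}{\OPTLP(I)} = \mathrm{gap}(I).
\]
In particular, if $I$ is a witness then so is $c$. The component $c$ is connected, and its jobs form a subset of $J$, so it has at most $k$ jobs whenever $I$ does.

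Next comes the vertex count. A connected multigraph with $e$ edges has a spanning tree. That tree has $(\text{number of vertices}) - 1$ edges, all drawn from the $e$ edges, so the number of vertices is at most $e+1$. The dedicated vertex of a pendant edge counts as a vertex here, and it is the only way the count can exceed the number of machines. So it suffices to bound the vertex count. With $e \le k$, component $c$ lives on at most $k+1$ machines. Machines that no job can use are isolated vertices; they lie in other components and are discarded. Taking $k = 5$ gives at most six machines, and $k = 6$ gives at most seven.

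The only delicate point is the modelling convention for degree-1 jobs. Treating a pinned job as a loop at its machine would change the vertex bound. Treating it as a pendant edge adds a vertex that is not a machine. This overcounts, so the bound ``at most $k+1$'' on real machines remains valid. We therefore need only check that the convention never places two machines in one component when no job links them, and it does not, since the pendant vertex touches only the pinned job's own machine. Everything else is bookkeeping.
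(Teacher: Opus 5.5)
Your proof is correct and follows the paper's own route. You decompose over components via Lemma~\ref{lem:decomp} and choose the component $c$ attaining $\OPT(I)$, so that $\OPTLP(c) \le \OPTLP(I)$ gives $\mathrm{gap}(c) \ge \mathrm{gap}(I)$. You then bound $c$ by the spanning-tree count of at most $k+1$ vertices, the dedicated pendant vertex only overcounting machines.

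Two cosmetic points. First, the chosen $c$ has a job simply because $\OPT(c) = \OPT(I) \ge 1$ whenever $I$ has one. Second, the loop convention would not endanger the bound: loops add edges but no vertices, so $V \le E+1$ holds there too.
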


The pendant-vertex device is a counting convenience for bounding vertices
only. The dedicated vertex is not an allowed machine for the job---were it
one, the job could be scheduled there for free and both optima would
change---so it carries no load and alters neither $\OPT$ nor $\OPTLP$.

\begin{lemma}[weight bound]\label{lem:weight}
If the configuration linear program is feasible at threshold $T$ on $m$
machines, the total weight satisfies $\sum_j p_j \le mT$: every job's
coverage constraint weights it exactly once, so
$\sum_j p_j = \sum_{i,C} y_{i,C}\,\mathrm{load}(C) \le T \sum_{i,C} y_{i,C}
= mT$, using each machine's unit weight budget. Since sizes are at least 1,
an instance with $\OPTLP = T$ has at most $mT$ jobs.
\end{lemma}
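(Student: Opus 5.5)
The weight bound is essentially a one-line double-counting argument, so the plan is to write that count out carefully. Fix a feasible solution $y$ of the configuration linear program at threshold $T$ on $m$ machines. I would start from the total job weight $\sum_j p_j$ and use the job coverage constraints to rewrite each $p_j$ as $p_j \sum_{(i,C)\,:\,j\in C} y_{i,C}$. Exchanging the order of summation over jobs and over pairs $(i,C)$ then gives
\[
\sum_j p_j \;=\; \sum_{i,C} y_{i,C} \sum_{j \in C} p_j \;=\; \sum_{i,C} y_{i,C}\,\mathrm{load}(C).
\]
This exchange is legitimate because all sums are finite and all terms are nonnegative.

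Next I bound each load. The program forces $y_{i,C}=0$ unless $\mathrm{size}(C)\le T$, so every term with positive weight has $\mathrm{load}(C)\le T$. Together with $y_{i,C}\ge 0$, this gives $\sum_{i,C} y_{i,C}\,\mathrm{load}(C)\le T\sum_{i,C} y_{i,C}$. The machine constraints $\sum_C y_{i,C}=1$, summed over the $m$ machines, give $\sum_{i,C}y_{i,C}=m$. Hence $\sum_j p_j\le mT$.

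For the job-count corollary, I would use the standing assumption from the Preliminaries that every size is a positive integer, so $p_j\ge 1$. Then $|J|\le\sum_j p_j\le mT$. Taking $T=\OPTLP(I)$, which is a feasible threshold by definition, finishes the argument.

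The only step needing care is the meaning of $m$. It must count every machine that carries a unit-weight constraint in the program, including machines that no job can use. Those machines contribute only the empty configuration, so they inflate the bound but do not break it. Likewise, the pendant dummy vertices of Lemma~\ref{lem:conn} are not machines and must not be counted. Beyond this bookkeeping, I expect no real obstacle.
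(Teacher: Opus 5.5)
Your proposal is correct and is the same double-counting argument the paper gives. The coverage equalities turn $\sum_j p_j$ into $\sum_{i,C} y_{i,C}\,\mathrm{load}(C)$, each load is at most $T$, the $m$ machine equalities sum to $m$, and $p_j \ge 1$ converts the weight bound into a job-count bound. Your note on what $m$ counts (machines no job can use are included, the pendant dummy vertices of Lemma~\ref{lem:conn} are not) is correct bookkeeping that the paper's one-line proof does not spell out.
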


\begin{theorem}[minimality across all machine counts]\label{thm:min}
Within two-weight graph balancing, no instance with at most five jobs has
gap at least $3/2$, on any number of machines. $\Istar$ is therefore a minimum
witness; the seven-job instance of \cite{JLM16} is not.
\end{theorem}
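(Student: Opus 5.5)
The plan is to reduce the claim to a finite, small family of connected instances and then settle that family partly by hand and partly by exhaustive check.

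\textbf{Reduction to connected instances.} First apply Lemma~\ref{lem:conn}. If some instance with at most five jobs had gap at least $3/2$, then some connected component would also have gap at least $3/2$, with at most five jobs and at most six machines. Machines that no job can use change neither $\OPT$ nor $\OPTLP$, so they are irrelevant. Hence it suffices to consider connected multigraphs with at most five edges, each edge of size $1$ or $2$, where a job with $|M(j)|=1$ is drawn as a pendant edge.

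\textbf{Pinning the threshold.} Next I would cut down the possible values of $\OPTLP$ analytically.
\begin{enumerate}
\item If every job has size $1$, a feasible configuration solution at integer $T$ projects to a fractional orientation with every in-load at most $T$. This is a bipartite $b$-matching (flow) polytope, which is integral, so $\OPT = \OPTLP$ and the gap is $1$.
\item So some job has size $2$, which gives $\OPTLP \ge 2$.
\item If $\OPTLP = T \ge 3$, a witness needs $\OPT \ge \lceil 3T/2\rceil$. That means some machine carries load at least $5$ in every schedule. With at most five jobs of total weight at most $10$, I would rule this out by exhibiting a schedule of makespan at most $T+1$. I would try an $\OPT \le T + p_{\max} - 1$ style argument first: round the fractional orientation along the cycles and paths of the fractional edges, in the manner of Lenstra--Shmoys--Tardos. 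For so few jobs, a direct count should also work: each machine has degree at most $5$, and a greedy orientation keeps every load at most $T+1$.
\end{enumerate}
This leaves the single case $\OPTLP = 2$, $\OPT \ge 3$.

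\textbf{The case $\OPTLP = 2$, $\OPT \ge 3$.} By Lemma~\ref{lem:weight}, the total weight is at most $2m$, where $m$ is the number of machines. I would try to show directly that such an instance needs six jobs. The argument would proceed in three steps.
\begin{enumerate}
\item $\OPT \ge 3$ means every orientation overloads some vertex. Orient every size-2 job first. Feasibility at $T=2$ forces the size-2 edges to form a graph in which each vertex receives at most one of them fractionally. Hence they can be oriented injectively, as in a pseudoforest orientation.
\item The unit edges must then be blocked. Every admissible injective orientation of the heavy edges must leave a unit edge both of whose endpoints are already at load $2$, or a unit-edge cluster that is locally too dense.
\item Heavy edges with an injective orientation whose choices can be varied should make this blocking cost at least as many edges as in $\Istar$. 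That is two heavy edges plus the four cross unit edges, since each blocking configuration must hit every combination of heavy-edge orientations.
\end{enumerate}

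\textbf{Fallback and main obstacle.} If the hand argument becomes case-heavy, I would close the proof by exhaustive enumeration, which is how the exhaustive claims are certified. The enumeration would:
\begin{enumerate}
\item list all connected multigraphs with at most five edges and at most six vertices, up to isomorphism;
\item assign sizes in $\{1,2\}$ and pendant (pinned) status to the edges;
\item compute $\OPT$ by trying all at most $2^5$ orientations;
\item decide feasibility at $T=2$ by exact rational linear programming, cross-checked in floating point.
\end{enumerate}
The step I expect to be the main obstacle is the last reduction: showing, cleanly and without gaps, that $\OPTLP = 2$, $\OPT \ge 3$ forces at least six jobs. The enumeration makes it routine, but a human-readable blocking argument requires care with pinned jobs and parallel edges.

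The consequences in the statement then follow. Theorem~\ref{thm:istar} gives a six-job witness, so $\Istar$ is a minimum witness. The seven-job instance of \cite{JLM16} is one job larger, so it is not minimum.
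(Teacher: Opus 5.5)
Your skeleton is the paper's: Lemma~\ref{lem:conn} reduces to connected instances on at most six machines, and an exhaustive check with exact LP decisions settles them. Your middle step, however, is a genuine difference.

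The paper never pins the threshold. Having computed $\OPT$ by brute force, its sweep tests each instance once, at $T_0=\lfloor 2\OPT/3\rfloor$, the unique integer at which feasibility is equivalent to gap at least $3/2$. Every value of $\OPTLP$ is therefore covered at once, and the theorem rests on Lemma~\ref{lem:conn} and the sweep alone; the paper keeps the rounding lemma out of it on purpose.

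You instead reduce analytically to $\OPTLP=2$ and test only there. That buys a one-threshold computation and the structural fact that witnesses sit at $(\OPTLP,\OPT)=(2,3)$, which is the paper's Corollary~\ref{cor:cell}. The cost is that your step~3 becomes load-bearing, since an instance with $\OPTLP=3$ and $\OPT=5$ would be invisible to a test at $T=2$. That step needs precisely the \emph{strict} bound $\OPT\le T+p_{\max}-1$ of Lemma~\ref{lem:round}. Its strictness comes from integrality: a machine receiving a fractional job has integral load strictly below $T$, hence at most $T-1$. The ordinary Lenstra--Shmoys--Tardos bound $T+p_{\max}$ leaves $T=3$ and $T=4$ open, and your greedy-orientation alternative is asserted, not argued. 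Since you compute $\OPT$ for every instance anyway, testing at $\lfloor 2\OPT/3\rfloor$ instead of at $2$ would remove the dependence entirely.

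There are two smaller differences. First, you enumerate connected multigraphs up to isomorphism. The paper instead enumerates all descriptor multisets on six machines redundantly, so that exhaustiveness is elementary rather than resting on a classification. Second, your hand argument for the $T=2$ case is heuristic: its third step is an expectation, not a proof. So, as in the paper, the result rests on the enumeration.
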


\begin{proof}
By Lemma~\ref{lem:conn} it suffices to check witnesses on at most six
machines. The sweep at six machines and at most five jobs (433{,}743
feasibility decisions) found none, and this subsumes every smaller machine
count, since a machine no job can use affects neither $\OPT$ nor $\OPTLP$.
Sweeps at four machines (34{,}200) and five (139{,}975) agree; Figure~\ref{fig:wall} shows the per-size decision counts. Every
decision is exact; see Section~\ref{sec:verify}.
\end{proof}

\begin{theorem}[uniqueness at six jobs, all machine counts]\label{thm:unique}
Within two-weight graph balancing, on any number of machines, the instances
with six jobs and gap $3/2$ are exactly the copies of $\Istar$, possibly
together with machines no job can use. Up to machine relabeling and such
unused machines, $\Istar$ is the unique witness at its size.
\end{theorem}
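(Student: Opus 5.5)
The paper settles Theorem~\ref{thm:min} by exhaustive sweep. I will instead aim for a structural argument here, keeping the enumeration only as a cross-check. The argument has four steps: pin down the two optima, use the weight bound, use Hall-type counting, and finish with a small case analysis.

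\textbf{Pinning the optima.} Let $I$ be a six-job witness. Any graph-balancing instance with sizes in $\{1,2\}$ can be scheduled with makespan at most twice its relaxation value. So $\OPTLP(I)\ge 2$ whenever a size-2 job is present, since such a job forces $T\ge 2$.

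If $\OPTLP(I)=1$, all jobs are unit. Then $T=1$ feasibility says the fractional assignment is a perfect fractional matching of jobs into machines. By integrality of bipartite matching, $\OPT=1$ and the gap is 1.

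If $\OPTLP(I)\ge 3$, then $\OPT\ge 9/2$ is needed. But six jobs of size at most 2 cannot force a load of 5 on a single machine once the relaxation at 3 is feasible; I would verify this by a short degree argument. Hence $\OPTLP(I)=2$ and $\OPT(I)=3$.

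\textbf{Weight and connectivity.} By Lemma~\ref{lem:conn}, some component already has $\OPT=3$ and $\OPTLP\le 2$. If that component had fewer than six jobs, it would contradict Theorem~\ref{thm:min}. So the whole instance is one connected component, possibly with unused machines. By Lemma~\ref{lem:weight}, its total weight $W$ satisfies $W\le 2m$, where $m$ is the number of used machines.

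\textbf{Hall-type counting.} $\OPT\ge 3$ means there is no orientation with every machine load at most 2. This is a degree-constrained orientation problem. With unit capacities it has a Hall-type obstruction: some vertex set $S$ whose induced total weight exceeds $2|S|$, or a parity-type obstruction coming from the size-2 edges. The relaxation at $T=2$ rules out the first kind, since it implies $W(E[S])\le 2|S|$ for every $S$. So the obstruction must be a parity or rigidity one. Within each vertex set $S$ with $W(E[S])=2|S|$, every machine in $S$ must be exactly full. A size-2 edge then consumes its endpoint entirely, and the unit edges around it must all be absorbed by the other side.

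\textbf{The tight case analysis (main obstacle).} I expect the real work to lie in classifying these tight, rigid structures with six edges. I would argue as follows.

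First, a tight set $S$ needs at least one size-2 edge, since otherwise unit edges can always be oriented within capacity. Second, if $S$ contains exactly one size-2 edge, the unit edges form a graph whose remaining capacity admits an orientation, which would give $\OPT=2$. So at least two size-2 edges are needed.

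With $W\le 12$ and six edges, taking $a$ size-2 edges and $6-a$ unit edges, tightness forces $6+a=2|S|$. This leaves $a\in\{2,4,6\}$ with $|S|\in\{4,5,6\}$.

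I would rule out $a=4$ and $a=6$ directly. Size-2 edges sharing a vertex either overload it, which contradicts the relaxation, or leave configurations that are inconsistent.

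That leaves $a=2$ with four machines and four unit edges, every machine at load exactly 2. The two heavy edges must be disjoint, because sharing a vertex is not LP-feasible with a tight count. Each unit edge must join the two heavy classes, because otherwise it can be absorbed, as in part (c) of Theorem~\ref{thm:istar} read in reverse. That forces the unit edges to be exactly the four cross pairs, which gives $\Istar$.

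The fragile point is the claim that only Hall or parity obstructions exist. If it resists a clean proof, I would fall back on the paper's exact enumeration over connected instances on at most seven machines, as licensed by Lemma~\ref{lem:conn}.
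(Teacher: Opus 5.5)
Your structural route breaks at the point you flag, and the break is not a technicality. The load-bearing step is the passage from ``the relaxation at $T=2$ excludes Hall violations'' to ``the obstruction is a rigidity one inside a tight set $S$ with $W(E[S])=2|S|$, and $S$ carries all six jobs, so $6+a=2|S|$.'' Nothing in the proposal proves it, and as a general principle it is false. The five-machine witness $\{a\}^1\,\{ab\}^{1\times 2}\,\{bc\}^2\,\{cd\}^1\,\{ce\}^1\,\{de\}^2$ of Table~\ref{tab:seven} has $\OPTLP=2$ and $\OPT=3$, yet every nonempty machine set satisfies $W(E[S])\le 2|S|-1$. Its obstruction is a forcing chain, not a full set: the singleton and the two $ab$ jobs force a unit job onto $b$, so $bc$ must fill $c$; that pushes $cd$ and $ce$ onto $d$ and $e$, leaving no room for $de$. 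Nor can the general lemma be rescued. By Theorem~\ref{thm:conp}, among instances feasible at $T=2$, no family of polynomially checkable conditions (Hall, parity, or otherwise) characterizes $\OPT>2$ unless $\mathrm{NP}=\mathrm{coNP}$. At six jobs, the tightness of the witness (weight $8$ on four machines) is an \emph{output} of the theorem. Getting it in advance needs an argument specific to six jobs, and you give none.

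The case analysis built on tightness also rests on claims that are false as stated.
\begin{itemize}
\item ``A tight set with exactly one size-2 edge always orients'' fails for the tight four-machine witness $\{ac\}^{1\times 3}\,\{bd\}^{1\times 3}\,\{cd\}^2$, and for the instance of \cite{JLM16}. At six jobs parity makes $a=1$ moot, but only under the unproved assumption above.
\item ``Sharing a vertex is not LP-feasible with a tight count'' fails for $\{vx\}^2\,\{vy\}^2\,\{vz\}^{1\times 4}$, which is tight and even schedulable at $2$. So neither the disjointness in the case $a=2$ nor the dismissal of $a\in\{4,6\}$ is direct; each needs an argument that $\OPT\le 2$.
\item The last step inverts the logic. ``Every unit edge joins the two heavy classes'' does not yield ``the four cross pairs,'' as $\{ac\}^2\,\{bd\}^2\,\{ab\}^{1\times 2}\,\{cd\}^{1\times 2}$ shows: every unit edge crosses, yet sending $ac$ to $c$ and $bd$ to $b$ gives makespan $2$. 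What singles out $\Istar$ is that every cross pair carries a unit job. Proving that every alternative schedules at $2$ is a case analysis over servant systems (Lemma~\ref{lem:t2int}), and it must use the relaxation.
\item Singleton jobs, which the class allows, are never treated.
\end{itemize}
Your pinning step is fine once the unspecified degree argument is replaced by Lemma~\ref{lem:round}, i.e.\ Corollary~\ref{cor:cell}.

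As written, the theorem is therefore proved only by your fallback, and that fallback is the paper's proof. Lemma~\ref{lem:conn} and Theorem~\ref{thm:min} reduce to one connected six-job component on at most seven machines. A redundant sweep at four machines then returns exactly the three-element orbit of $\Istar$, and canonical sweeps settle five through seven machines. If you want structure in place of search, the piece that works is the paper's remark that a connected six-job class on six or seven machines is a tree or unicyclic. It is then orientable with in-degree at most one, so $\OPT\le p_{\max}=2$. That removes those ranges and leaves four and five machines as the real content.
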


\begin{proof}
First reduce the machine count. A six-job witness has, by
Lemma~\ref{lem:conn}, a connected component of gap at least $3/2$ with at
most six jobs; Theorem~\ref{thm:min} rules out five or fewer, so that
component carries all six jobs, and the rest of the instance is machines no
job can use. Six edges and connectivity bound the component's machines by
seven, so sweeps at up to seven machines decide the theorem for every
machine count.

At four machines the redundant enumeration is feasible: 192{,}847
feasibility decisions return exactly three instances of gap
$3/2$. In each, the two weight-2 jobs form one of the three perfect
matchings of $K_4$---a different one in each instance---and the four unit
jobs occupy four \emph{distinct} machine pairs, leaving every vertex at
degree 2; four distinct edges that are 2-regular on four vertices form the
4-cycle complementary to the matching, so no parallel-edge configuration
arises. The three perfect matchings of $K_4$ are equivalent under
relabeling (Figure~\ref{fig:orbit}), so the three instances are isomorphic copies of $\Istar$. The
count is corroborated by the orbit--stabilizer theorem: $\Istar$'s
stabilizer among the $4! = 24$ machine relabelings is the dihedral group of
its 4-cycle, of order 8, so its orbit has size $24/8 = 3$, exactly the
number of hits. The enumeration does not quotient by relabeling, so three
isomorphic copies is what a single witness looks like. This sweep covers
every smaller machine count as well, since it includes instances that leave
machines unused.

At five, six, and seven machines the redundant enumeration is out of
reach---75{,}693{,}820 instances across the three ranges---so there we
enumerate one representative per machine-relabeling class instead, a
\emph{canonical} enumeration, the term used for such sweeps below:
22{,}951,
32{,}377, and 37{,}316 classes, of which 15{,}645, 17{,}901, and 18{,}427
survive pruning to a feasibility decision. Each range returns exactly one
hit: $\Istar$ on four of the machines, the others untouched. No witness
whose six jobs genuinely use a fifth machine exists. The class counts of
this enumeration are confirmed against Burnside's lemma on every range, and
its hit set at four machines against the redundant sweep's; both checks are
described in Section~\ref{sec:verify}.
\end{proof}

\noindent
The six- and seven-machine ranges need not rest on the enumerator at all:
they follow from the elementary argument Theorem~\ref{thm:nine} uses on its
own tail. A connected six-job class on $m$ machines with $s$ singleton jobs
has $6 = E \ge V - 1 = m + s - 1$, so $s \le 7 - m$: at $m = 7$ there is no
dedicated vertex, and $V = 7$ against $E = 6$ makes the multigraph a tree;
at $m = 6$ there is at most one, and the multigraph is a tree if a singleton
job occurs and carries a single cycle if none does. Orient a tree away from
a root---the dedicated vertex where there is one, so that its pendant edge
points into its own machine---and a unicyclic multigraph consistently around
its cycle and away from it elsewhere; either way no machine has in-degree
above one, so $\OPT \le p_{\max} = 2$, while a witness has $\OPT = 3$ by
Corollary~\ref{cor:cell}. Those two ranges are therefore corroborated by
their sweeps rather than carried by them, and the five-machine range is the
only one the canonical enumerator decides alone.

At seven jobs uniqueness fails, and it fails completely: the witness
count jumps from one to thirteen. The instance of \cite{JLM16} turns out
to be one of exactly three seven-job witnesses on four machines. Of the
other two, one is obtained from Wang and Sitters' broader-class example
by restricting its weight-2 job to two machines
(Section~\ref{sec:known})---a different instance, in a class where
restriction is not known \emph{a priori} to raise or lower the gap---and
neither it as such, nor the third, nor any of the six witnesses on five
and six machines, appears to have been published.

\begin{theorem}[the seven-job witnesses]\label{thm:seven}
Up to machine relabeling and machines no job can use, the instances of
two-weight graph balancing with seven jobs and gap $3/2$, on any number of
machines, are exactly thirteen. Nine are connected: three on four machines
---among them the instance of \cite[Table~1]{JLM16}---five on five
machines, and one on six. The remaining four are $\Istar$ together with
one job on machines disjoint from its four: a size-1 or size-2 job,
allowed on one machine or on two. Table~\ref{tab:seven} lists all
thirteen.
\end{theorem}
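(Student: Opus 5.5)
The proof follows the template of Theorem~\ref{thm:unique}: split a seven-job witness by connectivity, settle the disconnected witnesses by hand, and settle the connected ones by a machine-count reduction followed by sweeps.

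\textbf{Disconnected witnesses.} By Lemma~\ref{lem:conn}, some component $c$ of the witness has $\mathrm{gap}(c)\ge 3/2$. By Theorem~\ref{thm:min}, $c$ has at least six jobs.
\begin{itemize}
\item If $c$ has six jobs, Theorem~\ref{thm:unique} makes $c$ a copy of $\Istar$. The seventh job is then a component on machines disjoint from those four. It is one of four types: size 1 or 2, allowed on one machine or on two.
\item Each of the four types must be checked to preserve the gap. By Lemma~\ref{lem:decomp}, the instance has $\OPT=\max(3,\cdot)=3$, since the extra component alone has $\OPT\le 2$. It also has $\OPTLP=\max(2,\cdot)=2$, since the extra job fits at $T=2$.
\end{itemize}
So the four disconnected witnesses are exactly the ones listed in the statement.

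\textbf{Connected witnesses: reducing the machine count.}
\begin{itemize}
\item A connected seven-edge multigraph has at most eight vertices, counting the dedicated pendant vertices. So at most eight machines occur.
\item Reuse the argument after Theorem~\ref{thm:unique}. With $m$ machines and $s$ singleton jobs, $7\ge m+s-1$. If the multigraph is a tree or unicyclic, orient it so every in-degree is at most one. This gives $\OPT\le 2$, contradicting Corollary~\ref{cor:cell}.
\item That argument disposes of $m=8$, and of $m=7$ whenever the graph has at most one cycle.
\item A sharper bound should handle the remaining cases. A witness needs some machine forced to load 3, and Lemma~\ref{lem:weight} gives $\sum_j p_j\le 2m$. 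I would push an in-degree/orientation argument (a graph with at most one cycle per component is orientable with in-degree at most one) to show that $m\ge 7$, and probably $m=6$ with few cycles, cannot reach $\OPT=3$.
\item Three machines are excluded by Theorem~\ref{prop:mach}.
\end{itemize}
This leaves four, five, and six machines, possibly seven, to be decided by enumeration.

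\textbf{Sweeps.}
\begin{itemize}
\item At four machines, run the redundant enumeration. Group the hits into orbits under the $24$ relabelings, and check the hit count against the orbit–stabilizer count of the three classes. One of the classes must be identified with \cite[Table~1]{JLM16}.
\item At five and six machines, use the canonical enumerator. Validate its class counts by Burnside's lemma, and keep only connected hits that use every machine.
\item Every feasibility decision is made twice, in floating point and in exact rationals (Section~\ref{sec:verify}). The enumeration must also rediscover, as a sanity check, the four $\Istar$-plus-one-job instances and the three known four-machine instances.
\item For the final list in Table~\ref{tab:seven}, certify each of the nine connected witnesses directly. The lower bound $\OPT\ge 3$ comes from exhausting orientations. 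The value $\OPTLP=2$ comes from an explicit rational configuration solution at $T=2$, with $T=1$ infeasible as in Theorem~\ref{thm:istar}(a).
\end{itemize}

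\textbf{Main obstacle.} The hardest part is completeness at five and six machines. The count claim there rests entirely on the canonical enumerator, and its correctness cannot be cross-checked redundantly as it can at four machines. I would first try to shrink the search before enumerating:
\begin{itemize}
\item use Lemma~\ref{lem:weight} to bound the number of size-2 jobs;
\item require minimum degree conditions: no vertex of total incident weight below 3 can be forced to load 3 unless it is a leaf that is pruned away.
\end{itemize}
With those restrictions, the five- and six-machine ranges would get a second, partly analytic corroboration rather than resting on the enumerator alone.
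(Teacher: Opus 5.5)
Your proposal is correct and follows the paper's proof. The connectivity split via Lemma~\ref{lem:conn}, Theorem~\ref{thm:min} and Theorem~\ref{thm:unique} yields the four disconnected classes, and exhaustive sweeps over the bounded machine range settle the connected case. The main difference is that the paper sweeps canonically at four through eight machines and finds no hits at seven or eight, whereas you close $m = 8$ by the tree/unicyclic orientation argument. That argument also closes $m = 7$ completely: at $m = 7$ you have $s \le 1$, so the multigraph has cyclomatic number $1 - s \le 1$, and your ``possibly seven'' hedge can be dropped.
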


\begin{proof}
A seven-job witness has, by Lemma~\ref{lem:conn}, a connected component of
gap at least $3/2$ with at most seven jobs, and Theorem~\ref{thm:min}
forces that component to carry six or seven of them. If six, the component
is $\Istar$ (Theorem~\ref{thm:unique}) and the remaining job is a
component of its own, on machines disjoint from $\Istar$'s; conversely
every such instance is a witness, because both optima decompose over
components: a single job of size $p \le 2$ gives
$\OPT = \max(3, p) = 3$ and $\OPTLP = \max(2, p) = 2$. That yields the
four disconnected classes. If seven, the component is connected with
seven edges, hence on at most eight machines, so the canonical sweeps at
four through eight machines and at most seven jobs decide the connected
case: $39{,}500$, $90{,}796$, $130{,}508$, $145{,}339$, and 149{,}073
feasibility decisions respectively, with class counts again matching
Burnside's lemma on every job count of every range. Collapsing the hits by
the machines they touch yields exactly the thirteen classes of
Table~\ref{tab:seven} and no others; every class supported on $k$ machines
reappears at every swept $m \ge k$ and at no smaller $m$, and no hit
touches seven or eight machines. The identification of
\cite[Table~1]{JLM16} among the four-machine hits is by explicit
relabeling, checked mechanically: a positive control the pipeline was
not designed around.
\end{proof}

\begin{table}[htbp]
\centering
\small
\begin{tabular}{cccp{0.52\textwidth}}
\hline
machines & weight & kind & jobs (allowed set$^{\text{size}}$, exponents = multiplicity) \\
\hline
4 & 8 & connected &
  $\{a\}^1\, \{ac\}^{1\times 2}\, \{b\}^1\, \{bd\}^{1\times 2}\, \{cd\}^2$
  \quad (= \cite[Table~1]{JLM16}) \\
4 & 8 & connected &
  $\{a\}^1\, \{ad\}^{1\times 2}\, \{bc\}^{1\times 3}\, \{cd\}^2$ \\
4 & 8 & connected &
  $\{ac\}^{1\times 3}\, \{bd\}^{1\times 3}\, \{cd\}^2$ \\
5 & 9 & connected &
  $\{a\}^1\, \{ab\}^{1\times 2}\, \{bc\}^2\, \{cd\}^1\, \{ce\}^1\, \{de\}^2$ \\
5 & 9 & connected &
  $\{ab\}^{1\times 3}\, \{bc\}^2\, \{cd\}^1\, \{ce\}^1\, \{de\}^2$ \\
5 & 9 & connected &
  $\Istar$ on $\{b,c,d,e\}$ plus a size-1 job on $\{a,b\}$ (pendant) \\
5 & 10 & connected &
  $\Istar$ on $\{a,b,c,d\}$ plus a size-2 job on $\{d,e\}$ (pendant) \\
5 & 10 & connected &
  $\{ab\}^1\, \{ac\}^1\, \{ad\}^2\, \{bd\}^1\, \{be\}^2\, \{cd\}^1\, \{ce\}^2$
  \quad (one diagonal of $\Istar$ subdivided) \\
6 & 10 & connected &
  $\{ab\}^2\, \{ac\}^1\, \{ad\}^1\, \{be\}^1\, \{bf\}^1\, \{cd\}^2\, \{ef\}^2$
  \quad (two unit triangles joined by a size-2 job) \\
5 & 9 & disconnected & $\Istar \oplus$ a size-1 job on one machine \\
5 & 10 & disconnected & $\Istar \oplus$ a size-2 job on one machine \\
6 & 9 & disconnected & $\Istar \oplus$ a size-1 job on two machines \\
6 & 10 & disconnected & $\Istar \oplus$ a size-2 job on two machines \\
\hline
\end{tabular}
\caption{The thirteen seven-job witnesses, up to machine relabeling and
machines no job can use. Machines are $a, b, \dots$; weight is total job
size; every class has integral optimum 3 against relaxation value 2. The
machine column is the size of the support. Generated mechanically from the
sweep artifacts by \texttt{p29\_seven\_job\_classify.py}.}
\label{tab:seven}
\end{table}

By Lemma~\ref{lem:weight} a seven-job witness at threshold 2 on four
machines has weight at most 8, so it carries at most one size-2 job; all
three four-machine classes saturate the bound exactly. The five- and
six-machine classes show the bound is not always tight: four of the
five-machine classes have weight 9 against a budget of 10.

The next bound is not ours, and two of the results below rest on it, so we
state it carefully before using it. The two are Corollary~\ref{cor:cell} and, through it, the short proof of
Theorem~\ref{prop:mach}. The minimality and uniqueness
theorems use only Lemma~\ref{lem:conn} and the sweeps, and
Theorem~\ref{prop:mach} carries a self-contained fallback that needs
neither. Lenstra,
Shmoys and Tardos \cite{LST90} round a feasible assignment-relaxation
solution at target $T$ to a schedule of makespan at most $T + p_{\max}$.
Verschae and Wiese sharpen the inequality to a strict one
\cite[p.~375]{VW14}, remarking that in \cite{LST90} ``the inequality in the
theorem is not strict'' but that ``it is easy to see that the same proof
yields a strict inequality.'' They give no proof of it, there or anywhere
else in the paper---their own Theorem~4 uses the strict form, as
Corollary~\ref{cor:cell} does---so we prove it. For two-valued instances the additive form is also
stated by Jansen, Land and Maack, whose algorithm ``has an additive
approximation guarantee of $\OPTLP + b - s$'' \cite{JLM16}; at $s=1$, $b=2$
that is $\OPT \le \OPTLP + 1$.

\begin{lemma}[strict rounding, after \cite{LST90,VW14}]\label{lem:round}
If the configuration linear program is feasible at an integer threshold $T$,
then $\OPT \le T + p_{\max} - 1$, where $p_{\max}$ is the largest job size.
For the class this reads $\OPT \le T + 1$, at every number of machines.
\end{lemma}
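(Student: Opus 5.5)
The plan is to pass from the configuration linear program to the ordinary assignment relaxation of \cite{LST90} and rerun their rounding, keeping track of the one place where the inequality can be made strict. Let $y$ be a feasible solution at the integer threshold $T$, and set $x_{ij} = \sum_{C \ni j} y_{i,C}$.

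The first step is to record what this projection satisfies.
\begin{itemize}
\item Every job is fully covered: $\sum_i x_{ij} = 1$.
\item Jobs go only where allowed: $x_{ij} = 0$ unless $i \in M(j)$.
\item Machine loads respect the threshold:
\[
\sum_j p_j x_{ij} = \sum_C y_{i,C}\,\mathrm{load}(C) \le T\sum_C y_{i,C} = T .
\]
This is the same computation as in Lemma~\ref{lem:weight}, done one machine at a time.
\item No job is too large for the machine it uses: $x_{ij} > 0$ forces $p_j \le T$, because $j$ then lies in some configuration of load at most $T$.
\end{itemize}
These constraints define a nonempty polytope $P$. I will replace $x$ by a vertex of $P$; every vertex of $P$ is itself a point of $P$, so all four properties survive.

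The second step is the standard sparsity argument, which I expect to be the main technical obstacle. Form the bipartite graph $G$ whose edges join job $j$ to machine $i$ whenever $0 < x_{ij} < 1$. I plan to show that every connected component of $G$ has no more edges than vertices.
\begin{itemize}
\item Restricted to a component, $x$ is a vertex of the polytope cut out by the job equalities and the machine inequalities of that component.
\item At a vertex, the number of fractional variables is at most the number of linearly independent tight constraints. That is at most the number of jobs plus the number of machines in the component.
\end{itemize}
So each component is a tree or has exactly one cycle. In $G$, every fractionally assigned job has degree at least $2$. It follows that every leaf of $G$ is a machine, and the cycle of a unicyclic component is even. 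I then peel leaves: repeatedly match a leaf machine to its unique job and delete both. What remains of a unicyclic component is its cycle, which I match alternately. The result is a matching that covers every fractional job, with each machine used at most once. I will write this out carefully, but it is the familiar \cite{LST90} argument.

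The third step schedules the jobs and checks the bound. Each job with $x_{ij} = 1$ goes to $i$, and each fractional job goes to its matched machine. Consider a machine $i$.
\begin{itemize}
\item If $i$ receives no matched job, its load is at most $\sum_j p_j x_{ij} \le T$.
\item If $i$ receives a matched job $j$, then $x_{ij} > 0$, so the term $p_j x_{ij}$ is strictly positive. That term is part of $i$'s fractional load, which the schedule removes from $i$ before adding $j$ back at full size. Hence the load of $i$ is at most
\[
T - p_j x_{ij} + p_j \;<\; T + p_j \;\le\; T + p_{\max}.
\]
\end{itemize}
This strict inequality is exactly the point \cite{VW14} assert without proof. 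Every size is a positive integer and $T$ is an integer, so every machine load is an integer. A strict inequality against the integer $T + p_{\max}$ therefore sharpens to $\OPT \le T + p_{\max} - 1$.

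For the class, $p_{\max} \le 2$, which gives $\OPT \le T + 1$. Nothing in the argument depends on the number of machines.
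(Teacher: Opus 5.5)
Your proposal is correct and matches the paper's proof. Both project the configuration solution to the assignment polytope, pass to a vertex, and use the pseudoforest structure of the fractional support to match each fractional job to a distinct machine; you re-derive that matching by leaf-peeling, where the paper cites Lenstra--Shmoys--Tardos. Both then get strictness from $p_j x_{ij} > 0$ plus integrality, the only cosmetic difference being that you apply integrality to the final load ($< T + p_{\max}$) while the paper applies it to the integrally assigned load ($< T$, hence $\le T-1$) before adding the rounded job.
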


\begin{proof}
Feasibility at $T$ gives fractional shares $x_j^i = \sum_{C \ni j} y_{i,C}$
covering each job once with fractional load at most $T$ per machine, and
$x_j^i = 0$ unless $i \in M(j)$. A job with $p_j > T$ lies in no
configuration, so its coverage constraint could not be met; hence $p_j \le T$
for every job, which is the hypothesis the \cite{LST90} relaxation carries
beyond the plain assignment relaxation: their $x_{i,j} = 0$ whenever
$p_{i,j} > T$, as \cite[p.~375]{VW14} render it.

Pass to a vertex of the \emph{assignment} polytope these constraints cut
out, not of the configuration linear program's feasible set; that set is
strictly smaller, as the display in Section~\ref{sec:t2} shows. The support
only shrinks, so sizes in it are still at most $p_{\max}$. At a vertex the
bipartite support graph on machines and jobs is a \emph{pseudoforest}---%
their word: every component is ``either a tree or a tree plus one
additional edge''---and deleting the integrally assigned jobs leaves every
remaining job node of degree at least two, so it carries a matching
covering all of them \cite[Theorem~1]{LST90}; rounding
along it gives each machine at most one fractionally assigned job. In the
graph balancing model this is \cite[\S2.2]{EKS14}, where the fractional
edges of a basic solution form a forest and are oriented one-to-one away
from a root. Fix a machine $i$ and
let $j_0$ be that job, if it has one. The jobs assigned to $i$ integrally
carry load at most $T - p_{j_0} x_{j_0}^i < T$, since $p_{j_0} > 0$ and
$x_{j_0}^i > 0$; both that load and $T$ are integers, so it is at most
$T - 1$, and $i$ finishes at most $T - 1 + p_{j_0} \le T + p_{\max} - 1$. A
machine receiving no fractional job finishes at most $T$.
\end{proof}

Three remarks. First, the vertex hypothesis is needed: at an arbitrary feasible
$x$ the support graph need not be a pseudoforest and a machine may carry
several fractional jobs. What the vertex hypothesis gives beyond that---a support small
enough that the maximum size over it is smaller than the global $p_{\max}$---is
what this class does not need, every size being at most 2. Second,
Shchepin and Vakhania \cite{SV05} round a relaxation of the same family to
$T + \tfrac{m-1}{m}p_{\max}$, which is stronger than Lemma~\ref{lem:round} exactly
when $p_{\max} > m$ and matches it here after integrality. They call that
additive term best possible for the rounding approach, but the scope is
narrower than the phrase suggests: their argument is a closing paragraph of
\cite[\S4]{SV05} whose witness is a single job on $m$ identical machines,
and it measures a non-preemptive optimum against an optimal preemptive
distribution. On that instance the configuration linear program is feasible
exactly at $T = \OPT$, so it has no additive gap at all, and the witness
says nothing about Lemma~\ref{lem:round}. \cite[p.~372]{VW14} state the
scope correctly: best possible ``among all rounding algorithms for this
LP''. Sharpening the rounding of the configuration linear program is
therefore not closed off; it is simply not what \cite{SV05} address.
Third, without the strictness the search doubles: the non-strict bound gives
$\OPT \le T + 2$, hence $\OPTLP \le 4$ below, hence twelve jobs to check
instead of six.

\begin{corollary}\label{cor:cell}
Every witness of the class has $(\OPTLP, \OPT) = (2,3)$. In particular no
instance of the class has $\OPTLP = 4$ and $\OPT = 6$, at any number of
machines.
\end{corollary}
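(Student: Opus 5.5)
The plan is to combine Lemma~\ref{lem:round} with integrality of both optima. Let $I$ be a witness of the class, write $T = \OPTLP(I)$ and $O = \OPT(I)$. All sizes are positive integers, so $T$ is an integer. Since $\OPT \ge \OPTLP$ always holds, the witness condition $O/T \ge 3/2$ is what we work with. Lemma~\ref{lem:round} gives $O \le T + 1$ at every machine count. Combining the two,
\[
\tfrac{3}{2}T \le O \le T + 1 ,
\]
hence $T \le 2$.

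It remains to pin down $T$ and $O$. First, $T = 1$ is impossible. Then $\tfrac32 \le O \le 2$, so $O = 2$, and the ratio would be $2$. We rule this out by a structural argument. At $T=1$ every job has size 1, since a size-2 job lies in no configuration. Every configuration then holds at most one job. The configuration linear program at $T=1$ is therefore a fractional bipartite matching between jobs and machines that saturates the jobs. By bipartite integrality (Hall / integral matching polytope), an integral job-saturating matching exists, which gives $O = 1$, a contradiction.

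So $T = 2$, and $3 \le O \le 3$ forces $O = 3$. That proves $(\OPTLP,\OPT) = (2,3)$. The ``in particular'' clause is immediate: $\OPTLP = 4$ would give $\OPT \le 5$ by Lemma~\ref{lem:round}, so the pair $(4,6)$ cannot occur. I would state this explicitly, since it is the case the non-strict bound (third remark) failed to exclude.

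The main obstacle is the $T=1$ case. Lemma~\ref{lem:round} alone allows $O = 2$ there, which is a gap of $2$. So the separate matching-integrality argument is genuinely needed, and I must check that pinned jobs ($|M(j)|=1$) fit the bipartite model; they do, as edges of the job--machine bipartite graph. The remaining steps are arithmetic on the displayed inequality.
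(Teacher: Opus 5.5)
Your proof is correct and has the same skeleton as the paper's. Lemma~\ref{lem:round} gives $\tfrac32 T \le \OPT \le T+1$, hence $T \le 2$; you then exclude $T=1$, and $T=2$ forces $\OPT=3$.

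The one divergence is the case $T=1$. There your claim that Lemma~\ref{lem:round} ``alone allows $O=2$'' is mistaken, because you used only the class-specialized form $\OPT \le T+1$. The lemma's general form is $\OPT \le T + p_{\max} - 1$. At $T=1$ no size-2 job fits any configuration, so $p_{\max}=1$ and the lemma gives $\OPT \le 1$ at once. That is exactly how the paper disposes of the case.

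Your bipartite-matching argument is still valid. At $T=1$ every configuration is empty or a single unit job, so the relaxation is a fractional matching that covers every job once and gives each machine total share at most $1$. Hall's condition, or integrality of the bipartite matching polytope, then yields an integral assignment of makespan $1$, and pinned jobs fit the model. It has the small merit of not depending on the pseudoforest rounding step that the lemma borrows, but it is redundant rather than ``genuinely needed.''

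Your direct handling of the ``in particular'' clause ($\OPTLP=4$ gives $\OPT\le 5$) is equivalent to reading it off the first part.
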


\begin{proof}
By Lemma~\ref{lem:round}, $\mathrm{gap} \le 1 + 1/\OPTLP$, so
$\mathrm{gap} \ge 3/2$ forces $\OPTLP \le 2$. If $\OPTLP = 1$ then no job has
size 2---a size-2 job has no admissible configuration at threshold 1---so
$p_{\max} = 1$ and the lemma gives $\OPT \le 1$, a gap of 1. Hence
$\OPTLP = 2$, and then $\OPT \le 3$ with $\mathrm{gap} \ge 3/2$ forces
$\OPT = 3$.
\end{proof}

The corollary depends on sizes being integers in $\{1,2\}$, not on
any structural feature of graph balancing: doubling $\Istar$ to sizes
$\{2,4\}$ gives an instance with $\OPTLP = 4$ and $\OPT = 6$ exactly. The
cell is empty for this class, not for the model.

\begin{theorem}[three machines never suffice]\label{prop:mach}
No three-machine instance of the class attains gap at least $3/2$, at any
number of jobs. Four machines are therefore necessary for the $3/2$ gap,
and $\Istar$ uses exactly four: it is a minimum witness in machines as
well as in jobs.
\end{theorem}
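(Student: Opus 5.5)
The short route combines Corollary~\ref{cor:cell}, the weight bound, and the two sweep theorems. Suppose a three-machine instance $I$ of the class has $\mathrm{gap}(I) \ge 3/2$. By Corollary~\ref{cor:cell}, $\OPTLP(I) = 2$ and $\OPT(I) = 3$. Lemma~\ref{lem:weight} with $m = 3$ and $T = 2$ gives $\sum_j p_j \le 6$. Since every size is at least 1, $I$ has at most six jobs. Theorem~\ref{thm:min} rules out five or fewer. So $I$ has exactly six jobs, and by Theorem~\ref{thm:unique} it is a copy of $\Istar$, possibly with unused machines. But $\Istar$ puts jobs on four distinct machines, so it cannot live on three, which is a contradiction. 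The closing clause of the statement then follows: $\Istar$ is a witness on four machines, and no witness fits on three.

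The statement promises that the theorem does not depend on Lemma~\ref{lem:round} or on the enumerations. I would therefore also give a self-contained fallback that proves directly: if the configuration LP is feasible at $T = 2$ on three machines, then an orientation with every load at most $2$ exists.

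\textbf{Fallback, first observations.} Unit-weight budgets force total weight at most $6$. Each machine carries at most one size-2 job in any configuration. Hence the size-2 jobs, having total fractional coverage $1$ each spread over machines with budget $1$, number at most three.

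\textbf{Fallback, case split.} I would split on the number $t \in \{0,1,2,3\}$ of size-2 jobs.
\begin{itemize}
\item $t = 0$: this is a flow/Hall statement. An orientation with in-degree at most $2$ exists iff every machine set $S$ spans at most $2|S|$ jobs, counting jobs whose allowed set lies in $S$. The fractional assignment certifies exactly this inequality.
\item $t \ge 1$: first fix where the size-2 jobs go, so that each occupies a distinct machine. Such a placement exists because the machines' fractional weight on configurations containing size-2 jobs forms a bipartite fractional matching. The remaining unit jobs must then fit into a residual capacity of $0$ on the used machines and $2$ on the others. The Hall condition for the residual must be derived from the LP. The key observation is that a machine spending weight $y$ on a size-2 configuration can host at most $2(1-y)$ units of unit jobs fractionally.
\end{itemize}

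\textbf{Main obstacle.} The hard step is the $t \ge 1$ cases. A poor choice of placement for the size-2 jobs can break Hall's condition for the residual even though the LP is feasible; this is exactly the trap $\Istar$ exploits on four machines. With only three machines, I would try a direct case analysis over the few multigraph shapes: pairs among three vertices plus singletons, with at most six edges. Within that analysis I would pick the placement of the size-2 jobs that maximizes the residual Hall slack.
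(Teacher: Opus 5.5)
Your short route is correct, and up to the six-job bound it is the paper's own: Corollary~\ref{cor:cell} gives $\OPTLP = 2$, and Lemma~\ref{lem:weight} then limits a three-machine witness to weight six, hence at most six jobs. From there the two proofs differ.

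\begin{itemize}
\item The paper closes with a dedicated canonical sweep at three machines and at most six jobs ($3{,}237$ decisions).
\item You close by citing Theorems~\ref{thm:min} and~\ref{thm:unique}. This is sound and not circular. Both theorems reach three machines through redundant sweeps on four or six machines, counting instances that leave machines idle, and neither uses the present theorem.
\end{itemize}

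In fact one redundant four-machine sweep to six jobs already settles the question. A three-machine instance is a four-machine instance with one machine unused, and that sweep's only hits are the three copies of $\Istar$, each using all four machines.

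Your version needs no new computation, and it rests on the elementary redundant enumeration rather than the canonical enumerator. The paper's separate sweep gives an independent cross-check. Both versions depend on Lemma~\ref{lem:round} through the corollary.

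The fallback does not achieve what you intend, though the theorem does not need it.

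\begin{itemize}
\item You misread its scope. The statement promises no independence from enumeration, and the paper's fallback is self-contained only with respect to Lemma~\ref{lem:round} and Corollary~\ref{cor:cell}. It sweeps three machines up to twelve jobs to settle every $T \le 4$, and uses the pool-rounding bound $\OPT \le T+2$ only for $T \ge 5$.
\item Your fallback works only at $T = 2$, so by itself it cannot replace the corollary. Without Lemma~\ref{lem:round}, nothing yet forces a witness to have $\OPTLP = 2$.
\item The missing reduction is elementary. Proposition~\ref{prop:orient} gives $\OPT \le T+1$ on three machines, so a gap of at least $3/2$ forces $T \le 2$. At $T = 1$ there are no size-2 jobs, and the fractional assignment rounds integrally.
\item The case $t \ge 1$, which you correctly identify as the crux, is left as an intended case analysis. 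As written, the fallback proves nothing.
\end{itemize}

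If you completed that case analysis and added this reduction, you would have a proof using neither the strict rounding lemma nor any sweep. That is more than the paper's fallback provides.
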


\begin{proof}
By Corollary~\ref{cor:cell} a witness has $\OPTLP = 2$, so by
Lemma~\ref{lem:weight} its total weight is at most $3 \cdot 2 = 6$; sizes
are at least 1, so it has at most six jobs. The canonical sweep at three
machines and at most six jobs decides $3{,}237$ feasibility questions and
finds no witness, in floating point and again in exact rational arithmetic
with identical counts and identical (empty) hit sets. That proves the
theorem. The swept range is deliberately wider than the argument needs:
six jobs admit weight up to 12, so the check is stronger than the bound
requires.

Independently of Corollary~\ref{cor:cell}, and so of the strict form of
Lemma~\ref{lem:round}, the theorem also follows from the elementary
argument below together with a wider sweep. It is kept because it
borrows nothing from the literature: Lemma~\ref{lem:round} relies on the rounding argument of
\cite{LST90,EKS14}, while the route below uses only subset sums and a
balanced orientation. The redundant sweep
at three machines and at most nine jobs (292{,}971 decisions) and the
canonical sweep extending it to twelve (459{,}698 decisions over
459{,}889 classes, Burnside-checked) find no witness, covering every
$T \le 4$ by Lemma~\ref{lem:weight}; every decision is exact, see
Section~\ref{sec:verify}.

For $T \ge 5$ the exclusion is then analytic. Feasibility at $T$ yields
fractional shares $f_{j,i} = \sum_{C \ni j} y_{i,C}$ with unit coverage
per job and fractional load at most $T$ per machine. On machines
$\{a,b,c\}$ every job is a singleton---forced to its machine---or lies in
one of the three pair pools $P_{ab}$, $P_{bc}$, $P_{ca}$. Write
$\alpha^i_P$ for pool $P$'s fractional weight on machine $i$; then
$S_a + \alpha^a_{ab} + \alpha^a_{ca} \le T$ and cyclically, where $S_i$
is machine $i$'s singleton load. Now round each pool separately. The
subset sums of a multiset with sizes in $\{1,2\}$ contain $0$ and the
total and have consecutive gaps at most $2$, so each pool admits an
integral split within $1$ of its fractional one; assign accordingly, and
every singleton to its machine. Each machine meets exactly two pools, so
its integral load is at most $T + 2$. Hence $\OPT \le T + 2$ whenever
the relaxation is feasible at $T$, and
$\mathrm{gap} \le 1 + 2/T \le 7/5 < 3/2$.
\end{proof}

In that second route the two regimes meet exactly: at $T = 4$ the analytic
bound gives $\OPT \le 6$, which forbids any gap \emph{above} $3/2$ but
still permits a witness at exactly $3/2$. The sweep is therefore required to
rule that case out, not merely to overlap the analytic regime; its reach of
$n \le 12 = 3 \cdot 4$ jobs is precisely the weight bound at $T = 4$. The generic
form of the rounding on $m$ machines gives $\OPT \le T + (m-1)$; a sharper,
oriented form of the same argument tightens both small machine counts by one
unit. Lemma~\ref{lem:round} supersedes all of these---it gives $\OPT \le T+1$
at every machine count---so what follows is not offered as a new bound. It is
offered as an elementary one: it uses no rounding theorem from the
literature at all, where Lemma~\ref{lem:round} borrows the
pseudoforest-and-matching step from \cite{LST90,EKS14}.

\begin{proposition}[oriented pool rounding]\label{prop:orient}
For instances of the class with the relaxation feasible at threshold
$T$: on three machines $\OPT \le T + 1$, and on four machines
$\OPT \le T + 2$ (Figure~\ref{fig:pools}).
\end{proposition}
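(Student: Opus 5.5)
The plan is to sharpen the pool rounding in the proof of Theorem~\ref{prop:mach}. Instead of letting every pool overshoot symmetrically at both endpoints, we let each pool put its rounding error on \emph{one} chosen endpoint. The orientation of the pools then controls how much error any machine collects. As before, feasibility at $T$ gives fractional shares $f_{j,i}$. Write $\alpha^i_P$ for pool $P$'s fractional weight on machine $i$ and $S_i$ for machine $i$'s singleton load, so that $S_i + \sum_{P \ni i} \alpha^i_P \le T$ for every machine $i$. Every singleton job goes to its machine. Each pool $P_{uv}$ has total weight $W = \alpha^u_P + \alpha^v_P$. Its subset sums contain $0$ and $W$, and since sizes lie in $\{1,2\}$, consecutive subset sums differ by at most $2$. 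The whole argument rests on this gap property, and it needs no rounding theorem from the literature.

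First I would set up two rounding primitives for a single pool $P_{uv}$. For \emph{one-sided} rounding oriented toward $u$, let $s$ be the smallest subset sum with $s \ge \alpha^u_P$, and give that subset to $u$ and the rest to $v$. Then $u$ receives $s < \alpha^u_P + 2$, and $v$ receives $W - s \le \alpha^v_P$, so $v$ gains nothing. For \emph{balanced} rounding, the closed interval $[\alpha^u_P - 1, \alpha^u_P + 1]$ has length $2$ and lies within reach of $[0,W]$, so it contains a subset sum $s$. Then both endpoints receive at most their fractional share plus $1$.

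Next I would combine the primitives using integrality. Every integral machine load is an integer, so a strict bound $< T + k + 1$ already yields $\le T + k$.

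\textbf{Three machines.} Orient the three pools cyclically, $a \to b \to c \to a$, and round each one-sidedly toward its head. Each machine is the head of exactly one pool and the tail of the other. Its load is therefore $< S_i + \sum_P \alpha^i_P + 2 \le T + 2$, hence at most $T + 1$.

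\textbf{Four machines.} Split the six pools of $K_4$ into a Hamiltonian 4-cycle and the complementary perfect matching. Round the cycle pools one-sidedly, oriented consistently around the cycle, so that each machine is the head of exactly one cycle pool. Round the two matching pools in the balanced way. Each machine then meets one cycle pool as head (error $< 2$), one as tail (error $\le 0$), and one matching pool (error $\le 1$). Its load is therefore $< T + 3$, hence at most $T + 2$.

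I expect the four-machine case to be the main obstacle. The naive approach orients all six pools one-sidedly with in-degrees at most $2$, which only gives $< T + 4$, that is, $\OPT \le T + 3$. The point to get right is that decomposing $K_4$ into a Hamiltonian cycle plus a perfect matching puts exactly one "$<2$" error and one "$\le 1$" error at each machine. I would also double-check the two edge cases of the primitives. For one-sided rounding, the smallest subset sum $\ge \alpha^u_P$ exists because $W \ge \alpha^u_P$, and it lies below $\alpha^u_P + 2$ by the gap property, even when $\alpha^u_P = 0$. For balanced rounding, the interval is clipped to $[0,W]$ but still contains a sum, because both $0$ and $W$ are sums. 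Pools that are empty contribute nothing, so the argument goes through unchanged on instances that do not use every machine pair.
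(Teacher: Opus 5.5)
Your proof is correct, but it uses a different accounting from the paper's.

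The paper rounds each pool to the achievable subset sum \emph{nearest} its fractional split, then sorts pools by content and parity:
\begin{itemize}
\item a pool containing a unit job overshoots by less than $1$ at an endpoint of free choice;
\item an all-weight-2 pool overshoots by exactly $1$, in a free direction, when it splits at odd integers;
\item it overshoots by a forced amount in $(0,1)$ when the split is fractional, and not at all at even integers.
\end{itemize}
It then bounds a machine's load by $T + c_1 + \max(k-1,0)$. This exceeds $\deg(v)-1$ only when every pool at $v$ is an odd all-weight-2 pool directed into $v$. A balanced orientation of the odd all-weight-2 pools rules that case out.

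You instead use two primitives that never look at a pool's contents. One-sided rounding overshoots by less than $2$ at the head and not at all at the tail. Balanced rounding overshoots by at most $1$ at both ends. You place them by the fixed shape of the pool graph: the triangle is oriented cyclically, and $K_4$ is split into a directed Hamiltonian $4$-cycle plus a balanced-rounded perfect matching. Integrality of loads is invoked only once, at the end.

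What each route buys:
\begin{itemize}
\item Yours is shorter and needs no case split on pool contents or on the parity of the split. Its orientation is fixed in advance rather than read off the fractional solution.
\item The paper's is finer and local. It bounds each machine's excess in terms of the number of pools that machine meets. It needs only a balanced orientation, which every graph admits, rather than a decomposition of the pool graph into cycles and a matching.
\item The paper's route also singles out odd all-weight-2 pools as the only obstruction. That is the parity phenomenon the paper later points to at the cell $(T,\OPT)=(4,6)$.
\end{itemize}

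One small point to state explicitly: turning your strict bounds into $\le T+1$ and $\le T+2$ uses that $T$ is an integer. That is the paper's standing convention for thresholds.
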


\begin{proof}
Round each pair pool to an achievable subset sum nearest its fractional
split, as in Theorem~\ref{prop:mach}, but account for the excess by
cases. A pool containing a unit job achieves every integer up to its
weight, so one endpoint takes the ceiling of its fractional share
(excess less than $1$) and the other the floor (excess at most $0$),
and \emph{which} endpoint takes the ceiling is free. An all-weight-2
pool achieves exactly the even integers; the even integer nearest one
endpoint's share deviates by at most $1$, and the deviations at the two
endpoints are opposite, so each such pool contributes, to one endpoint
only: exactly $+1$ when the shares split at odd integers (and which endpoint bears it is free), an excess strictly between $0$ and $1$ on
a \emph{forced} endpoint when the shares are fractional, and nothing
when they split at even integers.

Fix a machine $v$ meeting $\deg(v)$ pools, and let $c_1$ count the
exact-$+1$ pools directed into $v$ and $k$ the strictly-fractional
positive excesses at $v$. The machine's integral load is an
integer at most $T$ plus the total excess, and the $k$ fractional
contributions sum to strictly less than $k$, so the load is at most
$T + c_1 + \max(k-1, 0)$. Since $c_1 + k \le \deg(v)$, the bound can
exceed $\deg(v) - 1$ only when $k = 0$ and $c_1 = \deg(v)$: every pool
at $v$ an odd all-weight-2 pool, all directed in. Directing the odd
all-weight-2 pools along a \emph{balanced} orientation of their subgraph
(one in which in-degree and out-degree differ by at most one at every
vertex, which every graph admits) caps their in-degree at
$\lceil \deg/2 \rceil$---at most $1$ on three
machines and $2$ on four---which prevents that case. Hence the excess
is at most $1$ on three machines ($\deg \le 2$) and $2$ on four
($\deg \le 3$).
\end{proof}

Even without Lemma~\ref{lem:round}, Proposition~\ref{prop:orient}
retroactively simplifies the elementary route to
Theorem~\ref{prop:mach}: on three machines it excludes every $T \ge 3$
outright ($\lceil 3T/2 \rceil > T+1$ there), $T = 1$ forbids weight-2
jobs and rounds integrally, and $T = 2$ instances carry weight at most
$6$, so the twelve-job sweep is needed only as independent assurance.
On four machines it leaves $T \in \{2,4\}$: gap $3/2$ with integral $\OPT$
and $\OPT \le T + 2$ rule out every integer $T \ge 5$, while odd
$T \in \{1,3\}$
would force gap strictly above $3/2$, contradicting the ceiling of
Section~\ref{sec:known}. Corollary~\ref{cor:cell} then removes $T = 4$ as
well, so at four machines, as at every other count, a witness has
$\OPTLP = 2$ (Figure~\ref{fig:frontier}), where the witnesses are
classified (Theorems~\ref{thm:unique}, \ref{thm:seven},
\ref{thm:eight}). The cell $(T,\OPT) = (4,6)$, which the oriented bound
alone does not exclude---there it degenerates to exactly $3/2$, and it is
where the parity of all-weight-2 pools can no longer be rounded away---is
empty for this class by the corollary.

\begin{theorem}[the eight-job witnesses, counted]\label{thm:eight}
Up to machine relabeling and machines no job can use, exactly $154$
instances of the class with eight jobs have gap $3/2$: $97$ connected
---$41$ supported on five machines, $50$ on six, $6$ on seven, none
needing eight---and $57$ disconnected, each a six- or seven-job witness
together with disjoint jobs whose components have relaxation value at
most 2. The witness count grows $1$, $13$, $154$ at six, seven, and
eight jobs.
\end{theorem}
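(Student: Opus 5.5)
The proof follows the template of Theorem~\ref{thm:seven}: split an eight-job witness into its connected components, settle the disconnected case from the classifications already proved, and hand the connected case to canonical sweeps whose range is bounded analytically.

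\textbf{Reduction.} By Lemma~\ref{lem:conn} and Lemma~\ref{lem:decomp}, an eight-job witness has a component $c$ with $\mathrm{gap}(c) \ge 3/2$. By Corollary~\ref{cor:cell}, $\OPTLP(c) = 2$ and $\OPT(c) = 3$, and then $\OPTLP(I) = 2$, $\OPT(I) = 3$ for the whole instance. By Theorem~\ref{thm:min}, $c$ carries six, seven, or eight jobs.

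\textbf{Disconnected case.} If $c$ carries six or seven jobs, it is $\Istar$ (Theorem~\ref{thm:unique}) or one of the nine connected seven-job witnesses (Theorem~\ref{thm:seven}). The remaining two or one jobs form components on disjoint machines. By decomposition, the union is a witness exactly when every other component has $\OPTLP \le 2$; their $\OPT$ is automatically at most 3 by Lemma~\ref{lem:round}. This yields the stated characterization of the disconnected witnesses. Counting them is a finite enumeration that I would do by hand and cross-check mechanically:
\begin{itemize}
\item $\Istar$ plus a two-job remainder, which is either one component with two jobs or two single-job components. All of these have $\OPTLP \le 2$, since sizes are at most 2 and two jobs fit on their machines at threshold 2 unless both are size-2 jobs pinned to the same single machine; that case must be excluded.
\item Each of the nine connected seven-job witnesses plus one disjoint job, in the four shapes: size 1 or 2, on one machine or on two.
\end{itemize}
Two points need care. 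First, the isomorphism classes must be taken up to relabeling of the whole instance, so that copies of $\Istar$ in the two-component-plus-$\Istar$ cases are not double counted. Second, the case of two copies of the same structure does not arise at eight jobs, since $6+6 > 8$. The resulting total must equal $57$.

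\textbf{Connected case.} A connected witness with eight edges, counting pendant dedicated vertices, has at most nine vertices. Lemma~\ref{lem:weight} with $T=2$ gives weight at most $2m$, so $m \ge 4$. The tree/unicyclic orientation argument given after Theorem~\ref{thm:unique} handles the sparse end: whenever $m + s \ge 8$ the multigraph has at most one cycle, so $\OPT \le 2$. This excludes $m \ge 8$ and the sparse sub-cases at $m = 7$, which explains ``none needing eight.'' Canonical sweeps at $m = 4,\dots,8$ with exactly eight jobs then decide the rest, each Burnside-checked and run in float and exact rational arithmetic. Collapsing hits by support gives the split $41/50/6$. The four-machine count is $0$, because weight at most $8$ with eight jobs forces all unit jobs, so $\OPT \le 2$ by $p_{\max} = 1$ and Lemma~\ref{lem:round}. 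That is consistent with the stated total, $41+50+6 = 97$.

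\textbf{Main obstacle.} The hardest part is the scale of the connected sweep at $m = 6$ and $m = 7$ with eight jobs, together with making sure the hit-collapsing step identifies classes correctly. I would first shrink the search analytically: require $\OPTLP = 2$ through the weight bound $\sum p_j \le 2m$, require minimum degree conditions (a leaf unit job can always be oriented toward its leaf), and use the unicyclic exclusion. I would also validate the pipeline by requiring it to rediscover, as eight-job hits, the seven-job witnesses with a pendant job attached.
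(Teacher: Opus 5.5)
Your plan has the paper's shape. The paper reads all $154$ classes off canonical sweeps over every instance on four through nine machines with at most eight jobs, and uses the hand calculus for the $57$ only as a corroboration. Making that calculus primary, as Theorem~\ref{thm:nine} does, is legitimate. The first failure is your padding rule. Two jobs pinned to one machine overflow threshold $2$ whenever their sizes sum to more than $2$. So $\{a\}^1\{a\}^2$ has $\OPTLP = 3$, not only $\{a\}^2\{a\}^2$, and $\Istar$ beside it has $\OPT = \OPTLP = 3$, which is not a witness. Of the $13$ connected two-job classes exactly $11$ are padding, giving $9\cdot 4 + (11+10) = 57$. Your rule admits $12$ and yields $58$. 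Writing ``the resulting total must equal $57$'' instead of computing it hides the error.

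The second failure is the sparse-end exclusion. ``At most one cycle, so $\OPT \le 2$'' needs an in-degree-one orientation that sends every singleton job into its own machine, not into its dedicated vertex. Such an orientation exists for a tree with at most one dedicated vertex (rooted there), or a unicyclic multigraph with none. That is exactly where the remark after Theorem~\ref{thm:unique} and the proof of Theorem~\ref{thm:nine} apply it. Your ``sparse sub-cases at $m=7$'' are $s=1$ (unicyclic with a dedicated leaf) and $s=2$ (a tree with two), and there the implication is false. Take unit singletons on $v_1$ and $v_2$, a size-2 job on $\{v_1,v_2\}$, and a unit path $v_2v_3\cdots v_7$. That is eight jobs on seven machines forming a tree, yet $\OPT = 3$. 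Treating singletons as loops, the orientation exists precisely when $m \ge 8$. That justifies ``none needing eight'' but not pruning the seven-machine sweep. These cases may well contain no witness (your example has $\OPTLP = 3$), but showing it needs an argument on the relaxation side, which you do not give; the paper simply sweeps $m=7$ in full.

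Your minimum-degree pruning has a worse version of the same defect: it discards genuine connected witnesses. An example is the six-machine witness of Table~\ref{tab:seven} with a pendant unit job to a seventh machine. So the pruning cannot enter a count without a separate calculus adding such instances back. It would also prevent the very rediscovery you propose as a control.
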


\begin{proof}
The reduction of Theorem~\ref{thm:seven} applies verbatim one level up:
an eight-job witness's gap-attaining component carries six, seven, or
eight jobs; the first two cases are Theorems~\ref{thm:unique}
and~\ref{thm:seven} plus disjoint jobs, and a connected eight-edge
component fits on at most nine machines. Canonical sweeps at four
through nine machines and at most eight jobs decide the rest:
$138{,}950 + 446{,}708 + 846{,}081 + 1{,}111{,}144 + 1{,}208{,}344 +
1{,}233{,}378 = 4{,}984{,}605$ feasibility decisions, class counts
matching Burnside's lemma on every job count of every range, every
decision made twice with identical hit sets. Collapsing the hits by
support yields exactly $154$ classes, consistent across all six ranges.
Two independent corroborations: the $57$ disconnected classes coincide
class-for-class with the composite calculus computed by hand from
Theorems~\ref{thm:unique} and~\ref{thm:seven} \emph{before} the sweeps
ran, and no hit's support exceeds eight machines, one below the
connectivity bound. The class list is deposited
(\texttt{seven\_job\_classification\_n8.json}); we do not reproduce
154 rows here.
\end{proof}

\noindent
The same split---an exhaustive sweep for the connected classes, a
composite calculus for the rest---reaches one level further. We first
name the two ingredients, because the calculus is forced by them rather
than chosen.

\begin{definition}\label{def:gapcomp}
A \emph{gap component} of an instance is a connected component that is
itself a witness. A \emph{padding component} is a connected component
that is not a gap component and has $\OPTLP \le 2$. The exclusion is not
cosmetic: every witness has $\OPTLP = 2$ by Corollary~\ref{cor:cell}, so
without it every gap component would also be a padding component and the
two would not partition the components of a witness.
\end{definition}

\begin{lemma}[the relaxation decomposes]\label{lem:decomp}
Let $I$ be the disjoint union of instances $c_1, \dots, c_r$ on disjoint
machine sets. Then $\OPT(I) = \max_t \OPT(c_t)$ and
$\OPTLP(I) = \max_t \OPTLP(c_t)$.
\end{lemma}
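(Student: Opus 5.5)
The plan is to prove the two equalities separately, each by a pair of inequalities. Throughout, $c_1,\dots,c_r$ live on disjoint machine sets $M_1,\dots,M_r$, and each job of $c_t$ has $M(j)\subseteq M_t$. We may also take machines no job can use as singleton components with $\OPT=\OPTLP=0$, or simply note that they contribute load $0$ and the empty configuration.

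For $\OPT$, any schedule of $I$ restricts to a schedule of each $c_t$. The reason is that every job of $c_t$ is placed on a machine of $M_t$, and the restriction leaves the loads on $M_t$ unchanged. Hence the makespan of $I$ is at least $\max_t \OPT(c_t)$. Conversely, combining optimal schedules of the components gives a schedule of $I$ whose load on each machine equals its load in its own component. So $\OPT(I)\le\max_t\OPT(c_t)$.

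For $\OPTLP$, take $T\ge \OPTLP(I)$ with a feasible solution $y$ of $I$ at threshold $T$. Any configuration at a machine $i\in M_t$ contains only jobs allowing $i$, hence only jobs of $c_t$. It is therefore a configuration of $c_t$ at $i$ with the same load. Restricting $y$ to the pairs $(i,C)$ with $i\in M_t$ gives machine sums equal to $1$. Each job of $c_t$ is covered only by configurations at machines of $M_t$, so its coverage is still $1$. This shows $c_t$ is feasible at $T$, so $\OPTLP(c_t)\le\OPTLP(I)$ for every $t$.

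Conversely, let $T^*=\max_t\OPTLP(c_t)$. Feasibility at a threshold is monotone in $T$: a configuration with load at most $T$ has load at most any $T'\ge T$, so the same $y$ remains feasible. Thus every $c_t$ is feasible at $T^*$. Taking the union of these solutions gives a feasible solution for $I$ at $T^*$, because machine constraints and job constraints each involve a single component. Hence $\OPTLP(I)\le T^*$.

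The main (minor) obstacle is the bookkeeping in the restriction step. We must confirm that configurations never mix components. This follows because the definition of a configuration requires every job to allow the machine, and this is exactly where disjointness of the machine sets is used. A second point is that the least feasible threshold exists and is attained at an integer, as the Preliminaries note. So "least feasible $T$" is well defined, and the monotonicity argument compares attained values.
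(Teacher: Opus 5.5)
Your proposal is correct and follows essentially the paper's argument. For $\OPT$ you restrict and concatenate schedules; for $\OPTLP$ you observe that no configuration mixes components (every job in a configuration at $i$ allows $i$), so restriction and concatenation also work for fractional solutions, and you combine this with monotonicity of feasibility in $T$. The paper instead first proves that $I$ is feasible at $T$ iff every $c_t$ is and then intersects the up-sets, while you prove the two inequalities directly; this is only a cosmetic difference.
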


\begin{proof}
For $\OPT$: machines are partitioned among the components and every job's
allowed set lies inside one component, so assignments correspond to tuples
of per-component assignments and the makespan is the maximum of theirs.

For $\OPTLP$ the point is that \emph{no column of the relaxation mixes
components}. A configuration for machine $i$ is a set of jobs allowed on
$i$; every such job has $i$ in its allowed set, hence lies in $i$'s
component. So each variable $y_{i,C}$ involves one component only, and at
any threshold $T$ the constraint matrix is block diagonal, the machine
rows and job rows of a component meeting that component's columns and no
others. Feasibility at $T$ therefore holds for $I$ if and only if it holds
for every $c_t$: restrict a solution to a block for one direction,
concatenate block solutions for the other. Restriction preserves the
machine equality intact, since every column meeting a machine of $c_t$ is
a column of $c_t$, so the sum $\sum_C y_{i,C} = 1$ is unchanged by
discarding the other blocks.

Passing to least thresholds needs two further remarks, both immediate but
neither vacuous. Feasibility is \emph{monotone} in $T$: a configuration
admissible at $T$ is admissible at $T+1$, so a solution extends by zeros.
Hence each $c_t$ is feasible exactly on an up-set
$\{T : T \ge \OPTLP(c_t)\}$, and $I$ is feasible on their intersection,
which is $\{T : T \ge \max_t \OPTLP(c_t)\}$. The intersection is
non-empty because $T = \OPT(I)$ is feasible, an integral schedule being a
$0/1$ solution. Its least element is $\max_t \OPTLP(c_t)$, which is the
claim.

One direction needs stating separately, because an uncritical reading
assumes it: $I$ cannot be feasible at a $T$ where
some $c_t$ is infeasible. It cannot, precisely because restriction is
available: with no shared columns there is no slack to borrow from another
block.
\end{proof}

\begin{theorem}[the nine-job witnesses, counted]\label{thm:nine}
Up to machine relabeling and machines no job can use, exactly $1{,}662$
instances of the class with nine jobs have gap $3/2$: $978$ connected
---$49$ supported on five machines, $489$ on six, $387$ on seven, $53$ on
eight, none on four and none on nine or more---and $684$ disconnected,
each a unique gap component of six, seven, or eight jobs together with a
multiset of padding components carrying the remaining three, two, or one
job. The witness count grows $1$, $13$, $154$, $1{,}662$ at six through
nine jobs.
\end{theorem}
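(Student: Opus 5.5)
The proof follows the split already used for Theorem~\ref{thm:eight}: a canonical sweep decides the connected witnesses, and a composite calculus over Definition~\ref{def:gapcomp} counts the disconnected ones.

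\textbf{Disconnected witnesses.} By Lemma~\ref{lem:decomp} and Corollary~\ref{cor:cell}, a nine-job instance is a witness exactly when its maximum $\OPT$ is $3$ and its maximum $\OPTLP$ is $2$. So it is a witness if and only if some component is a gap component and every other component has $\OPTLP \le 2$. A second gap component would need at least twelve jobs, since Theorem~\ref{thm:min} gives each at least six, so the gap component is unique. Every remaining component is therefore a padding component, carrying three, two, or one job.

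The disconnected count is then
\[
\sum_{k=6}^{8} g_k \cdot P_{9-k},
\]
where $g_6 = 1$, $g_7 = 9$, and $g_8 = 97$ are the connected counts from Theorems~\ref{thm:unique}, \ref{thm:seven}, and~\ref{thm:eight}, and $P_r$ counts multisets of connected padding components with $r$ jobs in total. Because the gap component and the padding components live on disjoint machine sets and are distinguishable, machine relabeling acts on each factor independently.

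Computing $P_1, P_2, P_3$ is small and can be done by hand. Every connected instance with at most three jobs whose sizes lie in $\{1,2\}$ has $\OPTLP \le 2$ except those with $\OPTLP \ge 3$, for example three jobs pinned on one machine, or a size-2 and a size-1 job on the same single machine. So I would enumerate connected multigraphs with at most three edges, including singletons, and discard those with $\OPTLP = 3$ or more. The target is $684$, and a cross-check is that the $57$ disconnected eight-job classes should equal $g_6 P_2 + g_7 P_1$.

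\textbf{Connected witnesses.} A connected nine-edge component has at most ten machines. I would run canonical sweeps at four through ten machines with at most nine jobs, check the class counts against Burnside's lemma, make every decision in both arithmetics, and collapse the hits by support to obtain $978$ with the stated distribution.

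To shrink what the sweeps must carry, two ranges can be excluded analytically, as in the remark after Theorem~\ref{thm:unique}:
\begin{itemize}
\item At ten machines the multigraph is a tree. At nine machines it is a tree or unicyclic, counting the dedicated vertices for singletons since $E \ge V-1$. An orientation with in-degree at most one gives $\OPT \le 2$, so neither range has a witness.
\item At four machines, Lemma~\ref{lem:weight} caps the weight at $8$, which is below $9$. So there is no nine-job witness on four machines.
\end{itemize}
This is presumably the ``own tail'' argument the paper refers to.

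\textbf{Main obstacle.} The main obstacle is the sheer size of the sweeps at six through eight machines with nine jobs, where I expect millions of LP feasibility decisions. Two devices should cut this down before any LP is solved:
\begin{itemize}
\item prune by Lemma~\ref{lem:weight}, requiring total weight at most $2m$;
\item prune by requiring $\OPT = 3$, computed first and cheaply by brute-force orientation.
\end{itemize}
The verification that no hit needs nine machines then doubles as a consistency check against the tree argument.
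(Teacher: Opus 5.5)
Your plan is correct and is essentially the paper's own proof. You have the support-sharded connected sweep, with supports nine and ten emptied by the in-degree-one orientation argument. You have the unique-core-plus-padding-multiset calculus $D(9)=97P(1)+9P(2)+P(3)$, justified through Corollary~\ref{cor:cell} and Lemma~\ref{lem:decomp} exactly as you do. Your weight-bound exclusion of support four is a valid shortcut that the paper leaves to its sweep. At nine machines with a singleton job, make explicit that you root the tree at the dedicated vertex, so that job is oriented into its own machine.
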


\begin{proof}
\emph{Strategy}: the connected classes are swept exhaustively; the
disconnected ones are counted by showing that every one of them
decomposes uniquely as a gap component plus a padding multiset, and then
enumerating those.

\emph{Connected.} The sweep of Theorem~\ref{thm:eight} extended to nine
jobs, sharded by support size $k = 4, \dots, 10$, returns
$0 / 49 / 489 / 387 / 53 / 0 / 0$ hits. Supports nine and ten contain no
witness, and this is elementary rather than a fact about the sweep: a
connected class with nine jobs on nine or ten machines has $E \le V$, so
its multigraph is a tree or carries a single cycle. It has at most one
dedicated vertex, and on ten machines none: a singleton job contributes a
vertex as well as an edge, so connectivity of a class on $m$ machines with
$s$ singleton jobs requires $9 = E \ge V - 1 = m + s - 1$, forcing
$s \le 1$ at $m = 9$ and $s = 0$ at $m = 10$. Orienting that cycle
consistently, and each pendant tree away from it---rooting at the
dedicated vertex of Lemma~\ref{lem:conn} when a singleton job creates
one, so that its pendant edge is oriented into its own machine and not
onto a machine the job may not use---gives in-degree at most
one at every machine; assigning each job to the machine its edge points
into puts at most one job on a machine, so $\OPT \le p_{\max} = 2$, while
a witness has $\OPT = 3$ by Corollary~\ref{cor:cell}. (The sweep's own
artifacts record no feasibility decision at those supports, every class
there falling below the prune threshold, but that is a property of the
pruning and is not what makes the cells empty.) The remaining
five shards make $4{,}231{,}656$ decisions, each made twice, in floating
point and in exact rational arithmetic, agreeing on the hit
\emph{sets} and not merely on their sizes; all $978$ hits were then
decided a third time by a separately written implementation produced
during adversarial review and sharing no code with the sweep
(\texttt{p29\_referee\_probes/}), with no mismatch.
That third pass re-decides recorded hits and so cannot detect a hit the
sweep missed; completeness rests on the enumeration, not on it.

\emph{Disconnected, and why the calculus is forced.} Let $I$ be a
disconnected nine-job witness with components $c_1, \dots, c_r$, and let
$c^*$ attain $\max_t \OPT(c_t)$. By Lemma~\ref{lem:decomp},
\[
  \gap(I) \;=\; \frac{\OPT(c^*)}{\OPTLP(I)}
           \;\le\; \frac{\OPT(c^*)}{\OPTLP(c^*)} \;=\; \gap(c^*),
\]
so $\gap(I) \ge 3/2$ forces $\gap(c^*) \ge 3/2$: some component is itself
a witness, hence a gap component, hence carries at least six jobs by
Theorem~\ref{thm:min}. Two gap components would need at least twelve, so
there is exactly one, and it is recoverable from $I$ without choice as the
unique component with six or more jobs. It carries at most eight: $I$ has
at least two components, and every component carries at least one job.

For the padding condition, apply Corollary~\ref{cor:cell} to $I$ itself.
$I$ is a witness, so $\OPTLP(I) = 2$, and Lemma~\ref{lem:decomp} then
gives $\OPTLP(c) \le 2$ for \emph{every} component $c$: the condition is
derived rather than imposed. It does \emph{not} follow from the cores'
profile alone, and the tempting short version is a non sequitur. Since $\OPT(I)$ is the \emph{maximum} over components, knowing
every core is at $(\OPTLP, \OPT) = (2,3)$ gives only $\OPT(I) \ge 3$, not
equality; a component on at most three jobs can have $\OPT$ as large as
$6$. An argument avoiding Corollary~\ref{cor:cell}, and so independent of
the strict rounding lemma, is available once the cores' optima are taken
from the deposited classifications rather than from that corollary. The
$1 + 9 + 97$ cores are all at $\OPT = 3$ there: if some non-core component had
$\OPTLP \ge 3$ then $\OPTLP(I) \ge 3$, so $\gap(I) \ge 3/2$ would force
$\OPT(I) \ge 5$; that exceeds the core's $3$, so the maximum is attained
at a non-core component $R$, and then $\gap(R) \ge \OPT(I)/\OPTLP(I) \ge
3/2$ makes $R$ a witness on at most three jobs, contradicting
Theorem~\ref{thm:min}. Conversely any such union is a witness, by the
same lemma.
Finally, machines carrying no job are identified away, so the multiset of
components is an isomorphism invariant and matching multisets compose into
an isomorphism; distinct multisets therefore give non-isomorphic
instances, and since the core is recoverable, distinct (core, multiset)
pairs give distinct instances.

\emph{Counting.} Writing $p_k$ for the number of padding classes on $k$
jobs and $P(k)$ for the number of nonempty multisets of padding components
totalling $k$ jobs, exhaustive enumeration at one, two and three jobs
gives $4$, $13$ and $60$ connected classes of which $p_1 = 4$,
$p_2 = 11$ and $p_3 = 43$ are padding, whence $P(1) = 4$,
$P(2) = 11 + 10 = 21$ and $P(3) = 43 + 44 + 20 = 107$. With $97$, $9$ and
$1$ cores at eight, seven and six jobs,
\[
  D(9) \;=\; 97 P(1) + 9 P(2) + 1 P(3)
       \;=\; 388 + 189 + 107 \;=\; 684 ,
\]
and $978 + 684 = 1{,}662$.
\end{proof}

\noindent
\emph{Restatement check.} The theorem claims an exact count of
isomorphism classes at nine jobs, and the proof delivers one only because
the two halves are disjoint by construction---connected against
disconnected---and each is exhaustive on its own side. Three
corroborations, each of which could have failed. The calculus was run at
seven and eight jobs, where the answer is already known from
Theorems~\ref{thm:seven} and~\ref{thm:eight}: it returns $D(7) = 4$ and
$D(8) = 57$, and the classes it constructs coincide with the deposited
classification \emph{as sets}, with empty symmetric difference, not merely
in number. All $684$ predicted nine-job instances were then assembled and
their $\OPT$ and $\OPTLP$ recomputed \emph{directly on the assembled
instance}---not as a maximum over components---confirming that all $684$
are witnesses with nine jobs and are pairwise non-isomorphic; so the count
is a lower bound by explicit construction, independently of
Lemma~\ref{lem:decomp}, and the lemma is what makes it an upper bound.
And $p_1$, $p_2$, $p_3$ and the connected class counts $4$, $13$, $60$
were recomputed by a second implementation sharing no code with the first
(\texttt{p29\_padding\_recount.py}), with every feasibility decision
\emph{decided} in exact rational arithmetic. We say decided rather than
certified deliberately: that script returns a verdict, not a Farkas
vector, and this note keeps the two words apart.

The case that would break the argument is a witness union containing a
\emph{non-padding} component, one with $\OPTLP \ge 3$. That is the case
that threatens completeness, because the calculus enumerates padding
multisets only and would never build such a union. It does not arise, and
not by luck: as shown above, $\OPTLP(I) = 2$ for any witness $I$ by
Corollary~\ref{cor:cell}, so Lemma~\ref{lem:decomp} leaves no component
above $2$; and independently of that corollary, a non-core component with
$\OPTLP \ge 3$ would force $\OPT(I) \ge 5$ and thereby produce a witness
on at most three jobs, contradicting Theorem~\ref{thm:min}.

A neighbouring case looks alarming and is not. A padding component with $\OPT \ge 4$ would raise the
union's gap to $2$, above the $3/2$ the theorem records; but it would
\emph{not} disturb the count, since such a component is padding and the
union is enumerated as core plus padding multiset like any other. It too
fails to arise, since such a component would be a witness on at most
three jobs. At these sizes the stronger fact holds and was checked: every
connected class on at most three jobs has $\OPT = \OPTLP$, so all $58$
padding classes have $\OPT \le 2$.

\section{The landscape machine count by machine count}\label{sec:machaxis}

Theorems~\ref{thm:unique} through~\ref{thm:nine} grade the class by job
count and stop where the enumeration does. Grading it by \emph{machine}
count instead makes each level finite, and finishes it.

The reason is Corollary~\ref{cor:cell}. A witness has $\OPTLP = 2$, so by
Lemma~\ref{lem:weight} its total weight is at most $2m$, and since sizes
are at least 1 so is its job count. \emph{Every machine count is therefore
a bounded computation}, with no appeal to how many jobs one is willing to
sweep. What stops the table below at six machines is the size of the
enumeration, not the reach of the argument.

This needs a differently graded sweep. The enumerations behind
Theorems~\ref{thm:unique}--\ref{thm:nine} are indexed by job count, and a
job-count sweep cannot reach twelve jobs at six machines. A
\emph{weight}-graded one can: instances are multiset multiplicity vectors
over the $2(m + \binom{m}{2})$ descriptors, generated level by level in
total weight and reduced to one representative per machine-relabeling
class, with the class counts cross-checked at every level against a
Burnside count graded by weight, an arithmetic sharing no code with the
enumeration. Script: \texttt{p29\_weight\_sweep.py}.

\begin{theorem}[the landscape at three to six machines]\label{thm:machaxis}
Up to machine relabeling and machines no job can use, the witnesses of the
class on at most $m$ machines number $0$, $4$, $111$ and $3{,}293$ for
$m = 3, 4, 5, 6$ respectively, at \emph{any} number of jobs. In
particular there is no three-machine witness, and the four-machine
landscape consists of exactly $\Istar$ and the three seven-job classes of
Table~\ref{tab:seven}.
\end{theorem}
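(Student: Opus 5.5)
The proof reduces each machine count to a finite computation, runs that computation, and then reads off the four-machine case against the job-count theorems. Fix $m\in\{3,4,5,6\}$. By Corollary~\ref{cor:cell} every witness of the class has $(\OPTLP,\OPT)=(2,3)$. Lemma~\ref{lem:weight} then bounds its total weight by $2m$, and since sizes are at least 1 its job count is bounded as well. A witness on at most $m$ machines is therefore a multiplicity vector over the $2(m+\binom{m}{2})$ job descriptors (allowed set of size one or two, size 1 or 2) with total weight at most $2m$. Unused machines are absorbed because ``at most $m$'' is what we count: an instance supported on $k<m$ machines appears once at level $m$ with the remaining machines idle. So each level is a finite object, and the theorem reduces to an exhaustive weight-graded enumeration with one representative per machine-relabeling class.

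The steps follow in order. First, generate all classes level by level in weight $w\le 2m$ with \texttt{p29\_weight\_sweep.py}. At each weight, check the class count against a weight-graded Burnside count over $S_m$; this count is computed from cycle types by an independent generating-function arithmetic. Second, prune. Any class whose weight exceeds $2m$ is already excluded. Any class with a job of size 2 absent cannot be a witness, since then $p_{\max}=1$ and Lemma~\ref{lem:round} gives $\OPT=\OPTLP$. By Lemma~\ref{lem:decomp}, a class all of whose components are forests or unicyclic (after the dedicated-vertex device of Lemma~\ref{lem:conn}) has $\OPT\le 2$ and is discarded. Third, for each survivor, decide whether $\OPT=3$, which is an integral search over orientations, and whether the configuration LP is feasible at $T=2$. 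Make every LP decision twice, in floating point and in exact rationals, and require identical hit sets. Count the hits to obtain $0,4,111,3293$.

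Next come the consistency checks that make the numbers trustworthy. The $m=3$ value $0$ must agree with Theorem~\ref{prop:mach}. The $m=4$ hits must be exactly $\Istar$ plus the three four-machine connected classes of Table~\ref{tab:seven}. This follows independently from the job-count theorems: weight at most 8 on four machines allows at most eight jobs. Theorem~\ref{thm:unique} gives $\Istar$ at six jobs, and Theorem~\ref{thm:seven} gives exactly three seven-job four-machine classes; the disconnected seven-job ones need five or six machines. Theorems~\ref{thm:eight} and~\ref{thm:nine} list no connected four-machine class at eight jobs, and a disconnected eight-job witness would need a core on four machines plus a disjoint component. That leaves 4, confirming the ``in particular'' clause. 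For $m=5,6$, the hits with at most nine jobs should match, support by support, the deposited job-count classifications restricted to support at most $m$. This is a nontrivial overlap check, and only the heavier classes (ten to twelve jobs) are new.

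I expect the main obstacle to be the $m=6$ level, which reaches weight 12 over 42 descriptors. There the raw class count is large, and the exact-rational LP with configurations at $T=2$ is the bottleneck. I would first attack it by pruning as much as possible before any LP call. Beyond the pruning above, require $\OPT=3$ first, since the integral check is cheap and rejects most classes. I would also exploit Lemma~\ref{lem:decomp}: the disconnected classes can be assembled from connected witnesses of smaller support plus padding components, as in the calculus of Theorem~\ref{thm:nine}, so only connected classes need the LP sweep. The remaining risk lies in the correctness of the canonical enumerator, and the Burnside cross-check at every weight level is what I would rely on to close it.
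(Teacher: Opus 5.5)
Your reduction and computation are the paper's own. Corollary~\ref{cor:cell} and Lemma~\ref{lem:weight} bound a witness on $m$ machines by weight $2m$. A weight-graded canonical enumeration, Burnside-checked at every level and decided twice, then yields the four counts.

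The step that fails as written is your structural prune. After the dedicated-vertex device, a component is a tree or unicyclic exactly when the multigraph of its two-machine jobs is, since each singleton adds one vertex and one edge. The claim $\OPT\le 2$ then fails once a tree component carries two singleton jobs, or a unicyclic one carries any. Dedicated vertices may receive nothing, so there are more edges than real machines to absorb them at in-degree one. Concretely, two machines each holding a size-2 singleton and joined by a unit job form a tree with $\OPT=3$. One machine holding two size-2 singletons has $\OPT=4$. The paper uses the in-degree-one orientation only under the counting restriction that a tree has at most one dedicated vertex and a unicyclic component none (the remark after Theorem~\ref{thm:unique}, and the proof of Theorem~\ref{thm:nine}). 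This prune discards classes without any feasibility decision, so it sits under the negative half of every count. An unproved discard there is a real hole, not a cosmetic one.

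The step is repairable, because the prune only needs such classes never to be witnesses, and that is true. In a component whose two-machine graph has at most one cycle, feasibility at $T=2$ already forces $\OPT\le 2$.

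Keep the invariant that every job placed on a surviving machine, singletons included, has full share there. Then peel a leaf machine. If its placed load plus its edge's size is at most $2$, give it the edge; this only loosens the neighbour's inequalities in Lemma~\ref{lem:t2frac}. Otherwise the two inequality families of that lemma, applied to its full-share placed jobs, give it zero share of the edge. So the neighbour holds the edge at full share, and you place it there.

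On the residual cycle, a machine of placed load $2$ pushes both its cycle edges to full share on the other endpoints. A machine of placed load $1$ meeting a size-2 edge pushes that edge the same way. Either leaves a path. If neither occurs, orient the cycle cyclically: each machine gets one edge, and every machine of placed load $1$ gets a unit edge.

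Alternatively, simply drop the prune; the paper's sweep does without it.

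Separately, your derivation of the four-machine case from Theorems~\ref{thm:min}, \ref{thm:unique} and~\ref{thm:seven} is sound and independent of the sweep. Eight jobs of weight at most $8$ are all unit, so that case needs neither Theorem~\ref{thm:eight} nor Theorem~\ref{thm:nine}.
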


\begin{proof}
By Corollary~\ref{cor:cell} and Lemma~\ref{lem:weight} a witness on $m$
machines has weight at most $2m$, so enumerating every class of weight at
most $2m$ decides the machine count. The four sweeps are
\begin{center}
\begin{tabular}{cccrr}
\hline
machines & weight bound & classes & decided & witnesses \\
\hline
3 & 6  & 567       & 376       & 0 \\
4 & 8  & 9{,}789     & 7{,}289     & 4 \\
5 & 10 & 203{,}027   & 163{,}838   & 111 \\
6 & 12 & 4{,}955{,}391 & 4{,}231{,}347 & 3{,}293 \\
\hline
\end{tabular}
\end{center}
\noindent
with Burnside agreement at every weight level of every range. The counts
are cumulative in support, so each row contains the ones above it.
\end{proof}

Two of the four rows are cross-checked against the job-count sweeps on the
range the two gradings share, and agree as \emph{sets} up to isomorphism
rather than merely in count: at five machines the $59$ classes with at
most eight jobs are exactly the hits of the job-count sweep at $m=5$,
$n \le 8$, split $1/10/48$ at six, seven and eight jobs. The six-machine
sweep reproduces more: it returns exactly one witness at six jobs and
exactly thirteen at seven, which is Theorem~\ref{thm:seven}'s
classification recovered by a different enumerator on a different grading,
and its $669$ nine-job classes of support at most six sit inside
Theorem~\ref{thm:nine}'s $1{,}662$. None of these were designed as
controls.

\paragraph{What is decided twice, and what is certified.}
These are two different guarantees, and only the first is a matter of
agreement between two solvers.

\begin{measurement}[the weight-graded sweeps, decided twice]
\label{meas:machaxis}
\sloppy
The four sweeps of Theorem~\ref{thm:machaxis}---\texttt{p29\_weight\_sweep.py}
at $m = 3,4,5,6$, weight bound $2m$, no minimum-density prune---are decided
twice, in floating
point and again by the exact rational simplex, with identical decision
counts and identical hit sets at three, four, five and six machines
(\texttt{weight\_\allowbreak sweep\_\allowbreak m3\_\allowbreak w6\_\allowbreak rho0.json},
\texttt{weight\_\allowbreak sweep\_\allowbreak m4\_\allowbreak w8\_\allowbreak rho0.json},
\texttt{weight\_\allowbreak sweep\_\allowbreak m5\_\allowbreak w10\_\allowbreak rho0.json},
\texttt{weight\_\allowbreak sweep\_\allowbreak m6\_\allowbreak w12\_\allowbreak rho0.json} and the four
\texttt{\_exact} counterparts; compared as sets by
\texttt{p29\_hitset\_compare.py}). At six
machines that is $4{,}231{,}347$ decisions on each side and the same
$3{,}293$ witnesses, agreeing as \emph{sets} and not merely in number, with
the per-weight breakdown $4$, $19$, $162$, $652$, $2{,}456$ at weights $8$
through $12$ matching term by
term and summing to $3{,}293$. The four sweeps together make $376 + 7{,}289
+ 163{,}838 + 4{,}231{,}347 = 4{,}402{,}850$ decisions. No decision behind
the theorem rests on a floating-point solver.
\end{measurement}

The two halves of that claim were certified at very different times, and they
differ in kind. An exhaustive count is two claims---each
listed instance is a witness, and nothing else is---which fail in opposite
directions and cost differently. The first is a feasibility verdict, and
feasibility is self-certifying: for each of the $3{,}293$ we recomputed
$\OPT$ by exact integer brute force over schedules, using an implementation
independent of the sweep's own, then confirmed by exact rational simplex
that the relaxation is feasible at 2 and infeasible at 1. All $3{,}293$
pass, as do the $4$ and $111$ below them (\texttt{p29\_certify\_hits.py}),
and that half took seconds. The second---that no \emph{further} witness
exists---rests on $4{,}231{,}347$ infeasibility verdicts, which nothing
certifies and which is the entire reason the exact tier was run at all. It
is the more expensive half by orders of magnitude, and reporting the two
together would conceal that.

\section{The structure at threshold 2}\label{sec:t2}

Every witness this note exhibits has relaxation value 2, and at that
threshold both sides of the gap acquire explicit combinatorial form,
because the configurations are tiny: one weight-2 job, or at most two
unit jobs. We record both forms. They are elementary, assembled from
classical orientation results, and we make no novelty claim for them;
their value here is that together they say exactly \emph{what a
threshold-2 witness is}, turning the sweeps' hits into certificates.

\begin{lemma}[the relaxation at threshold 2]\label{lem:t2frac}
The configuration linear program is feasible at $T = 2$ iff there are
fractional assignments $x_{j,v} \ge 0$, each job's shares supported on
its allowed machines and summing to 1, such that at every machine $v$,
writing $B(v)$ for the total share of weight-2 jobs and $U(v)$ for the
total share of unit jobs at $v$:
\[
2B(v) + U(v) \le 2, \qquad\text{and}\qquad
B(v) + x_{u,v} \le 1 \ \text{ for every unit job } u \text{ allowing } v.
\]
Both inequalities are required at every machine $v$, and the second at
every unit job allowing it.
\end{lemma}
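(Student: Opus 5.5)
The plan is to prove both directions machine by machine, using that at $T = 2$ the admissible configurations at a machine $v$ are very few. They are the empty set, a single weight-2 job $\{b\}$, a single unit job $\{u\}$, and a pair of unit jobs $\{u,u'\}$, all drawn from jobs allowing $v$. No configuration mixes a weight-2 job with anything else, since that would exceed load 2.

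\emph{Necessity.} Given a feasible $y$ at $T = 2$, I set $x_{j,v} = \sum_{C \ni j} y_{v,C}$, exactly as in the proof of Lemma~\ref{lem:round}. Support on allowed machines and unit coverage are inherited from admissibility and from the job constraints. Summing load over the configurations at $v$ gives $2B(v) + U(v) = \sum_C y_{v,C}\,\mathrm{load}(C) \le 2\sum_C y_{v,C} = 2$. For the second inequality, fix a unit job $u$ allowing $v$. The configurations contributing to $B(v)$ are the singletons $\{b\}$, and those contributing to $x_{u,v}$ contain $u$. These two families are disjoint, because no configuration holds both a weight-2 job and a unit job. So their total weight is at most the machine's budget of 1, which gives $B(v) + x_{u,v} \le 1$.

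\emph{Sufficiency.} This is the real content: from the marginals alone I must build, at each machine, a distribution over admissible configurations realizing those marginals. At machine $v$ I put weight $y_{v,\{b\}} = x_{b,v}$ on each weight-2 singleton, using mass $B(v)$. That leaves $r = 1 - B(v) \ge 0$, which must be a distribution over sets of at most two unit jobs with marginals $q_u = x_{u,v}$. The hypotheses say exactly two things. First, $q_u \le r$ for each $u$, by the second inequality. Second, $\sum_u q_u = U(v) \le 2 - 2B(v) = 2r$, by the first inequality.

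I would realize this by a McNaughton-style wrap-around. Lay the intervals of lengths $q_u$ end to end on $[0, 2r)$, and fold that interval onto $[0,r)$ by $t \mapsto t \bmod r$. Because each $q_u \le r$, a job's interval never overlaps itself after folding. Hence every point $t \in [0,r)$ is covered by at most two distinct jobs, namely those covering $t$ and $t + r$, and possibly none. The folded cover is piecewise constant on finitely many subintervals. Each piece becomes a configuration (empty, a singleton, or a pair of units), weighted by the piece's length, and job $u$ receives total weight exactly $q_u$. The case $r = 0$ forces every $q_u = 0$ and needs nothing.

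Finally I check the global constraints. The weights at $v$ sum to $B(v) + r = 1$. Every configuration has load at most 2 and contains only jobs allowing $v$, since $x$ is supported on allowed machines. Each job's coverage $\sum_{(v,C) \ni j} y_{v,C}$ equals $\sum_v x_{j,v} = 1$. I expect the folding step to be the only nontrivial point: it is where the per-unit cap $B(v) + x_{u,v} \le 1$ is indispensable, because it prevents a unit job from needing to sit twice in the same configuration.
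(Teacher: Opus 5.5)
Your proposal is correct and is essentially the paper's argument. At $T=2$ each machine's configurations are empty, one weight-2 job, or at most two unit jobs, so necessity follows from the load sum together with the disjointness of weight-2 singletons from unit configurations, and sufficiency follows from McNaughton's wrap-around on the residual mass $1-B(v)$, which you spell out in more detail than the paper's one-line appeal to the machine-time formula $\max\bigl(U(v)/2, \max_u x_{u,v}\bigr)$.
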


In words: the first inequality is the load bound at threshold 2, weight-2
jobs counting double. The second is the restrictive one. At threshold 2 a
configuration holds either one weight-2 job or at most two unit jobs, never
both, so those two kinds of configuration compete for the same unit of
machine weight; the weight-2 share at $v$ and any \emph{single} unit job's
share at $v$ therefore cannot together exceed 1. The lemma lets one test
threshold-2 feasibility without enumerating a configuration: the displayed
load inequality and family of per-job inequalities suffice.

\begin{proof}
A machine's configurations at $T = 2$ are a single weight-2 job, at most
two unit jobs, or nothing. Covering unit shares $(x_{u,v})_u$ by singles
and pairs costs machine time exactly
$\max\bigl(U(v)/2,\ \max_u x_{u,v}\bigr)$: both are lower bounds (a
configuration covers at most two units' worth of share, and covers any
one job at rate at most its weight $y$), and the bound is attained by
McNaughton's wrap-around rule. Adding time $B(v)$ for the weight-2
configurations, feasibility at $v$ is
$B(v) + \max(U(v)/2, \max_u x_{u,v}) \le 1$, which is exactly the two
displayed families.
\end{proof}

The second family is not implied by the first, so the configuration
linear program is strictly stronger than the assignment relaxation
already at threshold 2: a machine with $B = 1/4$ and one unit job at
share 1 has load $3/2 \le 2$ but needs $5/4$ units of time. Adding
big-job constraints to the assignment relaxation is not new. The
relaxation Wang and Sitters work with \cite{WS16} already limits each
machine to one big job; it is not their addition, and they say so: ``the
following set of linear constraints is equal to (LP3) in [3]'', their [3]
being \cite{EKS14}. In our normalization that constraint is $B(v) \le 1$,
which the display strengthens. Strengthening it instead by interval
constraints, they prove the resulting linear program exact for two-valued
restricted assignment on intervals when the optimum is below twice the big
size; the interval structure excludes $\Istar$, whose support is a
complete graph.

\begin{lemma}[schedules at threshold 2]\label{lem:t2int}
$\OPT \le 2$ iff every machine's singleton load is at most 2 and one
can choose a \emph{servant} for every non-singleton weight-2
job---an endpoint carrying no singleton load, no two such jobs
sharing one---such that the \emph{two-machine} unit jobs admit an
orientation with
in-degree at most the residual capacity: $0$ at servants and at
machines with singleton load 2, otherwise $2$ minus the singleton load.
Singleton jobs of either size are folded into the loads, not oriented:
a one-machine job has no orientation to choose.
For a fixed choice of servants such an orientation exists iff every set
$S$ of machines contains at most $\sum_{v \in S} \mathrm{cap}(v)$ unit
jobs with all allowed machines in $S$ (Hakimi's condition; it is
\cite[Theorem 1]{FG76} with $u = \mathrm{cap}$, and \cite{Hakimi65} for
the directed-degree form), decidable by a single maximum-flow computation.
\end{lemma}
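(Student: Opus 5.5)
The plan is to prove the two equivalences separately: first the characterization of $\OPT \le 2$, then the flow criterion for a fixed choice of servants.

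\emph{Necessity.} Fix a schedule of makespan at most 2.

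\begin{enumerate}
\item Singleton jobs are forced to their machine, so every singleton load is at most 2.
\item Each non-singleton weight-2 job sits on one of its endpoints; call that endpoint its servant. The servant's load is already 2, so it holds no other job. Hence it carries no singleton load, and no two weight-2 jobs share it.
\item Each two-machine unit job is oriented toward the machine that serves it.
\item The number of unit jobs landing at $v$ is at most $2$ minus the load already there. That is $0$ at a servant, and $2$ minus the singleton load otherwise. This is exactly the residual capacity $\mathrm{cap}(v)$, which is $0$ at machines with singleton load 2.
\end{enumerate}

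\emph{Sufficiency.} Reverse the construction: place the singletons, put each weight-2 job on its servant, and put each unit job on its head.
\begin{itemize}
\item At a servant the load is $2 + 0$, since it carries no singleton load and in-degree $0$.
\item Elsewhere the load is at most (singleton load) $+ \mathrm{cap}(v) \le 2$.
\end{itemize}
One point needs care: a servant must not also carry a singleton weight-2 job, and the definition excludes exactly that. Singleton jobs are never oriented, so they are handled entirely by the load terms.

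\emph{The orientation criterion.} For a fixed choice of servants, the question is whether the multigraph of two-machine unit jobs has an orientation with in-degree at most $\mathrm{cap}(v)$ at every $v$.
\begin{itemize}
\item \textbf{Necessity.} Every unit job with both endpoints in $S$ has its head in $S$. So at most $\sum_{v\in S}\mathrm{cap}(v)$ such jobs can exist.
\item \textbf{Sufficiency.} Build a bipartite flow network: source $\to$ each unit job with capacity 1, job $\to$ each of its two machines with capacity 1, machine $v \to$ sink with capacity $\mathrm{cap}(v)$. An integral flow saturating all source arcs is such an orientation, by integrality of max-flow.
\item \textbf{The cut computation.} By max-flow/min-cut, any cut has value $|J\setminus J_X| + |\{\text{jobs in } J_X \text{ with an endpoint outside } S\}| + \mathrm{cap}(S)$. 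Here $J_X$ is the set of jobs on the source side and $S$ the set of machines on the source side. To minimize, take $J_X$ to be the jobs with both endpoints in $S$. The cut value is then $n_u - e(S) + \mathrm{cap}(S)$, where $n_u$ is the number of unit jobs. This is at least $n_u$ exactly when $e(S) \le \mathrm{cap}(S)$, which is Hakimi's condition. This matches \cite[Theorem 1]{FG76} with $u=\mathrm{cap}$, and decidability by one max-flow computation is immediate.
\end{itemize}

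\emph{Main obstacle.} The only delicate step is the cut computation in the last item: verifying that the minimizing $J_X$ is the set of jobs induced by $S$, so that the min-cut condition reduces exactly to Hakimi's condition. The rest is bookkeeping. I would also note that the lemma is stated per servant choice, so the outer "iff" quantifies existentially over servant assignments and no claim of polynomiality over that choice is made here.
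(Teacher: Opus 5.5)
Your proof is correct and follows the paper's own argument. It uses the same case split: at threshold 2 a machine holds either one weight-2 job and nothing else, or unit load within its residual capacity, with singletons forced and folded into the loads. It also obtains the orientation criterion from the flow form of Hakimi's theorem, which you reprove by max-flow/min-cut where the paper simply cites \cite{FG76,Hakimi65}.

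There is one small slip in your cut formula. A cut counts arcs, so a job of $J_X$ with both endpoints outside $S$ contributes $2$ rather than $1$. This only raises such cuts, so your minimizer and the value $n_u - e(S) + \mathrm{cap}(S)$ are unaffected.
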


\begin{proof}
A machine at threshold 2 serves either one weight-2 job and nothing
else---so servants take exactly one such job, carry no singleton, and
are distinct---or unit load at most its residual capacity. Singleton
jobs are forced, so they fold into the capacities; a machine with
singleton load 3 or more is infeasible outright, including one no unit
job touches, which is why the load condition is stated separately. The
orientation statement is then Hakimi's theorem in its flow form.
\end{proof}

A threshold-2 witness is therefore an instance satisfying
Lemma~\ref{lem:t2frac} for which \emph{every} servant system fails
Hakimi's condition. Read on $\Istar$, this is the parity trap of
Theorem~\ref{thm:istar} in certificate form: the two diagonals' servants
are one endpoint of each, leaving two \emph{adjacent} cycle vertices
free, and the four unit edges of the cycle cannot orient into two
adjacent vertices at capacity two each; one edge is always left over. Both
characterizations were checked mechanically against the trusted deciders
on every enumerated class of the small sweeps
(Section~\ref{sec:verify}); we state them only at threshold 2, which is
where every known witness of the class lives, and make no claim for
larger thresholds.

One may ask for more: a single min--max theorem---``$\OPT \le 2$ iff
every machine set satisfies a checkable inequality''---that would
eliminate Lemma~\ref{lem:t2int}'s existential quantifier over servant
systems. No such theorem exists unless $\mathrm{NP} = \mathrm{coNP}$.

\begin{theorem}[no min--max characterization at threshold 2]\label{thm:conp}
Deciding $\OPT \le 2$ for the class is NP-complete, and remains
NP-complete restricted to instances whose configuration linear program
is feasible at $T = 2$. Consequently, recognizing threshold-2 witnesses
is coNP-complete, and no characterization of the form ``$\OPT \le 2$ if
and only if every member of a family of conditions holds, each indexed
by a polynomial-size object and checkable in polynomial time'' exists
unless $\mathrm{NP} = \mathrm{coNP}$ (Figure~\ref{fig:conp}).
\end{theorem}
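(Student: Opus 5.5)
The proof has two parts. First I establish NP-completeness of ``$\OPT \le 2$'' on the restricted class. Then I derive the two consequences formally. Membership in NP is immediate, since a schedule is a certificate checked in linear time. For hardness I would not build a reduction from scratch. I would start from the Ebenlendr--Krčál--Sgall reduction \cite{EKS14}, which shows that graph balancing with sizes in $\{1,2\}$ cannot be approximated below $3/2$. Because all loads are integers, that reduction is exactly a polynomial reduction from a bounded-occurrence variant of 3-SAT to deciding $\OPT \le 2$ versus $\OPT \ge 3$, and its instances already lie in two-weight graph balancing, singleton jobs included. This gives NP-completeness of the unrestricted decision problem.

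The real work is the restricted version: every instance the reduction outputs must have its configuration linear program feasible at $T=2$. For yes-instances this is automatic, because an integral schedule of makespan $2$ is a $0/1$ solution. For no-instances I would use Lemma~\ref{lem:t2frac} rather than configurations. It suffices to exhibit fractional shares with $2B(v)+U(v) \le 2$ and $B(v)+x_{u,v} \le 1$ at every machine. The first candidate is the uniform split: every two-machine job places $1/2$ on each endpoint, and singletons place $1$ on their machine. I would then check both inequalities gadget by gadget, which needs only local degree data. A machine with degree-weight at most $4$ meets the load inequality, and the per-job inequality needs $B(v) \le 1/2$ wherever a unit job has share $1/2$.

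I expect this check to be the main obstacle. The gadgets of \cite{EKS14} contain pendant unit jobs pinned to machines that also meet a weight-2 edge. There the uniform split gives $B(v)+x_{u,v} = 1/2+1 > 1$, which is exactly the phenomenon Lemma~\ref{lem:t2frac} warns about. At such machines I would first try skewing the weight-2 job's share toward its other endpoint, solving a small local LP per gadget type; since the gadgets are of constant size, this is a finite, mechanical verification. If some gadget resists, the fallback is to replace each offending pinned unit job by a small padding component, for instance a pair of parallel unit edges to a fresh machine. Such a replacement preserves the integral behaviour at threshold $2$, since the fresh machine absorbs at most two units, and it relaxes the fractional constraint. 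Concatenating the per-gadget shares then gives global feasibility, since Lemma~\ref{lem:t2frac}'s conditions are per machine.

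The consequences follow directly. An instance is a threshold-2 witness iff its relaxation is feasible at $2$ and $\OPT > 2$. Corollary~\ref{cor:cell} together with Lemma~\ref{lem:round} gives $\OPT \le 3$, so $\OPT > 2$ then means gap exactly $3/2$. Feasibility at $2$ and infeasibility at $1$ are decidable in polynomial time via the compact system of Lemma~\ref{lem:t2frac}, and $\OPT > 2$ is a coNP condition. Hence recognition lies in coNP. It is coNP-hard because, on the promise set where feasibility at $2$ holds, it is the complement of the NP-complete problem above. Finally, suppose a characterization ``$\OPT \le 2$ iff every condition in a family holds'' existed, with each condition indexed by a polynomial-size object and checkable in polynomial time. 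Then a violated condition would be a polynomial certificate for $\OPT > 2$. That places an NP-complete problem in coNP, forcing $\mathrm{NP} = \mathrm{coNP}$.
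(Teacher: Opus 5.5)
Your architecture matches the paper's: membership by a schedule, hardness from bounded-occurrence satisfiability with one weight-2 job per variable, fractional feasibility checked through Lemma~\ref{lem:t2frac}, and the same derivation of coNP-completeness and of the min--max consequence. For the coNP part, note also that the images are infeasible at $T=1$ because they contain a size-2 job; Lemma~\ref{lem:t2frac} speaks only to $T=2$.

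The gap is in the one step that makes this theorem more than the known $3/2$-hardness. You never write the reduction down, so feasibility at $T=2$ on the unsatisfiable images is left as a ``finite, mechanical verification'' whose outcome you do not know. The obstacle you predict is also not where the constraint binds. In the reduction the paper writes out, machines $X_v,\bar X_v$ are joined by a size-2 job, a clause machine $C_c$ is joined by unit jobs to its literals' machines, and $C_c$ gets a forced unit job exactly when $|c|=2$. The pinned unit jobs therefore sit on clause machines, which meet no weight-2 job. The binding machines are the literal machines. A literal machine $L$ carries $B(L)=\tfrac12$ from its variable job plus a unit share per occurrence of $L$, so $2B(L)+U(L)\le 2$ needs every literal to occur at most twice; your degree-weight criterion, read at a literal machine, is exactly this. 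Tovey's theorem bounds occurrences per variable, not per literal, and the paper obtains the per-literal bound by pure-literal elimination. With that bound, the paper's split ($(|c|-1)/|c|$ on $C_c$, $1/|c|$ on the literal) satisfies both families of Lemma~\ref{lem:t2frac}, and in fact so does your uniform split. That explicit check, and the occurrence bound that makes it succeed, is what your proposal is missing.

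Your fallback would also fail if it were ever invoked. Suppose you replace a pinned unit job at $v$ by two parallel unit jobs on $\{v,w\}$ with $w$ fresh. Then $w$ takes both at load $2$ and $v$ is relieved entirely, so the integral behaviour is not preserved. The substitution relaxes the scheduling problem and can turn an unsatisfiable formula's image into one with $\OPT\le 2$, which destroys soundness. Forcing a unit onto $v$ at threshold $2$ needs three parallel jobs. Even then those three jobs alone give $U(v)\ge 1$, so with $B(v)=\tfrac12$ the load inequality at $v$ is already tight and every other job at $v$ would have to be rechecked.
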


\begin{proof}
Membership: a schedule certifies $\OPT \le 2$; for witness recognition's
complement, the configuration linear program at $T = 2$ has polynomially
many columns (a configuration is one weight-2 job or at most two unit
jobs), so a Farkas certificate of infeasibility is polynomial, and a
schedule at $2$ covers the remaining case.

Hardness is a reduction from satisfiability with clause sizes two and
three and every literal occurring at most twice. \cite[Theorem 2.1]{Tovey84}
gives NP-completeness for clause sizes two and three with at most three
occurrences per \emph{variable}---a bound on the variable, not on either of
its literals---and one preprocessing pass sharpens it to the
per-\emph{literal} bound we use. In a formula meeting Tovey's bound, a
literal occurring three times forces its variable to occur three times in
that single polarity, and a variable bound of three then leaves no
occurrence for the opposite literal: the literal is \emph{pure}. Setting a
pure literal true and deleting the clauses it satisfies preserves
satisfiability, leaves every clause size at two or three, and can only lower
occurrence counts, so iterating removes every three-occurrence literal in
polynomial time. Given such a formula: one machine $X_v$
and $\bar X_v$ per variable, one machine $C_c$ per clause; a size-2 job
on $\{X_v, \bar X_v\}$ per variable; a unit job on $\{C_c, L\}$ for each
literal $L$ of clause $c$; and one forced unit job on $C_c$ exactly when
$|c| = 2$, so that $C_c$'s residual capacity is $|c| - 1$.

If the formula is satisfiable, send each variable job to its false
literal's machine; in each clause route one unit edge to a chosen true
literal and the rest into $C_c$. A true literal receives at most its two
occurrences, so no load exceeds $2$. Conversely, in any schedule of
makespan $2$ the machine holding a variable job carries nothing else, so
setting $v$ true when the job is assigned to $\bar X_v$ is well defined; since
$C_c$ receives at most $|c| - 1$ unit edges, some edge is assigned to a literal
machine, which is therefore free of its variable job: that literal is
true and $c$ is satisfied.

Every image is feasible at $T = 2$ fractionally: split each variable job
$\tfrac12$--$\tfrac12$, and each unit edge of clause $c$ as
$(|c|-1)/|c|$ on $C_c$ and $1/|c|$ on the literal; both families of
Lemma~\ref{lem:t2frac} hold, with equality in places. Orienting every
unit edge into its clause machine shows $\OPT \le 3$ unconditionally,
and no weight-2 job fits any configuration below $T = 2$, so
unsatisfiable formulas map exactly to witnesses, at gap exactly $3/2$.

For the corollary, one violated condition from such a family would be a
polynomial-size, polynomially checkable certificate of $\OPT > 2$,
placing the NP-complete decision above in coNP.
\end{proof}

The ratio form of this hardness is due to Asahiro, Jansson, Miyano,
Ono, and Zenmyo \cite{AJMOZ11}---their Theorems 6 and 7 give strong
NP-hardness and rule out even pseudo-polynomial algorithms below $3/2$,
already for orientations with weights exactly $\{1,2\}$---and,
independently, to Ebenlendr, Kr\v{c}\'al, and Sgall \cite{EKS08}, as
\cite{EKS14} recounts. What the
reduction above adds is the location of the hardness: it survives
restriction to the fractionally feasible regime, where every hard image
is either schedulable at $2$ or a witness. The classification theorems
of this note are therefore exhaustive over a class whose membership
problem is itself intractable. That is why they proceed by bounded
exhaustion rather than by certificate, and why Lemma~\ref{lem:t2int} is
the best-possible \emph{kind} of integral characterization: an NP
certificate, checkable in polynomial time once the servant system is
exhibited, with no matching coNP certificate unless
$\mathrm{NP} = \mathrm{coNP}$.

\paragraph{Sweep method.}
We enumerate instances as multisets of job descriptors, each a pair
(allowed-set, size). The descriptor alphabet is exhaustive by construction:
every nonempty allowed-set of size 1 or 2 over the $m$ machines, crossed
with both sizes, giving $2\,(m + \binom{m}{2})$ descriptors: 20 at four
machines, 30 at five, 42 at six, 12 at three. Job order is irrelevant, so
multisets suffice. Wherever the resulting enumeration is feasible we
deliberately do not quotient out machine relabelings: the enumeration is
then redundant, but the exhaustiveness argument stays elementary, and
redundancy is the safe direction for a claim that nothing exists. The
multiset counts are checkable in closed form: summing
$\binom{d+k-1}{k}$ over $1 \le k \le n$ for $d$ descriptors gives
230{,}229 instances at $(m{=}4, n{\le}6)$; 324{,}631 at $(5, 5)$;
1{,}533{,}938 at $(6, 5)$; and 293{,}929 at $(3, 9)$.

The three six-job ranges above four machines are where the redundant
enumeration first becomes impossible: their raw counts (1{,}947{,}791 at
five machines, 12{,}271{,}511 at six, 61{,}474{,}518 at seven) put it out of
reach. There, and at the seven-, eight- and twelve-job ranges, and at the
$(3, n \le 6)$ range carrying Theorem~\ref{prop:mach}, we instead
enumerate up to machine relabeling, one canonical representative per
isomorphism class, built level by level: every $(k{+}1)$-job class is
reached by extending some $k$-job class by one descriptor, since deleting a
job from any instance gives a class already present one level down.
Because a canonical-form enumerator is new trusted code standing where an
elementary argument used to be, it is never used alone; the two checks
licensing it are described in Section~\ref{sec:verify}.

Each enumerated instance is handled in two steps. First its integral
optimum is computed by brute force. Then it is either pruned or tested at a
single threshold.

The prune is the only filter. An instance is skipped without any solver
call when $\lfloor 2\,\OPT/3 \rfloor$ falls below its largest job size.
Nothing is lost: the relaxation cannot be feasible below the largest single
size, so such an instance cannot reach gap $3/2$. Comparing the enumerated
totals above with the decision counts below gives the prune's actual effect
per sweep: 16\% at $(4, 6)$, 57\% at $(5, 5)$, 72\% at $(6, 5)$, and
under 1\% at $(3, 9)$, where three machines force large optima.

Surviving instances are tested once, at $T_0 = \lfloor 2\,\OPT/3 \rfloor$.
One threshold suffices because $T_0$ is the unique integer at which
feasibility is equivalent to gap $\ge 3/2$: a consequence of sizes being
integers, which makes $\OPTLP$ an integer, together with the monotonicity of
feasibility in $T$. Scripts and per-$(m,n)$ artifacts accompany the note.

\section{Verification}\label{sec:verify}

\paragraph{Why exact arithmetic was necessary.}
A stored primal solution certifies a FEASIBLE verdict beyond argument---any
independent script can check it, as Section~\ref{sec:known}'s exact certification does---but it cannot certify
INFEASIBLE, and infeasibility is what every exhaustive negative above rests
on. A false infeasible from a floating-point solver would silently void the
corresponding claim. Figure~\ref{fig:tiers} summarizes the two tiers.
We therefore re-decided every feasibility question
behind Theorems~\ref{thm:min} and~\ref{thm:unique} and
Theorem~\ref{prop:mach} with no floating point anywhere.
\texttt{p29\_exact\_lp\_feasibility.py} implements the test as a phase-1
primal simplex over Python's exact rationals, with Bland's rule throughout
so every pivot is exact and termination is guaranteed without perturbation.
Feasibility needs no second phase: the instance is feasible at $T$ exactly
when the phase-1 minimum of the artificial variables is zero.

\paragraph{What is and is not independent.}
The simplex is an independent implementation, not a wrapper around the
floating-point solver. Its \emph{algorithm} is correct by exact arithmetic
and Bland's rule; whether this \emph{code} builds its tableau, initializes
its artificials, and reads off its optimum correctly is a separate question,
and the only evidence bearing on it is comparison against an independent
system. We therefore ran that comparison, against GLOP---the
floating-point simplex shipped with Google OR-Tools---and report it as
Measurement~\ref{meas:crosscheck}(a) rather than leaving it implicit.

\begin{measurement}[the two checks on shared and unshared code]
\label{meas:crosscheck}
\sloppy
(a) \emph{The solvers, against each other.}
\texttt{p29\_\allowbreak exact\_\allowbreak vs\_\allowbreak glop\_\allowbreak crosscheck.py}, run at $200$ random instances
from the swept class ($m$ in $[3,6]$, $n$ in $[1,8]$, degree at most $2$,
sizes in $\{1,2\}$) at seed $20260731$, decides each at every relevant
threshold: $1{,}226$ decisions, $0$ disagreements
(\texttt{p29\_\allowbreak exact\_\allowbreak vs\_\allowbreak glop\_\allowbreak crosscheck.json}).
(b) \emph{The configuration enumeration, against a naive reference.}
\texttt{p29\_\allowbreak enumerate\_\allowbreak configs\_\allowbreak completeness\_\allowbreak check.py}, run at $5{,}000$
random cases at seed $20260731$, compares the enumeration shared by both
pipelines against an unpruned all-subsets reference as \emph{sets}: $0$
mismatches (\texttt{p29\_\allowbreak enumerate\_\allowbreak configs\_\allowbreak completeness\_\allowbreak check.json}).
The two are not interchangeable. (a) can say nothing about any stage the
two pipelines share, which is why (b) exists; (b) tests one shared stage
only, and the paragraph below says which stages remain untested by either.
\end{measurement}

The independence is partial, and we state the boundary rather than leave
the reader to find it. Both pipelines call the same configuration enumeration,
the same brute-force routine for the integral optimum, the same pruning
rule, and the same threshold selection; only the arithmetic and the solve
differ. Two consequences follow. First, matching decision \emph{counts}
across the two runs is not evidence about the solvers at all---the counts
are fixed by the shared enumerator---so we do not offer it as such; what
counts as evidence is that the two runs return the same \emph{verdicts}, and at
$n = 6$ the same hit set. Second, the shared stages are exactly the ones this comparison cannot
police, and neither of them fails in a safe direction. A missing
configuration column can only remove a way to cover a job, biasing toward
INFEASIBLE; and since a hit requires the relaxation to be \emph{feasible} at
the threshold, that bias manufactures precisely the negatives an exhaustive
claim consists of. A wrong integral optimum is worse, because it sets the
threshold: $T_0 = \lfloor 2\,\OPT/3 \rfloor$ comes from the shared
brute-force routine, so an understated optimum either lowers $T_0$ or trips
the prune above, and in both cases turns a witness into a negative in
silence. We therefore
pin the enumeration against a naive all-subsets reference, as
Measurement~\ref{meas:crosscheck}(b) records, and validate the remaining
shared stages end to end by the positive controls below. The integral optimum is
where this section's guarantees stop, and we say so rather than leave it
implicit: it has no directional guarantee. Three things constrain it, none
of them a proof. The pipeline is required to rediscover $107$ known witness
classes---one at six jobs, nine at seven, ninety-seven at eight---before any
negative result of the same run is believed, and each forces this routine to
return $3$. Every certified hit had its optimum recomputed by an
implementation independent of the sweep's own, as recorded above. And the
independent enumerator described below recomputes it with a routine of its
own at four, five and six machines, agreeing at each. What no independent
optimum covers is the enumeration \emph{as a whole}---the infeasibility
verdicts, where the routine's failure direction is silent---and supplying
one is the cheapest strengthening this verification still lacks. The remark
after Theorem~\ref{thm:unique} closes that theorem's six- and seven-machine
ranges without a sweep, so the exposure is removed there; it is not removed
from Theorem~\ref{thm:min}, through which that theorem reaches them.

\paragraph{The re-verification.}
Sweep by sweep, with the artifact each count is read from, and with the
overlaps between ranges counted once rather than added.

\begin{measurement}[every job-count sweep, rerun in exact arithmetic]
\label{meas:reverify}
\sloppy
Every sweep behind Theorems~\ref{thm:min} and~\ref{thm:unique} and
Theorem~\ref{prop:mach} was rerun with no floating point anywhere
(\texttt{p29\_\allowbreak exact\_\allowbreak minimality\_\allowbreak sweep.py} for the redundant sweeps,
\texttt{p29\_canonical\_sweep.py --exact} for the canonical ones, both over
\texttt{p29\_\allowbreak exact\_\allowbreak lp\_\allowbreak feasibility.py}); the floating-point tier is
\texttt{minimality\_sweep.py} and \texttt{p29\_\allowbreak canonical\_\allowbreak sweep.py}. Decision
counts, artifact by artifact:
$192{,}847 + 139{,}975 + 433{,}743 + 292{,}971 = 1{,}059{,}536$ solver calls
for the redundant sweeps, at $(m, n \le)$ equal to $(4,6)$, $(5,5)$, $(6,5)$
and $(3,9)$ (\texttt{minimality\_\allowbreak sweep\_\allowbreak m4\_\allowbreak n6.json},
\texttt{minimality\_\allowbreak sweep\_\allowbreak m5\_\allowbreak n5.json},
\texttt{minimality\_\allowbreak sweep\_\allowbreak m6\_\allowbreak n5.json},
\texttt{minimality\_\allowbreak sweep\_\allowbreak m3\_\allowbreak n9.json}).
The first term is the four-machine sweep extended to six jobs, which
subsumes the four-machine five-job sweep (34{,}200,
\texttt{minimality\_\allowbreak sweep\_\allowbreak m4\_\allowbreak n5.json}) rather than adding to
it; we count it once. These are calls, not distinct instances, since the
ranges overlap: a four-machine instance reappears in the six-machine
enumeration with two unused machines. The canonical sweeps were likewise
rerun in exact arithmetic: the four six-job ranges (four through seven
machines, $61{,}654$ calls), the five seven-job ranges behind
Theorem~\ref{thm:seven} (four through eight machines,
$39{,}500 + 90{,}796 + 130{,}508 + 145{,}339 + 149{,}073 = 555{,}216$
calls, which subsume the six-job ranges instance-for-instance;
\texttt{canonical\_\allowbreak sweep\_\allowbreak m4\_\allowbreak n7.json} through
\texttt{canonical\_\allowbreak sweep\_\allowbreak m8\_\allowbreak n7.json}), and the
three-machine sweep to twelve jobs behind Theorem~\ref{prop:mach}
(459{,}698 calls over $459{,}889$ classes,
\texttt{canonical\_\allowbreak sweep\_\allowbreak m3\_\allowbreak n12.json}, which subsume that theorem's own
six-job range, $3{,}237$ calls,
\texttt{canonical\_\allowbreak sweep\_\allowbreak m3\_\allowbreak n6.json}, instance-for-instance), and the six
eight-job ranges behind
Theorem~\ref{thm:eight} (four through nine machines, $138{,}950 +
446{,}708 + 846{,}081 + 1{,}111{,}144 + 1{,}208{,}344 + 1{,}233{,}378 =
4{,}984{,}605$ calls, \texttt{canonical\_\allowbreak sweep\_\allowbreak m4\_\allowbreak n8.json} through
\texttt{canonical\_\allowbreak sweep\_\allowbreak m9\_\allowbreak n8.json}, subsuming the seven-job ranges
instance-for-instance at four through eight machines):
$1{,}121{,}190 + 555{,}216 + 459{,}698 + 4{,}984{,}605 = 7{,}120{,}709$
in all, the first summand being the redundant total above plus the
$61{,}654$ six-job canonical calls. The nine-job layer of
Theorem~\ref{thm:nine} ($4{,}231{,}656$ decisions over the five nonempty
support shards of \texttt{p29\_\allowbreak connected\_\allowbreak sweep.py}, supports four through
eight; \texttt{connected\_\allowbreak sweep\_\allowbreak s4\_\allowbreak n9.json} through
\texttt{connected\_\allowbreak sweep\_\allowbreak s8\_\allowbreak n9.json}) and the weight-graded layer of
Theorem~\ref{thm:machaxis} are counted separately, and added in the Code
and data availability section.
\end{measurement}

Every verdict agreed. On the all-negative sweeps both procedures returned
no hits; on the six-job four-machine sweep both returned the same three
instances, listed in Theorem~\ref{thm:unique}; on every canonical sweep
both returned the same hit set: one class per six-job range; $4$, $11$,
$14$, $14$, $14$ across the seven-job ranges at four through eight
machines; and up to 168 per eight-job range. Agreement on hit
\emph{sets}, not merely on counts, is the claim: a disagreement in either
direction would surface as an instance present in one artifact and absent
from the other.

\paragraph{Validating the canonical enumerator.}
The enumeration up to relabeling behind Theorem~\ref{thm:unique}'s five-,
six-, and seven-machine ranges is new trusted code standing where an
elementary argument used to be: if it dropped an isomorphism class, the
sweep would silently miss instances and its negatives would be worthless,
the same failure mode the positive control exists to catch, in a new
place. Three checks license it. First, on every range it runs, its class
counts are compared against Burnside's lemma over the machine-relabeling
group---an independent arithmetic that shares no code with the enumeration
---and match on every job count of every range. Second, on small ranges
(up to four machines, three jobs) a third count, brute-force
canonicalization of every raw multiset, agrees with both. Third, at four
machines and six jobs, where the redundant sweep is feasible, the two hit
sets are compared up to isomorphism: the redundant sweep's three raw hits
collapse to exactly the canonical sweep's single class, with raw
multiplicity 3 matching the orbit size computed in
Theorem~\ref{thm:unique}. During development the first two checks each
caught a real defect (an enumerator that failed to canonicalize, and a
miscount in the Burnside routine: errors that were also mutually
inconsistent, and inconsistent with a hand count); both were fixed before
the enumerator was pointed at any range whose answer was unknown, and the
project log records the episode.

One limitation of the first check should be stated rather than left for a
reader to find. A Burnside match constrains the class \emph{counts}; it
does not by itself imply that the enumerator's hit detection is complete,
because a canonicalizer that split one class in two while dropping another
would leave the count invariant. What actually licenses the negatives is
redundancy of a different kind: each class list is produced independently
by five ranges at seven jobs and six at eight, and a dropped class would
have to be dropped identically in all of them; and the pipeline is
required to rediscover $\Istar$, and at seven jobs the
Jansen--Land--Maack instance, which it was not designed around.

A later adversarial review requested a second overlap point: an enumerator
written separately
that uses \emph{no} canonical forms and no isomorphism reduction at all,
scanning plain multisets of job types and deciding each with its own
brute-force optimum and its own exact rational certificates over a common
denominator; no simplex anywhere, and no code shared with the harness.

\begin{measurement}[the six-job theorem, recomputed by unshared code]
\label{meas:independent}
\sloppy
\texttt{p29\_\allowbreak redundant\_\allowbreak independent.py}, run at six
jobs and four, five and six machines. At
four machines it scans $177{,}100$ instances and returns exactly the three
raw witnesses of $\Istar$'s orbit, with none outside it
(\texttt{p29\_\allowbreak redundant\_\allowbreak independent.json}); at five it scans
$1{,}623{,}160$ and
returns the fifteen of the five-machine orbit, with nothing outside it
(\texttt{p29\_\allowbreak redundant\_\allowbreak independent\_\allowbreak m5.json}); at
six it scans all $\binom{47}{6} = 10{,}737{,}573$ and returns the
$\binom{6}{4}\cdot 3 = 45$ of the six-machine orbit, again with nothing
outside it (\texttt{p29\_\allowbreak redundant\_\allowbreak independent\_\allowbreak m6.json}). Each artifact
also records how many of its instances passed the weight bound before any
optimum was computed: $64{,}350$, $1{,}409{,}980$ and $10{,}737{,}573$
respectively. In all three runs the witnesses found equal the orbit size and
the count of orbit members missed is $0$.
\end{measurement}

\noindent
That is a genuinely independent second computation of the
six-job uniqueness theorem at four, five and six machines, and it agrees at
each. Seven machines it does not reach; that range is the one the remark
after Theorem~\ref{thm:unique} closes without a sweep.

\paragraph{Checking the threshold-2 characterizations.}
These structural claims were checked exhaustively where the sweeps reach
and adversarially where they do not.

\begin{measurement}[Lemmas~\ref{lem:t2frac} and~\ref{lem:t2int} against the
trusted deciders]\label{meas:t2}
\sloppy
On every canonical class of the four- and five-machine six-job ranges,
both characterizations were compared against the trusted deciders---the
configuration solver for the fractional side, the brute-force integral
optimum for the schedules---by
\texttt{p29\_\allowbreak threshold2\_\allowbreak structure.py} run as \texttt{validate 4 6} and
\texttt{validate 5 6}: $11{,}556 + 22{,}951 = 34{,}507$ classes, and on
each side zero mismatches, fractional and integral alike
(\texttt{threshold2\_\allowbreak validation\_\allowbreak m4\_\allowbreak n6.json},
\texttt{threshold2\_\allowbreak validation\_\allowbreak m5\_\allowbreak n6.json}). They were then
independently stress-tested on $1{,}400$ adversarial instances with
parallel multiplicities and singleton loads beyond the sweeps' range,
by \texttt{p29\_\allowbreak threshold2\_\allowbreak referee\_\allowbreak check.py}
run at seed $1$ for $1{,}400$ trials; that run prints its verdict and
deposits no artifact of its own, so the count above is the invocation's
parameter rather than a figure read back from a file.
\end{measurement}

\paragraph{Checking the rounding argument.}
The analytic half of Theorem~\ref{prop:mach} is a proof, not a
computation, but its two essential steps were still checked mechanically.

\begin{measurement}[the two steps of the rounding argument]
\label{meas:rounding}
\sloppy
\texttt{p29\_\allowbreak three\_\allowbreak machine\_\allowbreak rounding\_\allowbreak check.py}
checks the
subset-sum gap claim, exhaustively over pools of up to eight unit and
eight weight-2 jobs; and the bound $\OPT \le T + 2$, against $512{,}000$
exhaustive small instances---singleton loads in $\{0,\dots,4\}$ on each of
three machines and each of three pools carrying $\{0,\dots,3\}$ unit and
$\{0,\dots,3\}$ weight-2 jobs, so $5^3 \cdot 4^6 = 512{,}000$ in
all---and $20{,}000$ random large ones at seed $7$, tested at
the assignment relaxation's threshold, which never exceeds $\OPTLP$, a
strictly harder test than the theorem needs. Zero violations; the
observed maximum of $\OPT$ minus the fractional threshold was $1$. This
script prints its verdict and deposits no artifact of its own, so the
counts above are its enumeration bounds rather than figures read back from
a file.
\end{measurement}

\noindent
The argument was also reviewed adversarially by ChatGPT and Gemini before
entering this note (see the Authorship and computational process section).

\paragraph{Positive control.}
An exhaustive negative from a pipeline never observed to return a positive
is worth little: a fault in enumeration, pruning, threshold selection, or
hit detection would produce exactly that evidence, and those stages are
shared between the two solvers. We therefore required the pipeline to
rediscover $\Istar$. The four-machine sweep extended to six jobs reproduces
the smaller counts identically and returns three hits at gap exactly $3/2$
(Theorem~\ref{thm:unique}), in both the floating-point and the exact run.
Every stage---enumeration, pruning, threshold selection, the solve, and hit
detection---is thus exercised on a known positive, which is what licenses
reading the other sweeps' silence as absence rather than as blindness. The
canonical sweeps carry the same control internally: every range contains
$\Istar$'s class, and every range returned it. The seven-job sweeps add a
control of a different kind: they must---and do---rediscover the
\cite{JLM16} instance, a published witness the pipeline was never designed
around, identified among the hits by explicit relabeling
(Theorem~\ref{thm:seven}).

\paragraph{What the exact tier costs, and a cheaper design.}
Deciding everything twice is what bounds how far this method reaches. On
the eight-job lane the floating-point pass took about $2.6$ hours on one
core and the exact re-decision about $10.3$, a factor of four; at nine jobs
that factor is applied again, to a layer of comparable size rather than a
larger one: the connected nine-job sweep makes $4{,}231{,}656$ decisions
against the eight-job lane's $4{,}984{,}605$, because it enumerates each
connected class once at its exact support instead of re-deciding it in every
range above. The cost
is avoidable in principle, and the reason is
the asymmetry this section opened with. A FEASIBLE verdict is certified by
its own primal solution, so the exact tier does no work there that a stored
witness could not do more cheaply. An INFEASIBLE verdict is what needs the
exact solve. And it, too, has a short certificate. Writing the relaxation
at threshold $T$ as $\{y \ge 0 : Ay = b\}$, Farkas' lemma says the system is
infeasible exactly when some vector $u$ of multipliers, one per constraint,
satisfies $u^{\mathsf T}A \le 0$ componentwise and $u^{\mathsf T}b > 0$.
Such a $u$ can be read off the floating-point solve's dual ray, reconstructed
over the rationals, and checked by a few exact inner products, work linear
in the number of nonzeros, against a full exact simplex per decision. Where
reconstruction fails the exact simplex remains as a fallback, so correctness
would not depend on the fast path succeeding. We have not implemented this:
every exact decision reported in this note comes from the simplex tier
described above, and we record the design because it is what we would build
first to push the enumeration further.

\paragraph{Class limits.}
Sizes outside $\{1,2\}$ and jobs allowed on three or more machines are
untested here. Enlarging an allowed set relaxes both problems at once---it
can lower $\OPTLP$ but can equally lower $\OPT$---so the per-instance ratio
can move either way. Enlarging $M(v_1v_3)$ in $\Istar$ to all four machines
drops $\OPT$ to 2 and the gap to 1. Only the class-level containment holds:
the supremum over a superclass is at least the supremum over a subclass. We
note that Huang and Ott's algorithm \cite{HO16} extends to the case where
the smaller jobs may go to any number of machines, which bears on the
degree-3 direction.

\section{Searching above 3/2}\label{sec:search}

The class studied above is closed at $3/2$ by \cite[Corollary~11]{JLM16}
(Section~\ref{sec:known}), so a
search for a larger gap must leave it. We report such a search for completeness
and as a check on the pipeline. A candidate is scored only when an exact
integer-programming solver proves its optimum against the enumerated
relaxation bound computed in floating point; every strict record is
re-verified from scratch. Three outcomes are separated: \emph{certified} (a
proven gap), \emph{degenerate} (a job with no allowed machine, rejected
before either solver runs), and \emph{uncertified} (the solver returned
without proof inside its limit).

\begin{measurement}[the search above $3/2$, six runs]\label{meas:search}
\sloppy
Six runs, each deterministic in its seed and each depositing its own
artifact, propose $51{,}812$ instances in total: a $4{,}000$-instance
two-size sweep at seed $291$ (\texttt{p29\_two\_size\_sweep.py},
\texttt{gap\_\allowbreak sweep\_\allowbreak two\_\allowbreak sizes.json}); a restarted local search at
$n = 7$, $m = 4$, maximum size $4$, $12$ restarts of $150$ steps, seed
$29$, making $1{,}812$ evaluations (\texttt{gap\_search.py},
\texttt{gap\_search\_n7\_m4.json}); and four seeded searches of
$2{,}000$, $12{,}000$, $12{,}000$ and $20{,}000$ steps at seeds
$2929$, $101$, $202$ and $303$ (\texttt{seeded\_search.py}, artifacts
\texttt{seeded\_\allowbreak search\_\allowbreak result\_\allowbreak s2929.json},
\texttt{seeded\_\allowbreak search\_\allowbreak result\_\allowbreak s101.json},
\texttt{seeded\_\allowbreak search\_\allowbreak result\_\allowbreak s202.json} and
\texttt{seeded\_\allowbreak search\_\allowbreak result\_\allowbreak s303.json}). Summing the six artifacts:
$51{,}180$ certified, $632$ degenerate, $0$ uncertified, and certified
plus degenerate equals the proposed total exactly. The largest gap any
run reports is $3/2$; nothing exceeded it.
\end{measurement}

\noindent
Of these, the $4{,}000$-instance sweep restricted to degree-2 instances with
sizes $\{1,2\}$ lies inside that closed class, so its
negative result is a consistency check against a known bound rather than
evidence about an open question; the remaining runs, which admit jobs on
more than two machines, do explore genuinely open territory. For graph
balancing the relaxation's gap is at most 1.749 by non-constructive local
search \cite{JR19}.

We adopt a standing rule for future runs, stated here rather than merely
practised: the floating-point pipeline nominates candidates, and only the
exact tier confirms them. No record counts until its optimum is proven, its
bound re-decided in exact rational arithmetic, and its fractional solution
certified.

\section{Discussion}\label{sec:disc}

An explicit witness is easier to build on than a bound inherited from a
hardness reduction, and a minimum, unique one is easier still.
Theorem~\ref{thm:unique} says that a construction seeking a larger gap in a
neighbouring class cannot reuse a six-job core, on any number of machines:
there is only one, and its gap is already the largest the class allows. Because
Theorem~\ref{thm:min} certifies a floor of six jobs, an exhaustive search
for an above-$3/2$ instance \emph{of this class} may begin at six jobs
rather than one, which is where such searches become tractable. The floor
does not transfer: it is proved within two-weight graph balancing, and the
five-job instance of \cite{VW14} recorded below has gap at least $3/2$ in
the unrelated model.

The seven-job landscape (Theorem~\ref{thm:seven}) shows how quickly
uniqueness fails above the floor: one witness at six jobs, thirteen at
seven. The connected ones are built from a small set of recurring parts
---a size-2 job whose endpoints are pinned at load 2 by interchangeable
``arms'': a singleton plus a doubled pair (the \cite{JLM16} form), a
tripled pair, or a unit triangle closed by a second size-2 job---together
with $\Istar$-like cycle structures, one of which is exactly $\Istar$ with
a matching edge subdivided through a new machine. We state this as observed structure rather than as a theorem. The
pattern does persist at eight jobs, where Theorem~\ref{thm:eight} counts
154 witnesses and no connected one needs more than seven machines.

$\Istar$'s form---a unit cycle plus a weight-2 perfect matching on
$K_4$---generalizes in several directions at once: larger cycles with
chorded matchings, odd structures, and instances with three sizes. The same
pipeline evaluates any of them in seconds. The subdivided witness of
Table~\ref{tab:seven} is a first confirmed instance of that family.

\paragraph{Search protocol behind the priority claims.}
The claims that no smaller explicit witness and no prior minimality result
appear in print rest on the following. The primary-source review covered
\cite{EKS14,Svensson11,JR17,JR19,VW14,JLM16,CS16} in full; searched for
explicit gap instances in the graph-balancing and restricted-assignment
literature, including the doctoral theses of the Kiel group; and checked
the two-weight approximation line \cite{AJMOZ11,KM13,HO16,PSO16,CS16},
whose first entry is the earliest: \cite{AJMOZ11} already give a
$3/2$-approximation for orienting simple graphs with weights $\{1,2\}$,
matching their own hardness bound. The review examined every one of
\cite{EKS14,Svensson11,JR17,JR19,VW14,JLM16,CS16,WS16,AJMOZ11,KM13,HO16,PSO16}
in the original. One qualification belongs here rather than in a footnote:
\cite{KM13} is not only an approximation-ratio paper; its Theorem~3.2 states a
configuration-linear-program integrality-gap bound ($5/3 + s$ for sizes
$\{s, 1\}$, without a degree restriction), attributed there to
\cite{Svensson11} and mildly generalized. In our normalization
($s = 1/2$) that bound is vacuous, so nothing in this note rests on it;
but it would be wrong to describe that line of work as bearing only on
approximation ratios.
None of the works surveyed exhibits a configuration-linear-program gap
instance smaller than \cite{JLM16}'s seven jobs, and none states a
minimality or uniqueness result. One near miss deserves naming, since it
is the first thing a reader of that literature will reach for:
\cite{VW14}'s Proposition~2 family at $k = 2$ is a five-job, four-machine
instance of gap at least $3/2$ in which every job has exactly two allowed
machines.
It is outside our class only because the two machines run the job at
different speeds---the model is unrelated graph balancing, not restricted
assignment---and the minimality theorem does not contradict it.
Searches were performed in July--August 2026.

\paragraph{What that survey does not cover, and the prior art it misses.}
It stays inside scheduling, and that is its limitation. The questions asked
here have been asked before, of the same relaxation, in another field.

Kartak, Ripatti, Scheithauer and Kurz \cite{KRSK15} study the
one-dimensional cutting stock problem in the pattern formulation of Gilmore
and Gomory---the configuration linear program of that problem---where a gap
instance is one failing the integer round-up property, that is, one whose
gap $\Delta(E) = z_D(E) - z_C(E)$ is at least $1$. Fixing the demand $n$,
they decompose the infinite set of instances into finitely many equivalence
classes, enumerate the classes exhaustively, and prove by that enumeration
that every instance with $n \le 9$ has the \emph{proper} integer round-up property, exhibiting classes
without it at $n = 10$. Their Table~1 goes further: for each $n \le 11$ it
gives the number of classes, the maximum gap, the number of classes
attaining it---$365$ at $n = 10$, six at $n = 11$---and the instances
themselves. That is a minimality theorem, a count and a classification for
gap instances of a configuration linear program, by exhaustive enumeration
over equivalence classes, published a decade before this note.
\cite{RK20} continue the line, minimizing in the stock length rather than
the demand. The wider genre---exact integrality gaps of small instances by
enumeration modulo isomorphism---is older still and active: \cite{BB08} for
the subtour relaxation of the travelling salesman problem, \cite{ABE06} for
the two-edge-connected subgraph problem, \cite{SBG26} for the asymmetric
travelling salesman problem.

So the results here are not the first of their kind, and an earlier draft of
this note said they were. What is true is narrower, and it is what the
survey above establishes: we know of no minimality, uniqueness or
classification result for gap instances of the configuration linear program
of a \emph{scheduling} problem. This note differs from \cite{KRSK15} in two
ways, neither of them a priority claim. Their gap is additive and the
question is whether any gap at all appears; ours is the ratio
$\OPT/\OPTLP$, and the question is about a particular value, $3/2$, already
known to be the exact supremum for the class (Section~\ref{sec:known}).
And where a gap is a supremum approached but never attained---as in the
travelling-salesman line---``the smallest instance attaining it'' is
vacuous, and only a table of gaps by size is possible. In the present class it is attained
at finite size, which is what makes minimality and uniqueness exact
questions rather than limits.

Where the work goes next is drawn in Figure~\ref{fig:future}: the counting is
exact through nine jobs and our enumeration does not reach beyond, while the
open problem, where the restricted-assignment gap actually lies in
$[3/2, 11/6]$, is untouched by anything here. That figure and the rest of the
supporting pictures are collected in Appendix~\ref{app:figs}.

\section*{Code and data availability}

The code and artifacts needed to reproduce every computational result reported
in this note are deposited as supplementary material on the Open Science Framework,
\url{https://doi.org/10.17605/OSF.IO/SKX86}, as
\path{osf_package_p29witness.zip}: thirty-one scripts, the result artifacts they
write, the associated run logs, a driver that regenerates all of it, and pinned
dependency versions. The same project also holds
\path{osf_package_p29amb.zip}, the separate replication set shared by
the two companion papers on the Wang--Sitters rounding; the counts here
describe the witness deposit only. The deposit is the replication set, not
the project's working directory: drafts, review correspondence, and
publishers' copies of cited papers are not included. A README maps each claim
below to the script and artifact that establish it.

Three distinct kinds of evidence are provided and should not be conflated.
Exact rational certificates for the two headline instances
(\texttt{p29\_exact\_rational\_verify.py}) use no solver and no floating
point. The minimality, uniqueness, and classification sweeps are decided
twice, once in floating point and once by the exact simplex
(\texttt{p29\_exact\_minimality\_sweep.py} for the redundant sweeps,
\texttt{p29\_canonical\_sweep.py --exact} for the canonical ones), agreeing
on all 7{,}120{,}709 decisions. That total covers the sweeps behind
Theorems~\ref{thm:min} through~\ref{thm:eight}; the nine-job layer's
$4{,}231{,}656$ decisions were double-decided separately
(Theorem~\ref{thm:nine}), and the weight-graded sweeps behind
Theorem~\ref{thm:machaxis} at three, four, five and six machines add
$376 + 7{,}289 + 163{,}838 + 4{,}231{,}347 = 4{,}402{,}850$, for
$15{,}755{,}215$ in all. That last summand is the six-machine sweep, which
was floating point only until its exact tier finished; every exhaustive
range in the note is now decided twice. The Section~\ref{sec:search}
exploratory search is floating point throughout and is reported as evidence
rather than proof.

Two further scripts matter for checking the argument rather than the
numbers. The embedding search separating $\Istar$ from the instance of
\cite{JLM16} is\\
\texttt{p29\_\allowbreak witness\_\allowbreak distinctness\_\allowbreak check.py}; the one component both
sweep pipelines share is pinned by\\
\texttt{p29\_\allowbreak enumerate\_\allowbreak configs\_\allowbreak completeness\_\allowbreak check.py}; the hit-set
comparison licensing the canonical enumerator is
\texttt{p29\_hitset\_compare.py}. Every artifact records the exact
invocation that produced it, and the deposited README states plainly which
sweeps are expected to return hits---the six-job uniqueness runs, which
double as positive controls---so that a replicator can tell a correct
result from a broken one.

\appendix

\section{Supplementary figures}\label{app:figs}

These figures are explanatory: every proof stands
without them. Three of them show the nine-job layer, which is
Theorem~\ref{thm:nine}; its two halves are established differently, and
the figures mark the distinction rather than hiding it. The connected
count is an exhaustive sweep, every decision made twice and the $978$
hits decided a third time by the independent pipeline of
Section~\ref{sec:verify}. The disconnected count is the composite
calculus, whose classes were constructed explicitly and re-decided on the
assembled instances.
Each figure is generated programmatically from the result artifacts
deposited in the supplementary material, so the pictures cannot drift from
the numbers; the deposit itself contains only the replication set---code,
artifacts, and logs---not the figures or their generator.

\begin{figure}[htbp]\centering
\includegraphics[width=0.95\textwidth]{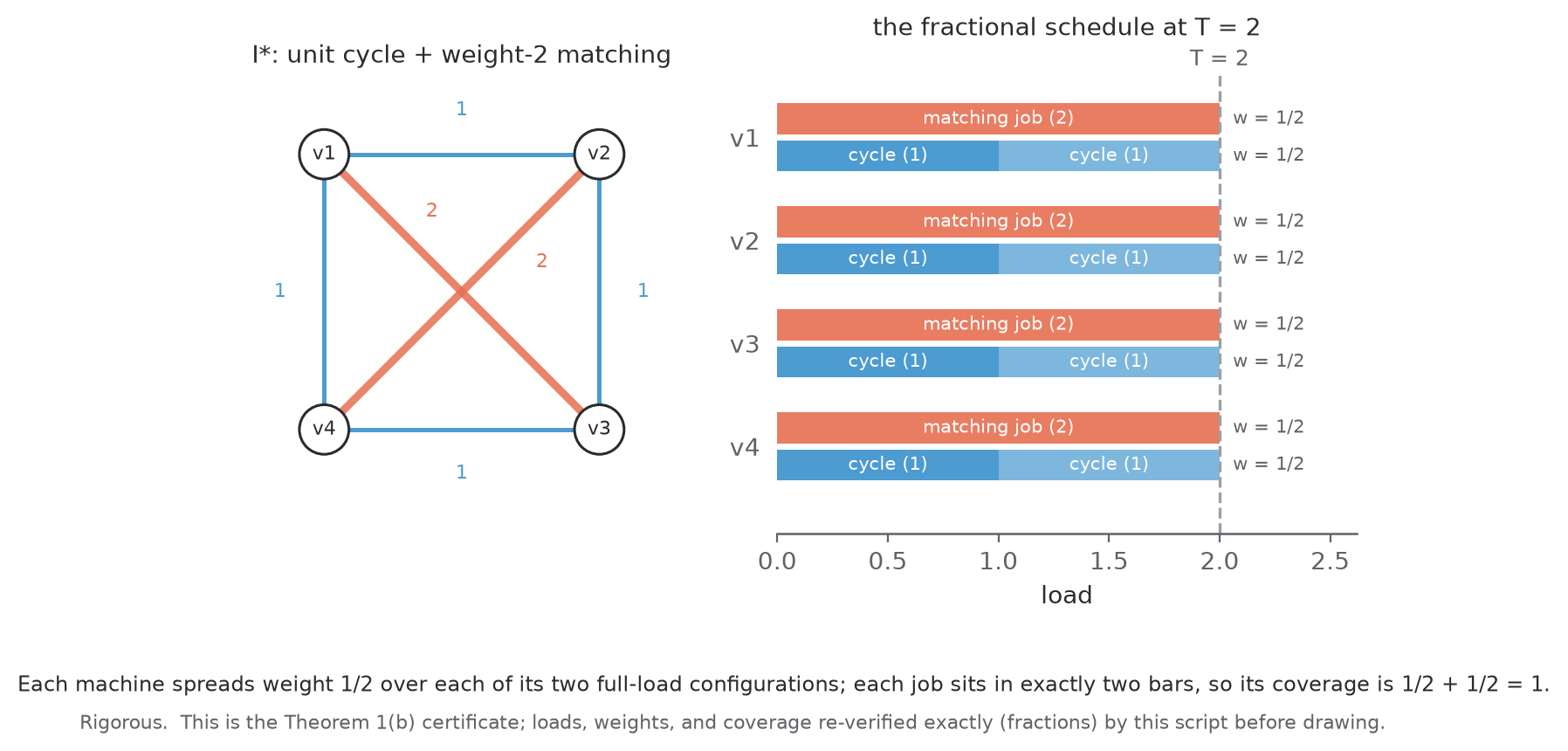}
\caption{The witness $\Istar$ (left) and the fractional schedule reaching
$T=2$: each machine splits weight $1/2$ over its two full-load
configurations, and each job is covered exactly once.}
\label{fig:frac}
\end{figure}

\begin{figure}[htbp]\centering
\includegraphics[width=0.95\textwidth]{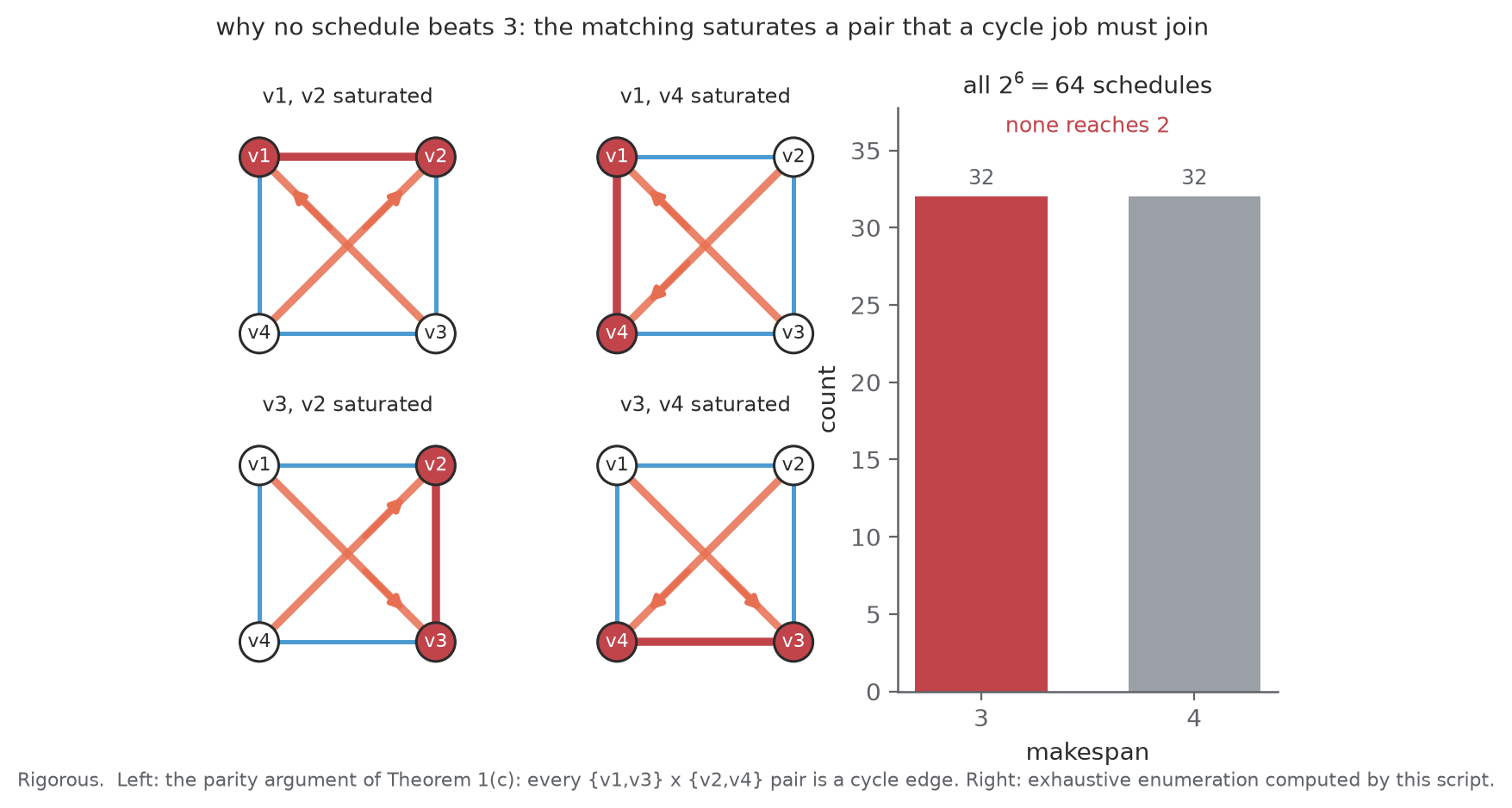}
\caption{Why no schedule finishes below 3: each orientation of the matching
saturates a pair joined by a cycle job. Right: the exhaustive makespan
distribution over all 64 assignments. None reaches 2.}
\label{fig:trap}
\end{figure}

\begin{figure}[htbp]\centering
\includegraphics[width=0.95\textwidth]{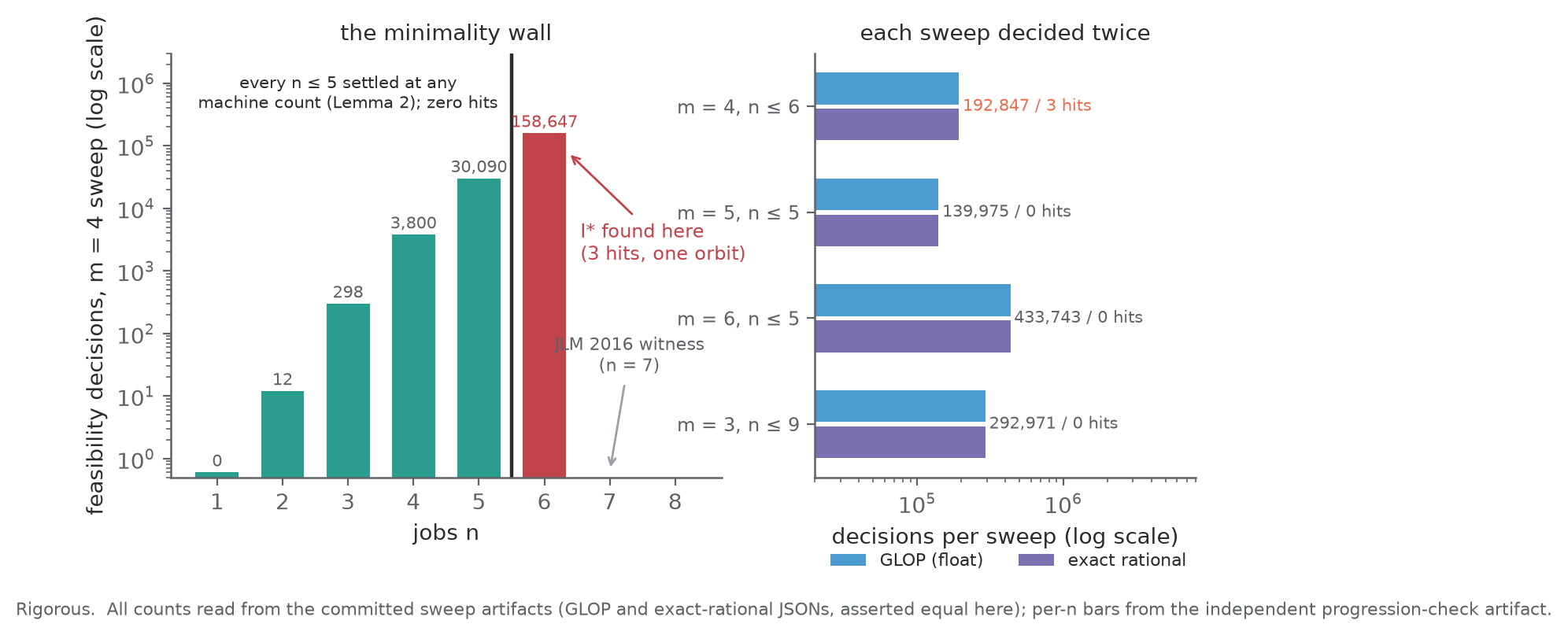}
\caption{The minimality wall: per-$n$ decisions at four machines, and
every sweep decided twice. The six-job bar is the only one with hits.}
\label{fig:wall}
\end{figure}

\begin{figure}[htbp]\centering
\includegraphics[width=0.95\textwidth]{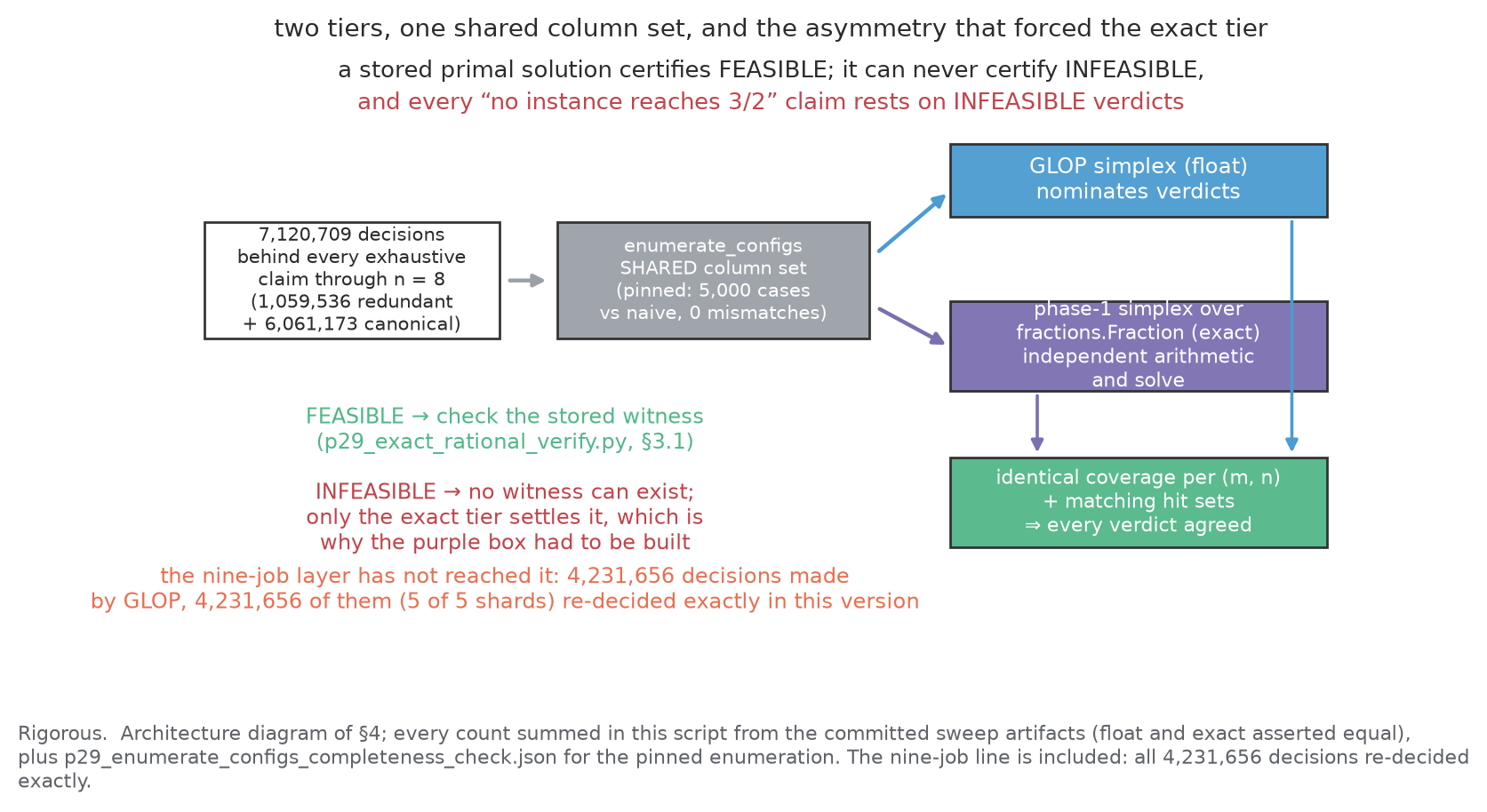}
\caption{The two verification tiers and the feasible/infeasible asymmetry
that forced the exact-rational simplex.}
\label{fig:tiers}
\end{figure}

\begin{figure}[htbp]\centering
\includegraphics[width=0.95\textwidth]{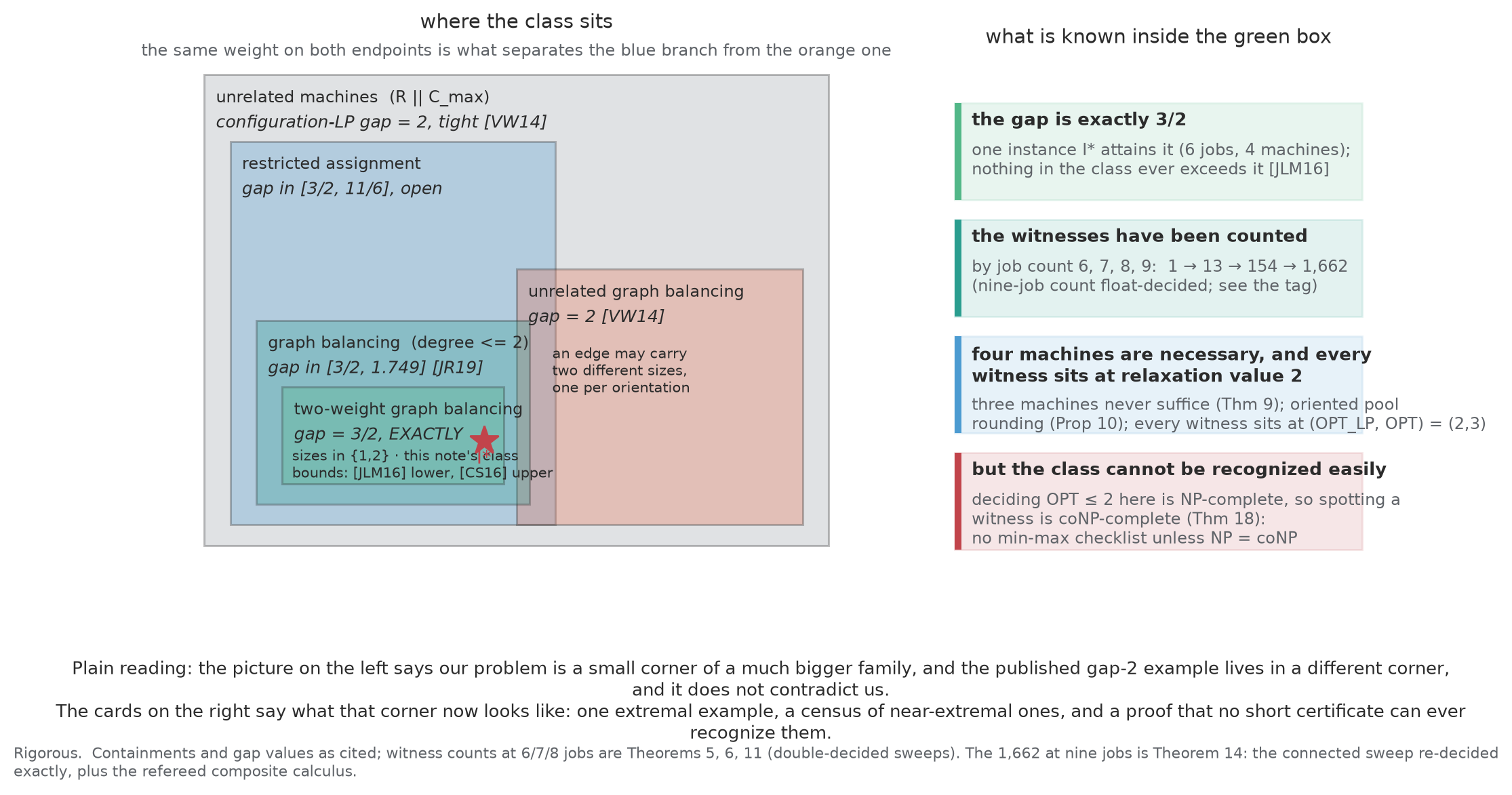}
\caption{Where two-weight graph balancing sits. Verschae--Wiese's gap-2
construction lives in the unrelated model, not in the class studied here.}
\label{fig:landscape}
\end{figure}

\begin{figure}[htbp]\centering
\includegraphics[width=0.9\textwidth]{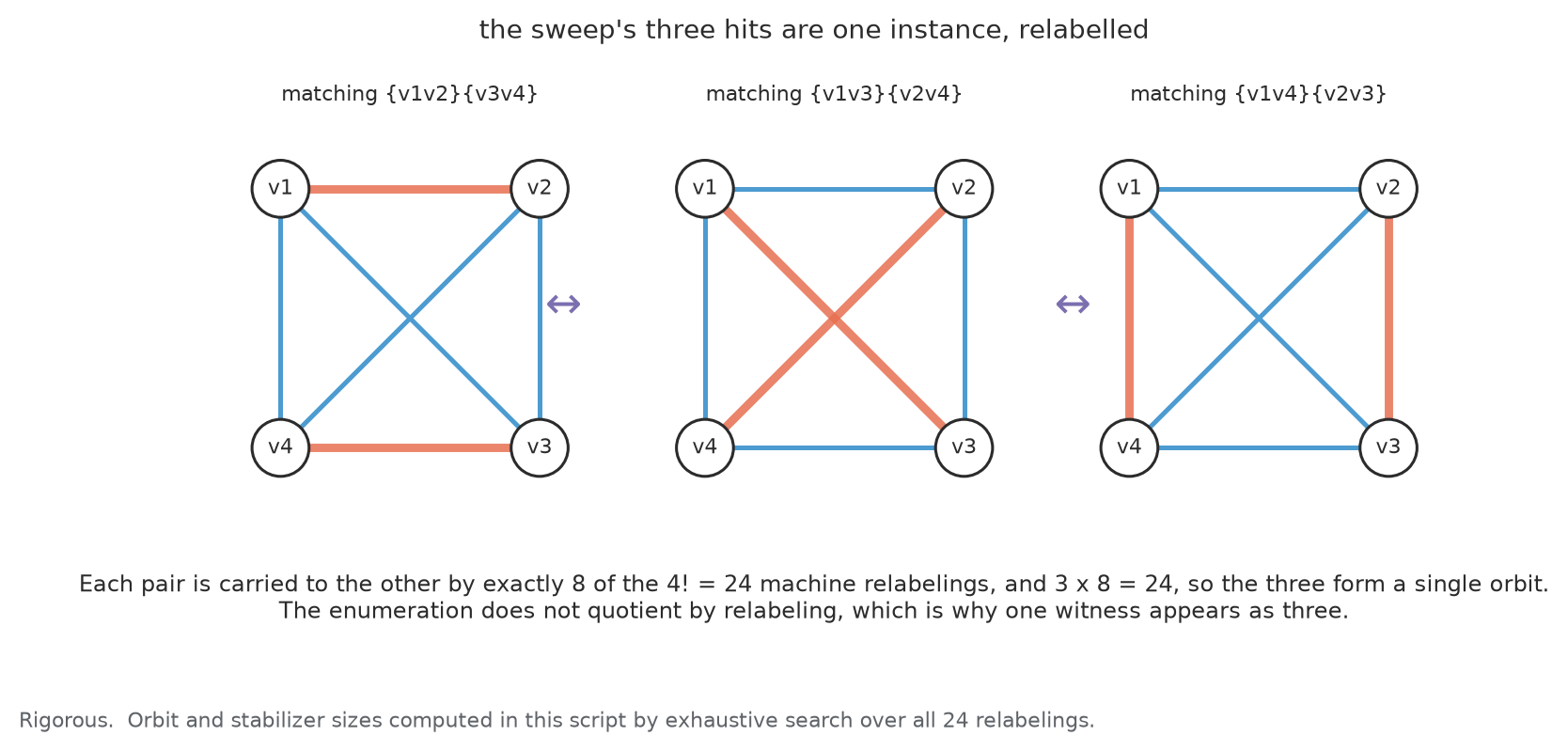}
\caption{Uniqueness as an orbit: the three hits are the three perfect
matchings of $K_4$, one instance under relabeling.}
\label{fig:orbit}
\end{figure}

\begin{figure}[htbp]\centering
\includegraphics[width=0.9\textwidth]{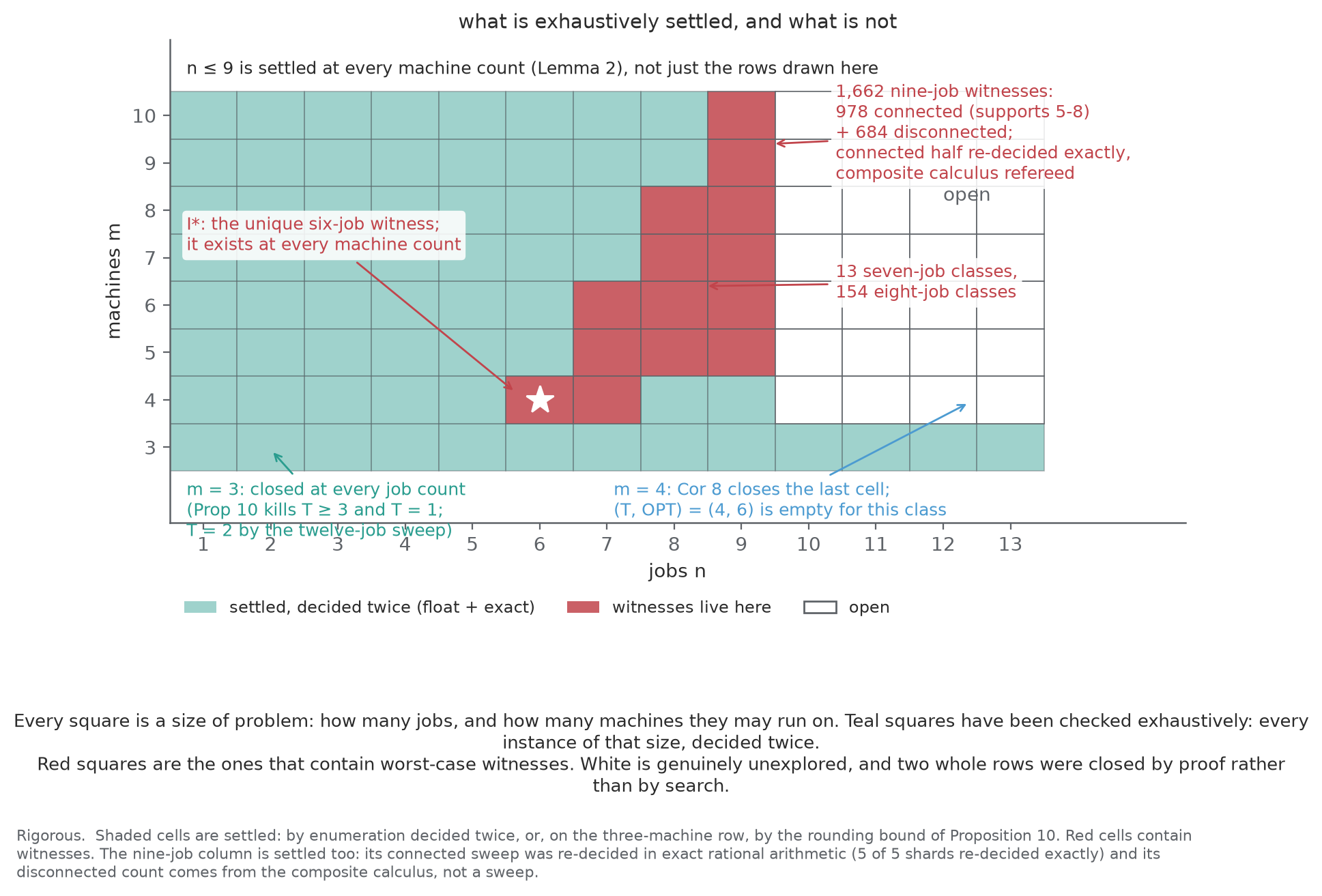}
\caption{What is exhaustively settled and what remains open on the
(machines, jobs) grid.}
\label{fig:frontier}
\end{figure}

\begin{figure}[htbp]\centering
\includegraphics[width=0.95\textwidth]{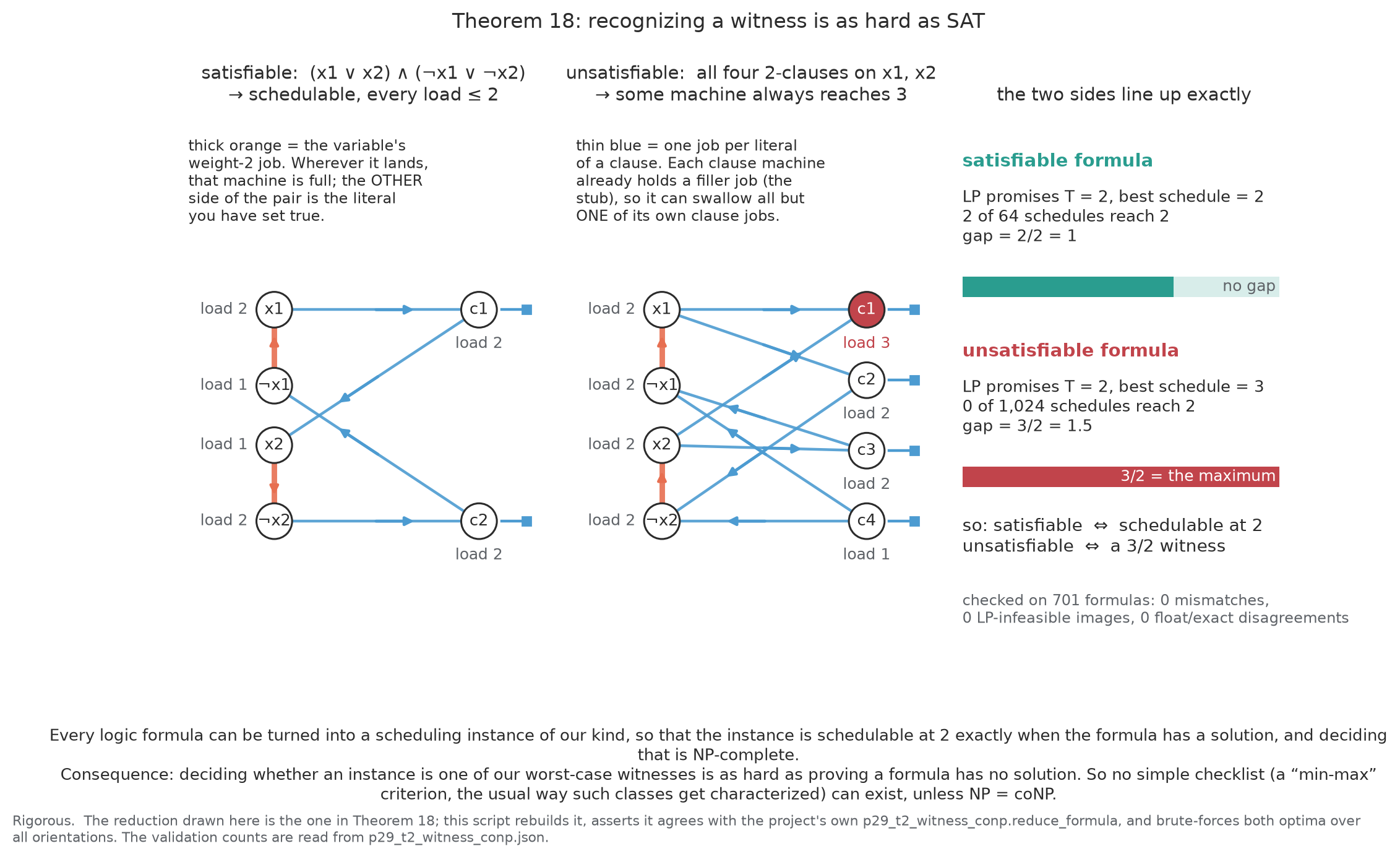}
\caption{Theorem~\ref{thm:conp} drawn. A satisfiable formula's image is
schedulable at 2; the unsatisfiable core forces a clause machine to 3,
making it a witness. Deciding which case one is in is therefore
NP-complete, so no min--max criterion can characterize witnesses unless
$\mathrm{NP} = \mathrm{coNP}$.}
\label{fig:conp}
\end{figure}

\begin{figure}[htbp]\centering
\includegraphics[width=0.95\textwidth]{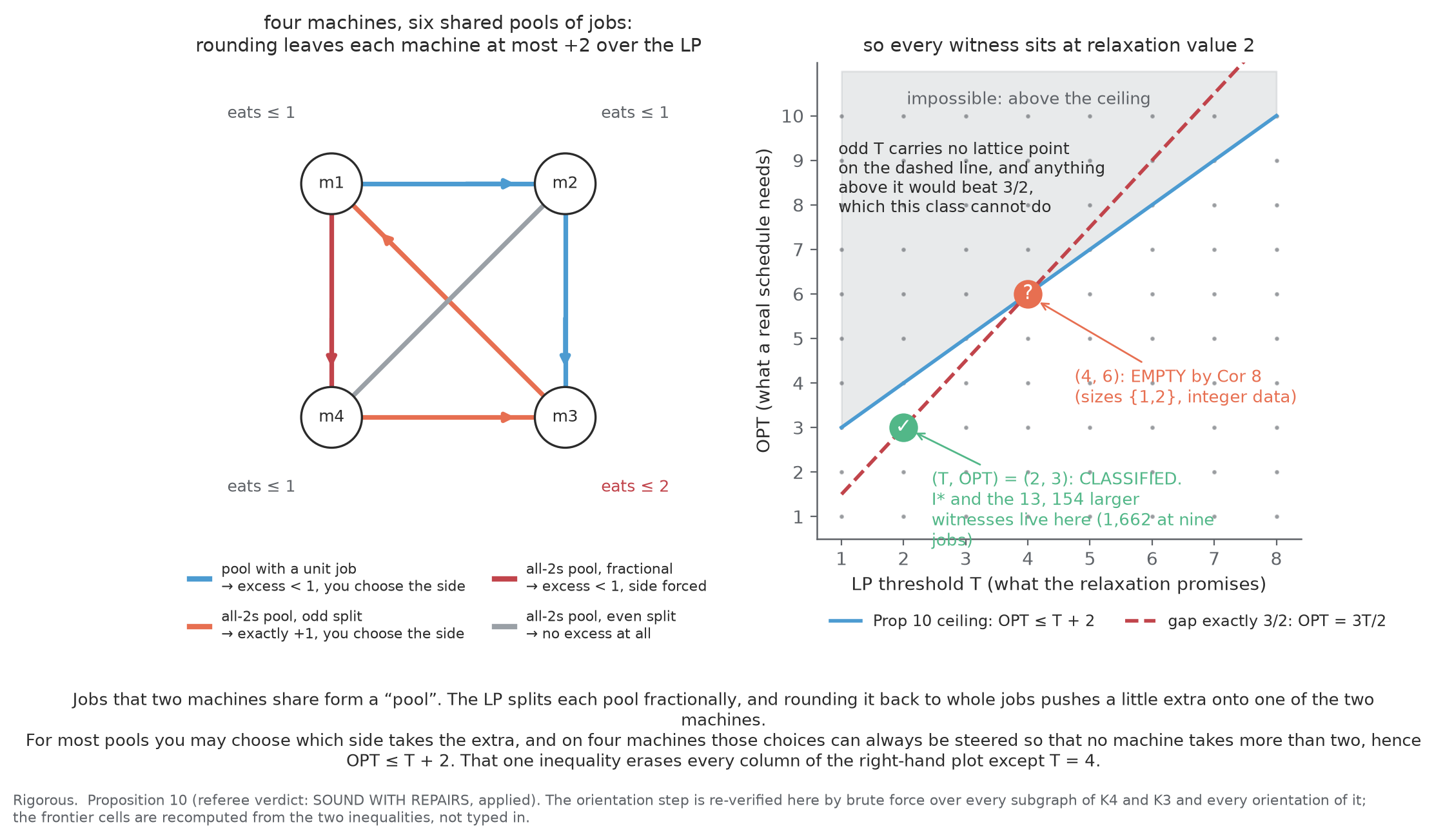}
\caption{Proposition~\ref{prop:orient}. Left: the six pair pools of four
machines, with the excess each contributes and whether its direction is
choosable; steering the choosable ones keeps every machine at $+2$.
Right: the inequality $\OPT \le T+2$ against gap $3/2$. It leaves
$(T,\OPT) = (4,6)$ standing on its own; Corollary~\ref{cor:cell} empties
that cell.}
\label{fig:pools}
\end{figure}

\begin{figure}[htbp]\centering
\includegraphics[width=0.95\textwidth]{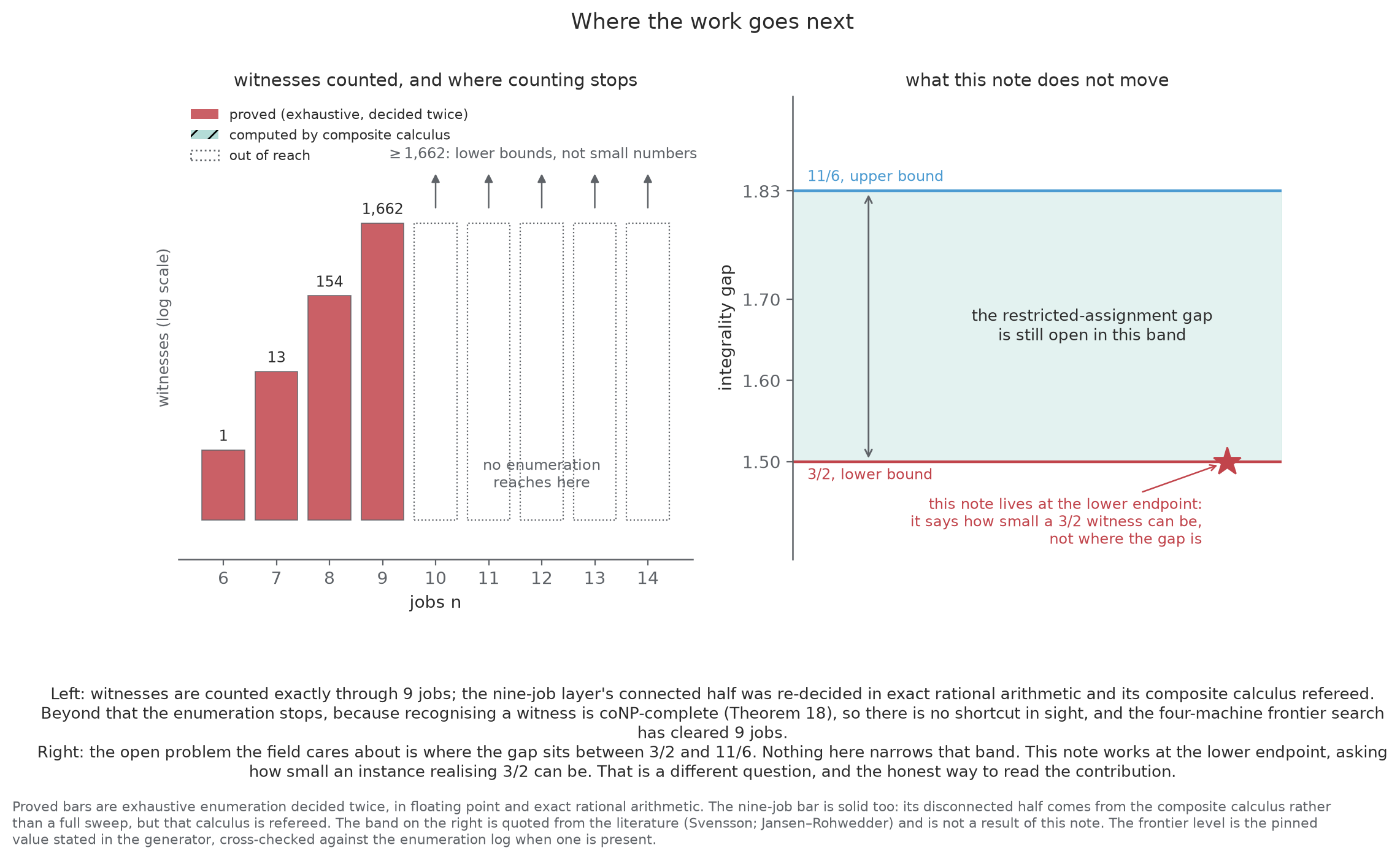}
\caption{Where the work goes next. Left: witnesses are counted exactly
through nine jobs---the nine-job layer's connected half re-decided in
exact rational arithmetic, its disconnected half by the composite calculus
of Theorem~\ref{thm:nine}, checked separately during adversarial review,
rather than swept---and
beyond it our enumeration does not reach; by
Theorem~\ref{thm:conp} there is no shortcut in sight. Right: the open
problem is where the restricted-assignment gap actually lies in
$[3/2, 11/6]$, drawn vertically so the smaller gap sits lower. Nothing
here narrows that band. This note works at the \emph{lower}
endpoint, asking how small an instance realizing $3/2$ can be, which is a
different question.}
\label{fig:future}
\end{figure}

\section*{Acknowledgements}

I thank Baruch Schieber (NJIT) for explaining
the restricted-assignment configuration-linear-program gap problem to me and
for challenging me to tackle it. I thank JMS and CS for reading and
commenting on drafts.

\section*{Authorship and computational process}

I conceived and directed this work and am its author. Claude and ChatGPT
assisted with writing; ChatGPT and Gemini provided adversarial review; and
Claude helped me run the simulations. I did not personally inspect every
generated line of code or independently repeat every computation. The
computational results are supported by deposited code and artifacts, exact
certificates, cross-checks, and positive controls. I chose the claims
presented here and take responsibility for the paper.

\section*{Author's note}

I offer this work in the spirit of community-supported research---the
distributed protein-folding projects are the model I have in mind---where
the computing is contributed rather than bought, the claim-bearing record is
public, and the worth of the result is what others can build on it. The
public replication package maps every computational result reported here,
including searches that found nothing, to the code that produces it and to
a stored artifact or log where the run creates one. Additional developmental
materials, including superseded approaches and unreported exploratory
searches, are available upon reasonable request, subject to privacy and
copyright constraints. The negative results are most of the evidence here.
I hope the work is of use to the field.

\vfill
\begin{center}
\begin{minipage}{0.72\textwidth}
\itshape\small
Four swift lathes shape a silent silver square,\\
Spinning four light threads around two heavy cables.\\
The continuous LP logic seals at two,\\
While true integer schedules rise to three.
\end{minipage}
\end{center}

\end{document}